\documentclass[aps,superscriptaddress,nofootinbib,a4paper,longbibliography,twocolumn]{revtex4-2}
\usepackage{amsmath}
\usepackage{amsfonts}
\usepackage{amsthm}
\usepackage{amssymb}
\usepackage{graphicx}
\usepackage{enumerate}
\usepackage{color}
\usepackage[T1]{fontenc}
\usepackage{braket}
\usepackage{todonotes}
\usepackage{soul}
\usepackage{subfigure}
\usepackage{mathrsfs}
\usepackage{float}
\usepackage{amsmath}
\usepackage{bm}
\usepackage{multirow}
\usepackage{tabularx}
\usepackage{outlines}

\graphicspath{{figures/}}

\newtheorem{theorem}{Theorem}
\newtheorem{lemma}{Lemma}

\newtheorem*{theorem*}{Theorem}
\newtheorem{definition}{Definition}

\newtheorem*{rem}{Remark}
\newtheorem{prop}{Proposition}

\newtheoremstyle{operation}
  {\topsep}
  {\topsep}
  {}
  {}
  {\itshape}
  {.}
  {.5em}
  {\thmname{#1}\thmnumber{ #2}\thmnote{ (#3)}}
\theoremstyle{operation}

\def\>{\rangle}
\def\<{\langle}

\newcounter{protocol}
\renewcommand{\theprotocol}{\arabic{protocol}}

\usepackage[bookmarks=false,colorlinks,citecolor=blue,
linkcolor=blue,anchorcolor=blue,urlcolor=blue
]{hyperref}

\begin{document}
\title{Strong $O$-valued contextuality: ruling out discrete nondeterministic alternatives to quantum theory}
\author{Ravishankar Ramanathan}
\affiliation{School of Computing and Data Science, The University of Hong Kong, Pokfulam Road, Hong Kong}

\begin{abstract}
Gleason's theorem identifies the Born rule via non-contextuality over an infinite continuous lattice of projectors, while its corollary the Kochen-Specker (KS) theorem rules out the specific class of deterministic ($\{0,1\}$-valued) noncontextual models using finite sets of projectors. Gleason's theorem also indicates the existence of KS-type finite vector constructions to rule out other discrete nondeterministic alternatives to quantum theory beyond the $\{0, 1\}$ case. Here, we construct finite measurement configurations to rule out noncontextual empirical models with outcome probabilities drawn from an arbitrary finite subset $O \subset [0,1]$. We do this by two means: (i) constructing a family of experimentally feasible state-dependent Hardy-type tests, and (ii) proving a generalized KS theorem for a broad class of $O$ that includes prior results as special cases. In the sheaf-theoretic framework of Abramsky and Brandenburger, a hierarchy of probabilistic-possibilistic-strong contextuality has been established quantifying contextuality as a resource. We extend this framework by introducing strong $O$-valued contextuality, showing that quantum theory evades global sections of all finite-valued presheaves. We also discuss the implications of the result on finite many-valued logics as viable ontological models for quantum theory and for contextuality-based (semi)-device-independent protocols.

\end{abstract}
\maketitle

\textit{Introduction.-} Contextuality, as evidenced by the Kochen-Specker (KS) theorem \cite{KS67}, is a foundational property asserting that classical theories are insufficient to explain quantum correlations. The KS theorem explicitly shows that unlike in classical physics where properties exist independently of the choice to observe them, quantum systems do not possess predetermined values that are simply revealed by experiments. Since the advent of quantum information science, contextuality has gone beyond being treated as a philosophical quirk and is now recognized as a resource in several contexts - it provides the magic behind the speed up in fault-tolerant computing via magic state distillation \cite{HWVE14}, it has been shown to be a backbone in (semi)-device-independent quantum cryptography \cite{LRH+23, LR25}, in self-testing quantum states \cite{BRV+19} etc. Owing to its high interest, simple and minimal proofs of contextuality have been sought, and records for KS proofs have been established in small dimensions. 

To elaborate, the KS theorem \cite{KS67} states that for every quantum system belonging to a Hilbert space of dimension $d \geq 3$, irrespective of its actual state, a finite set of measurements exists that does not admit a deterministic non-contextual outcome assignment. Here, deterministic indicates that the outcome probabilities only take values in $\{0,1\}$. Non-contextual means that the assignment is made to the individual projectors, independently of the context (the other projectors in the measurement basis) to which they belong. As such the KS theorem is a powerful result ruling out empirical models of nature that obey the twin conditions of (i) \emph{Non-Contextuality} and (ii) \emph{Determinism}. 

A natural question is to determine to what extent these two assumptions can be relaxed for a hidden variable theory, while still being unable to explain quantum theory. Specifically, in view of the aforementioned applications and due to the fact that in practical experiments one cannot be certain that the exact same projector is being measured in multiple contexts, methods have been devised to test for $\epsilon$-relaxed noncontextual models \cite{KDL15, Fyrillas23, LR25}. On the other hand, experimental tests ruling out nondeterministic noncontextual alternatives to quantum theory are less studied in the literature \cite{Pitowsky98, HP04, Ravi24}. In this regard, it is well-known that the KS theorem is the finite analog of the powerful Gleason's theorem \cite{Gleason57}, and can indeed be seen as its corollary by means of a logical compactness argument \cite{Enderton01} (see App. \ref{sec:Gleason} for a proof). Gleason's theorem establishes the Born rule as being the only consistent non-contextual probability assignment to the entire infinite lattice of projections for systems of dimension $d \geq 3$. Here, a non-contextual probability assignment to a set of vectors $|v_i\rangle$ (equivalently the projectors $P_{v_i} = |v_i \rangle \langle v_i|$) is an assignment of the form $p: |v_i \rangle \rightarrow [0,1]$, where the probabilities are assigned directly to the projectors independent of the specific context (the other commuting projectors) in which they are measured. Gleason's theorem then shows that when one considers the entire set of projectors in Hilbert spaces of dimension $d \geq 3$, the only consistent assignment satisfying the condition that the sum of probabilities for any orthonormal basis equals one is of the form $\text{Pr}(v) = \text{Tr}\left[\rho P_v \right]$ for some density operator $\rho$. As such, Gleason's theorem derives the Born rule as a logical consequence of the quantum formalism, and as a corollary establishes the KS theorem in a non-constructive manner. However, Gleason's theorem is not deployed in practical applications since it applies to an infinite number of projection operators and its proofs rely on the continuity of functions on a compact sphere. It is desirable therefore, to pursue finite constructive proofs, that would give rise to a strengthening of contextuality as a resource in quantum information science and to identify the corresponding applications. Indeed, Pitowsky \cite{Pitowsky98} first raised the question of providing finite constructive proofs to rule out outcome probability assignments beyond the $\{0,1\}$ pointing out that the "general constructive case is, to the best of my knowledge, open" \cite{Pitowsky98}. 

Motivated by the above considerations, in \cite{Ravi24}, we proved a generalisation of the Kochen-Specker theorem to rule out, in dimension $d =3$, noncontextual nondeterministic outcome probability assignments of the form $p: |v_i \rangle \rightarrow \{0,q,1-q,1\}$ for $0 \leq q \leq 1/2$ with $q \neq 1/3$. This strengthens the KS theorem since it rules out a larger set of noncontextual outcome probability assignments (with values in $\{0,q,1-q,1\}$ and their convex mixtures rather than $\{0,1\}$ and their convex mixtures). While finite and constructive, the number of measurements required for this proof was still too high to be feasibly tested in current experiments. In this paper, we first solve the problem in generality in a state-dependent manner. Formally, we present a family of experimentally feasible Hardy-type tests to rule out general nondeterministic noncontextual alternatives to quantum theory, ruling out noncontextual probability assignments from any finite set $O \subset [0,1]$. Secondly, we present a strengthening of the result from \cite{Ravi24} by proving a generalized KS theorem for a broad class of $O$ that includes the previous results as a special case. Contextuality is sought after as a resource in aforementioned quantum information applications, a rigorous means to quantify contextuality as a resource was established in \cite{AB11, ABKLM15, HZS+23, ABM17}. Specifically in the sheaf-theoretic framework of contextuality instituted by Abramsky and Brandenburger \cite{AB11}, a hierarchy of probabilistic-possibilistic-strong contextuality has been established. We extend this framework by introducing strong $O$-valued contextuality as an even stronger resource, showing that quantum theory evades global sections of all finite-valued presheaves. We then point out a connection to quantum logic and its established isomorphism with infinite many-valued {\L}ukasiewicz logics. As a corollary, we prove that our constructions rule out the corresponding finite many-valued logics as viable ontological models for quantum theory. Finally, we discuss the implications of the result for contextuality-based quantum protocols for randomness generation. Specifically, we show that the optimal quantum correlations in our tests get increasingly close to an extremal nonclassical point of the general no-disturbance polytope as the number of measurements $6k+1$ grows. This provides a pathway towards achieving security of the corresponding semi-device-independent protocol for randomness expansion and amplification \cite{LR25} against the most powerful adversaries for this task, an application which we explicitly formulate in forthcoming work.

\textit{Ruling out nondeterministic noncontextual alternatives to quantum theory - generalising the KS theorem.-}
Hardy proofs have been noted for their simplicity, indeed the Hardy proof of nonlocality has been termed by Mermin as the "simplest form of Bell's theorem" and "stands in its pristine simplicity as one of the strangest and most beautiful gems yet to be found in the extraordinary soil of quantum mechanics" \cite{Mermin94, JXSPC18}. We begin by recalling the analogous Hardy proof for contextuality, first introduced in \cite{CBCB13} with an experimental test performed in \cite{MANCB14}. Consider a physical system of five boxes, numbered from $1$ to $5$, such that each of them can be either empty or full. Let’s denote as $P(0,1|2,3)$ the joint probability of finding box $2$ empty and box $3$ full. One can prepare this system in a state such that $P(0,1|1,2) + P(0,1|2,3) = 1$, and $P(0,1|3,4) + P(0,1|4,5) = 1$. The first condition states that  when box $2$ is full then box $1$ is empty and when box $2$ is empty then box $3$ is full, i.e., $P(1,1|1,2) = P(0,0|2,3) = 0$. Similarly, the second condition is equivalent to $P(1,1|3,4) = P(0,0|4,5) = 0$. From these conditions, and if the result of finding the boxes empty or full were predetermined and independent of which boxes were opened (noncontextuality), one can deduce that $P(0,1|5,1) = 0$. Specifically, considering a noncontextual deterministic outcome probability assignment from $\{0,1\}$ for this experiment (or convex mixtures thereof), we observe that the constraints $P(0,1|1,2) + P(0,1|2,3) = 1$ and $P(0,1|3,4) + P(0,1|4,5) = 1$ imply that $P(0,1|5,1) = 0$. 
On the other hand, suppose one were to prepare a qutrit system in the state $|\psi \rangle = \frac{1}{\sqrt{3}}(1,1,1)^T$, and suppose that opening a box $i = 1,\ldots, 5$ corresponded to measuring the projector on the states $|v_i \rangle$ given as $|v_1 \rangle = \frac{1}{\sqrt{3}}(1,-1,1)^T$, $|v_2 \rangle = \frac{1}{\sqrt{2}}(1,1,0)^T$, $|v_3 \rangle = (0,0,1)^T$, $ |v_4 \rangle = (1,0,0)^T$ and $|v_5 \rangle = \frac{1}{\sqrt{2}}(0,1,1)^T$, 
with empty and full corresponding to obtaining result $0$ and $1$ respectively. Then one can readily verify that the conditions $P(0,1|1,2) + P(0,1|2,3) = 1$, and $P(0,1|3,4) + P(0,1|4,5) = 1$ are both met while $P(0,1|5,1) = \frac{1}{9} > 0$. The above argument thus provides a simple and easily verifiable proof to rule out noncontextual deterministic probability assignments to quantum projections in dimension $d = 3$.

We now proceed to consider alternative \emph{non}deterministic noncontextual probability assignments beyond the $\{0,1\}$ assignments. 
Formally, let $O = \{0, w_1, w_2, \dots, w_q, 1\} \subset [0,1]$ be a finite set of reals such that $0 < w_1 < w_2 < \ldots < w_q < 1$. This set $O$ relaxes the set of classical behaviors and forms an intermediate set between the classical noncontextual (which correspond to $O = \{0,1\}$) and general no-disturbance (which correspond to $O = [0,1]$) models \cite{RSKK12}. An $O$-valuation is a global function $\lambda_O: V \to O$ that assigns a value from $O$ to each projector, satisfying the KS constraints:
\begin{align}
    \sum_{v \in B} \lambda_O(v) \le 1 \;\; \forall B \;\;
    \sum_{v \in B_{\max}} \lambda_O(v) = 1 \;\; \forall B_{\max}.
\end{align}
Here $B$ denotes a measurement basis, and $B_{\max}$ denotes a complete measurement basis consisting of $d$ orthogonal projectors that sum to the identity, i.e., $\sum_{v \in B_{\max}} |v \rangle \langle v| = \mathbb{I}$.
An empirical model $p: V \to [0,1]$ is an $O$-valued empirical model if it can be written as a convex combination of global $O$-valuations. Classical non-contextual models are in this framework termed $\{0,1\}$-valued empirical models. 

Our first result is a family of Hardy tests of contextuality that rules out general $O$-valued empirical models for arbitrary $O$. Specifically, consider a physical system of $6k+2$ boxes labelled $\{v_1^{(m)}, \ldots, v_8^{(m)}\}$ for $m \in \{1,\ldots, k\}$ for a positive integer $k$ and with the identification $v_5^{(m)} \equiv v_1^{(m+1)}$ and $v_8^{(m)} \equiv v_2^{(m+1)}$, where each box can be either empty $(0)$ or full $(1)$. Let $P(0,1|i,j)$ denote the probability of finding box $i$ empty and box $j$ full, and similarly let $P(0,0,1|i,j,l)$ denote the joint probability of finding boxes $i$ and $j$ empty and box $l$ full.
The Hardy constraints are given for every  $m \in \{1, \dots, k\}$ as:
\begin{widetext}
\begin{align}
    &P(0,0,1 \mid v_3^{(m)},v_4^{(m)},v_5^{(m)}) + P(0,1,0 \mid v_3^{(m)},v_4^{(m)},v_5^{(m)}) + P(1,0,0 \mid v_3^{(m)},v_4^{(m)},v_5^{(m)}) = 1, \nonumber \\
    &P(0,0,1 \mid v_6^{(m)},v_7^{(m)},v_8^{(m)}) + P(0,1,0 \mid v_6^{(m)},v_7^{(m)},v_8^{(m)}) + P(1,0,0 \mid v_6^{(m)},v_7^{(m)},v_8^{(m)}) = 1, \nonumber \\
    &P(1,1 \mid v_1^{(m)},v_3^{(m)}) = 0, \quad P(1,1 \mid v_1^{(m)},v_6^{(m)}) = 0, \quad P(1,1 \mid v_2^{(m)},v_4^{(m)}) = 0, \quad P(1,1 \mid v_2^{(m)},v_7^{(m)}) = 0, \nonumber \\
    &v_5^{(m)} \equiv v_1^{(m+1)}, \qquad v_8^{(m)} \equiv v_2^{(m+1)}, \qquad P(1 \mid v_1^{(1)}) = 1.
\end{align}
\end{widetext}
From these conditions, we prove in App. \ref{app:HardyO-context} that anyone who assumes that the result of finding the boxes empty or full is predetermined by a noncontextual $O$-valued empirical model would necessarily conclude that the probability of finding the box $v_2^{(1)}$ full is strictly bounded as  $P_{(O)}(1|v_2^{(1)}) \leq w_q < 1$.
However, one can prepare a qutrit system along with suitable projective measurements (see App. \ref{app:HardyO-context}) such that all the Hardy conditions are strictly satisfied while the quantum value evaluates to $P_{(q)}(1|v_2^{(1)}) = \left(\frac{k}{k+2}\right)^2$. Specifically, suppose that one were to prepare a qutrit system in the state $|v_1^{(1)}\rangle = \left(0, \cos{\theta_k}, \sin{\theta_k}\right)^T$ with $\theta_k = \frac{1}{2} \arcsin{\left(\frac{k}{k + 2}\right)}$. And consider that opening a box corresponded to measuring the projector on the states $|v_i^{(m)} \rangle$ given as: 
\begin{widetext}
\begin{eqnarray}
\label{eq:optq-genm-main}
    &&|v_1^{(m)}\rangle = R^{m-1} \begin{pmatrix} 0 \\ \cos\theta_{k-m+1} \\ \sin\theta_{k-m+1} \end{pmatrix}, \quad  |v_2^{(m)} \rangle = R^{m-1} \begin{pmatrix} 0 \\ \sin{\theta_{k-m+1}} \\ \cos{\theta_{k-m+1}} \end{pmatrix}, \nonumber \\
    &&|v_3^{(m)}\rangle =  \frac{1}{\sqrt{1+k}} R^{m-1}\begin{pmatrix} \sqrt{\frac{k}{2}} (-\cos{\theta_{k-m+1}}+\sin{\theta_{k-m+1}}) \\ -\sin{\theta_{k-m+1}} \\ \cos{\theta_{k-m+1}} \end{pmatrix}, \; |v_4^{(1)}\rangle =  \frac{1}{\sqrt{1+k}} R^{m-1}\begin{pmatrix} \sqrt{\frac{k}{2}} (\cos{\theta_{k-m+1}}-\sin{\theta_{k-m+1}}) \\ -\cos{\theta_{k-m+1}} \\ \sin{\theta_{k-m+1}} \end{pmatrix}, \nonumber \\ 
    &&|v_6^{(m)}\rangle =  \frac{1}{\sqrt{1+k}}R^{m-1} \begin{pmatrix}\sqrt{\frac{k}{2}} (-\cos{\theta_{k-m+1}}+\sin{\theta_{k-m+1}}) \\ \sin{\theta_{k-m+1}} \\ -\cos{\theta_{k-m+1}}\end{pmatrix}, \; |v_7^{(m)}\rangle =  \frac{1}{\sqrt{1+k}}R^{m-1}\begin{pmatrix}\sqrt{\frac{k}{2}} (\cos{\theta_{k-m+1}}-\sin{\theta_{k-m+1}}) \\ \cos{\theta_{k-m+1}} \\ -\sin{\theta_{k-m+1}}\end{pmatrix}, \nonumber \\
    &&|v_5^{(m)}\rangle =  \frac{1}{\sqrt{1+k}}R^{m-1} \begin{pmatrix} 1 \\ \sqrt{\frac{k}{2}} \\ \sqrt{\frac{k}{2}}\end{pmatrix}, \quad |v_8^{(m)}\rangle =\frac{1}{\sqrt{1+k}} R^{m-1} \begin{pmatrix}-1 \\ \sqrt{\frac{k}{2}} \\ \sqrt{\frac{k}{2}}\end{pmatrix}. 
\end{eqnarray}
\end{widetext}
with empty and full corresponding to obtain results $0$ and $1$ respectively. Here
$R$ denotes the rotation matrix
    $R := \begin{pmatrix}
0 & \frac{1}{\sqrt{2}} & -\frac{1}{\sqrt{2}} \\
-\frac{1}{\sqrt{2}} & \frac{1}{2} & \frac{1}{2} \\
\frac{1}{\sqrt{2}} & \frac{1}{2} & \frac{1}{2}
\end{pmatrix}$
and $\theta_j := \frac{1}{2} \arcsin{\left(\frac{j}{j + 2}\right)}$ for $j = 1, \ldots, k$.
Therefore, for any given $O$, selecting a depth $k > \frac{2 \sqrt{w_q}}{1- \sqrt{w_q}}$ implies that $P^{(O)}(1|v_2^{(1)}) <  P^{(q)}(1|v_2^{(1)})$. The above argument thus provides a simple and experimentally feasible Hardy-like test that falsifies the $O$-valued noncontextual ontology.
The minimum number of projective measurements required to rule out an $O$-valued model for $O = \{0,w_1,w_2,\ldots, w_q,1\}$ with $0 < w_1 < w_2 < \ldots < w_q < 1$ depends on the value of $w_q$. The minimum number of projectors $n_{\min}$ is given by:
$$n_{\min} = 6 \times \left\lfloor \frac{2\sqrt{w_q}}{1 - \sqrt{w_q}} \right\rfloor + 8.$$ 
For instance for $w_q = 0.25$ the number of projectors needed is $20$, for $w_q = 0.5$ the number is $32$, for $w_q = 0.75$ the number is $80$ and for $w_q = 0.90$ the number is $224$. 

We remark here that the above test makes the standard minimal assumptions for contextuality tests - that the observables measured obey the specified compatibility structure, that measuring one observable does not physically disturb or alter the system in a way that affects the subsequent outcome of a compatible observable. It is also worth pointing out that existing proofs of contextuality (both state-independent and state-dependent) are insufficient to rule out such $O$-valued models. Already in \cite{Ravi24}, it was pointed out that all existing state-independent proofs admit value assignments from $\{0,1/2,1\}$-models, let alone more general theories. 



While the above result is a state-dependent Hardy test that is experimentally feasible, it is also desirable to achieve state-independent proofs a la Kochen and Specker. In this regard, in Appendix \ref{app:GenKS}, we prove a generalisation of the KS theorem ruling out $O$-valued models for a class of $O$ that while not arbitrary contains previous results as a subclass \cite{Ravi24}. Specifically, we prove the following theorem (for a detailed proof see App. \ref{app:GenKS}).
\begin{theorem}
    Let $O^* = \{0, w_1, w_2, \ldots, w_q, 1\}$ be such that the intersection of $O^3$ with the $2$-simplex is entirely contained within its boundary, i.e., $(O^*)^3 \cap \Delta^2 \subseteq \partial \Delta^2$. There exists a finite set of rays $V_{O^*} \subset \mathbb{C}^3$ that admits no global $O^*$-valued assignment. 
\end{theorem}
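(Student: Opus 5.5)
The condition $(O^*)^3\cap\Delta^2\subseteq\partial\Delta^2$ says that in every complete basis $\{a,b,c\}$ of $\mathbb{C}^3$, an $O^*$-valuation $\lambda$ gives $0$ to at least one vector. The plan is to exploit this in three steps: force a ``zero in every basis'' structure, then push values to the extremes, and finally reduce to a Kochen--Specker contradiction.

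\emph{Step 1 (orthogonality graph).} Every triangle (complete basis) in the orthogonality graph contains a $0$-vertex. Hence the vectors with $\lambda>0$ form a set containing no orthogonal basis; in particular, at most two vectors of any basis carry nonzero weight. I would take $V_{O^*}$ to contain a state-independent KS set $V_{KS}$ (e.g.\ Peres' 33 rays), completed so that every orthogonal pair lies in a complete basis. Then I would add, as gadgets, the depth-$k$ Hardy chains from App.~\ref{app:HardyO-context}, rotated to start at every ray of $V_{KS}$ and every ray of a finite set of auxiliary bases.

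\emph{Step 2 (forcing $0/1$ values).} The Hardy chains give the following. If $\lambda(v_1^{(1)})=1$, then $\lambda(v_2^{(1)})\le w_q$, and indeed the chain forces $v_2^{(1)}$ into $\{0\}$, because nonzero values can only propagate in a degraded way down the chain of length $k>2\sqrt{w_q}/(1-\sqrt{w_q})$. The key intermediate lemma I would aim for is: \emph{if a ray $u$ has $\lambda(u)=w\in(0,1)$, then some finite gadget attached to $u$ produces a basis with three nonzero values.}

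To prove this lemma, take the two rays orthogonal to $u$ in a basis $\{u,x,y\}$. Step 1 forces one of them, say $x$, to be $0$, so $\lambda(y)=1-w\in O^*\setminus\{0,1\}$. Now apply the same reasoning to $y$ in a different basis sharing $u$'s orthogonal complement. Using a Clifton/Hardy-type bug gadget between $u$ and $y$ whose angle is not $\pi/2$, the requirement that each basis have a zero, together with $\lambda(u),\lambda(y)>0$, forces a chain of zero-valued rays. Their orthocomplements then have values summing to $1$ over two rays in every basis, and the bug structure makes the two ends of the gadget mutually orthogonal in value. The result is a basis containing $u$, $y$ and a third ray whose value is pinned away from $0$. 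That is the desired contradiction. Consequently $\lambda$ is $\{0,1\}$-valued on the gadget-closed set.

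\emph{Step 3.} Restricted to $V_{KS}$, $\lambda$ is then a $\{0,1\}$ KS assignment, which is impossible. Finiteness follows because finitely many gadgets are attached to finitely many rays. For the special case $O=\{0,q,1-q,1\}$ with $q\neq 1/3$, the hypothesis holds exactly when no triple from $O$ with all entries nonzero sums to $1$; this recovers \cite{Ravi24}.

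\emph{Main obstacle.} The hardest part is the lemma in Step 2: constructing an explicit finite gadget that converts a single intermediate value into a basis with no zero. The gadget must use only the ``some zero per basis'' property and not the specific numerical values of $O^*$. I would try first to use the Hardy chain itself. Place $u$ as $v_1^{(1)}$ and note that the chain's constraints, with zeros forced at $v_3,v_4$ or $v_6,v_7$ positions by Step 1, propagate the value $w$ unchanged to $v_5^{(m)}$ and $v_8^{(m)}$. After $k$ steps, $v_5^{(k)}$ and $v_8^{(k)}$ are nonorthogonal rays that both carry nonzero weight and are orthogonal to a common third ray that also carries nonzero weight; closing that basis gives the contradiction.
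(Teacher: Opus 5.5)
There is a genuine gap, and it sits where you placed the ``main obstacle'': the lemma that an intermediate value $\lambda(u)=w\in(0,1)$ can be refuted by a finite gadget is never proved, and each mechanism you sketch for it fails.

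\emph{The Hardy-chain claims are false.} In Step 2 you say the chain forces $\lambda(v_2^{(1)})=0$ when $\lambda(v_1^{(1)})=1$. Thm.~\ref{thm:NestClif-NDval} and the remark after it give only $\lambda(v_2^{(1)})\le w_q$, and that bound is attained. Take $O^*=\{0,\tfrac12,1\}$, which satisfies the hypothesis. Set $\lambda(v_1^{(1)})\in\{\tfrac12,1\}$, $\lambda(v_3^{(m)})=\lambda(v_6^{(m)})=0$ for all $m$, and $\lambda=\tfrac12$ on every other vertex of $G_k$. This satisfies every clique constraint of $G_k$ for every $k$. Each triangle sums to $1$, and every edge sums to at most $1$. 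The closing edge $\{v_5^{(k)},v_8^{(k)}\}$ sums to exactly $1$, so the ray completing that basis is forced to $0$, not pinned away from it.

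The same valuation, with $\lambda(v_1^{(1)})=\tfrac12$, refutes your fallback in the last paragraph. An intermediate value sits at $u=v_1^{(1)}$ and is carried down the chain, yet no basis acquires three nonzero values. Your zero pattern ``at $v_3,v_4$'' would in fact force $\lambda(v_5)=1$, not $w$. Also, $v_5^{(k)},v_8^{(k)}$ are orthogonal in $G_k$, not nonorthogonal; two nonorthogonal rays could not share a basis anyway.

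\emph{The other route is also inconsistent.} The ``bug gadget between $u$ and $y$ at an angle $\neq\pi/2$'' cannot exist, because $u$ and $y$ lie in a common basis and so are orthogonal. The unique third ray of that basis is $x$, which you already set to $0$. Structurally, when $\lambda(u)<1$ the exclusivity constraints of the chain are slack, so the chain carries no information; its only leverage is an endpoint with value $1$.

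\emph{The missing idea.} The paper uses the chain in the opposite direction from you. In Step one it combines $\lambda(v_1)=1\Rightarrow\lambda(v_2)<1$ with the gadgets of Figs.~\ref{fig:stepone-1}--\ref{fig:stepone-2}. Whenever $\lambda(u_1)=1$, this yields a strictly decreasing chain $\lambda(u_2)>\lambda(u_3)>\cdots$ inside the finite set $O^*$, which is impossible, so the value $1$ is excluded on every ray. Only then does the hypothesis become rigid: with no $1$'s, every basis has \emph{exactly} one zero and two intermediate values. Explicit gadgets then contradict this. The $9$-ray set forces three linearly independent zeros, and the Step-three gadget shows that such a zero triple forces some ray to take value $1$.

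You instead try to eliminate intermediate values while $1$ is still allowed. There, ``at least one zero per basis'' is compatible with both $(1,0,0)$ and $(w,1-w,0)$ and gives no comparable rigidity. Your key lemma is therefore essentially as strong as the theorem itself. If only existence were wanted, note that the hypothesis forces $\tfrac13\notin O^*$, so App.~\ref{sec:Gleason} already gives a finite set non-constructively; the content of the theorem is the explicit construction.
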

Here $\Delta^2 \subset \mathbb{R}^3$ denote the 2-simplex, defined as $\Delta^2 = \left\{ (x, y, z) \in \mathbb{R}^3 \mid x \ge 0, y \ge 0, z \ge 0 \text{ and } x + y + z = 1 \right\}$.  The boundary of $\Delta^2$ is the set of points where at least one coordinate vanishes $\partial \Delta^2 = \left\{ (x, y, z) \in \Delta^2 \mid x=0 \text{ or } y=0 \text{ or } z=0\right\}$. To elaborate, the set of models we consider here are the $O^*$ for which every probability assignment $(p_1,p_2,p_3) \in {O^*}^3$ satisfying the normalization condition $p_1 + p_2 + p_3 = 1$ contains at least one zero. This is a strict generalization of the models considered in \cite{Ravi24} since while $O^* = \{0, p, 1-p, 1\}$ clearly satisfies the condition, there are a large class of other models that also fall under the condition, such as for instance $O^* = \{0, 1/m, 1\}$ for $m \geq 2$. We also remark that the proof of the theorem (in App. \ref{app:GenKS}) is constructive as opposed to the non-constructive proof obtained via the Gleason theorem.

\textit{Applications of the result.-} Thus far, we have seen the abstract generalisation of the KS theorem ruling out a larger class of empirical models than done before. In this section, we delineate the physical significance of the result as establishing a stronger form of contextuality as a resource than hitherto considered, and in terms of ruling out finite many-valued logical models of quantum theory. 

As mentioned earlier, quantum contextuality is a powerful resource for several applications in quantum information, and as such it is important to quantify and understand the strongest manifestations of quantum contextuality. A rigorous hierarchy of contextuality was formulated by Abramsky and Brandenburger in their sheaf-theoretic approach to contextuality in \cite{AB11}. In this framework, contextuality is not seen as simply a violation of a noncontextuality inequality, but more rigorously as a topological obstruction to the existence of a global section (see App. \ref{app:Sheaftheory-O} for the technical details). 

Formally, \cite{AB11} introduced a strict hierarchy in terms of probabilistic vs possibilistic vs strong contextuality. An empirical model $e$ assigning probabilities to measurement outcomes is said to be possibilistically contextual if there exists no global section that is consistent with its local supports. An empirical model $e$ is said to be strongly contextual if no local section in the support of any context can be extended to a consistent global section. In this framework, strong contextuality represents the absolute maximum deviation from classicality - in a strongly contextual model, the fraction of the observed behavior that can be explained by a classical noncontextual ($\{0,1\}$-valued) model is zero (stated in other words, its contextual fraction \cite{ABM17} is one). Abramsky et al. \cite{ABKLM15} also demonstrated that strong contextuality manifests as a non-trivial cohomology class in the first cohomology group providing a purely homological witness for quantum paradoxes. Notably, the quantum behavior achieving the maximum violation of the well-known CHSH test is contextual, but is not strongly contextual. In contrast, the Popescu-Rohrlich (PR) box is seen to be strongly contextual \cite{PR94, AB11} - there are no global sections compatible with its support. 

While the hierarchy is mathematically rigorous, it is fundamentally restricted to a $\{0,1\}$ ontological truth-set. Here we extend this to an even stronger form of contextuality (that we term strong $O$-valued contextuality) where we test empirical models $e$ against the set of $O$-valued models. For an empirical model $e$, we define the $O$-valued contextual fraction $CF_{O}(e)$ as the minimum weight of the empirical model that cannot be decomposed into a convex combination of $O$-valued global sections. We define an empirical model $e$ to be $O$-valued contextual if $CF_{O}(e) > 0$. Furthermore, we define an empirical model $e$ to be Strongly $O$-valued Contextual if no fraction of the empirical behavior can be explained by an $O$-valued global section in the presheaf, i.e., if $CF_{O}(e) = 1$. Let $e_{PR}$ denote the empirical model of the Popescu-Rohrlich box \cite{PR94}. While it exhibits standard Strong Contextuality ($CF_{\{0,1\}}(e_{PR}) = 1)$, we see that it fails to exhibit any $O$-valued contextuality for $O_1 = \{0,1/2,1\}$, that is there exists a single valid global section in $\mathcal{E}_{O_1}(X)$ and $CF_{O_1}(e_{PR}) = 0$. On the other hand, we see that the quantum behaviors that provide optimal violations for the Hardy tests in the previous section exhibit $O$-valued contextuality for any $O = \{0, w_1, \ldots, w_q, 1\}$ such that $\frac{2 \sqrt{w_q}}{1-\sqrt{w_q}} < k$. 
Therefore, the notion of strong $O$-valued contextuality strictly extends the probabilistic-possibilistic-strong contextuality hierarchy of \cite{AB11} and provides a strong resource strengthening the known quantum information applications. While \cite{AB11} proved that contextuality is the obstruction of the Boolean presheaf $\mathcal{E}_{\{0,1\}}$, we prove constructively that quantum theory also evades the global sections for all finite-valued presheaves $\mathcal{E}_{O}$ (for details see App. \ref{app:Sheaftheory-O}).

As a second point of physical interest, we also note an interesting connection between the $O$-valued empirical models and finite many-valued logic. To elaborate, Birkhoff and von Neumann (BvN) in their breakthrough paper \cite{BvN36} argued that the structure of a set of dichotomic (`yes-no') propositions about properties of quantum objects termed `quantum logic', differs from the structure of a set of such propositions pertaining to classical objects in which case it is a Boolean algebra. Since an experimental test of any such dichotomic proposition results in either a true or false outcome, quantum logic in the BvN sense is generally treated as a $2$-valued non-classical logic. At the same time, other non-classical logics were studied, among them being various kinds of many-valued logics \cite{Reichenbach44, Lukasiewicz70, Enderton01, Svozil98, Pavicic92}. In particular, it was argued that statements about future non-certain events belong to the domain of many-valued logic \cite{Lukasiewicz70}. In \cite{Pykacz94}, a faithful isomorphism was established between Birkhoff-von Neumann quantum logic and a specific infinite-valued {\L}ukasiewicz logic. It must be noted that the assignments in many-valued logic are not natively treated as a probability but rather as an objective truth-value assigned to the event (see App. \ref{app:many-valuedlogic-O} for details).

While it is known that finite-valued logics are mathematically untenable for quantum theory as a corollary of Gleason's theorem, that  theorem relies intrinsically on the continuity of the entire sphere, and it is preferable to have an experimentally amenable test that queries only a strictly finite subset of propositions (a finite partial Boolean algebra) to rule out finite-valued logics. The result in the previous section provides a finite partial Boolean algebra that serves to rule out finite-valued logical models as underlying quantum theory. 
        
\textit{Conclusions and Open Questions.-}
Gleason's theorem requires continuity and infinite sets of vectors to pinpoint the Born rule. In contrast, the finite constructions presented here show that one does not need infinity or continuous functions to rule out alternatives to the Born rule. At least some of them (the discrete non-contextual nondeterministic probability assignments) can be ruled out with finite, discrete sets of projectors. 
Richman and Bridges \cite{RB99} proved Gleason's theorem constructively extracting the density matrix $\rho$ from a global probability assignment by a computable procedure. Busch \cite{Busch03} provided a short constructive proof for general POVMs. Our constructions provide a pathway towards finite constructive proofs of Gleason's theorem. Specifically, it provides a tighter bridge for Quantum Logic by showing that the Born Rule can be derived from a finite set of discrete measurements rather than requiring the entire infinite lattice of projections. 

For deterministic $O = \{0,1\}$-valued models, rigorous bounds have been established and records have been found for minimal KS sets in specific Hilbert space dimensions. It is an open question as to what are the minimum cardinality vector sets to rule out general nondeterministic noncontextual models in the state-dependent manner of App. \ref{app:HardyO-context}. While we have derived KS-type proofs for a broad class of $O^*$-valued models, it would be very interesting to complete the research direction by deriving finite KS-type state-independent proofs for arbitrary $O$. It would also be interesting to investigate the scaling behavior of the memory cost of classical simulation \cite{KGP+11} for the constructed quantum contextual sets. Finally, it would be interesting to pursue experimental realisations of (some of) the contextuality tests proposed in this work. In this regard, it is of importance to relax the strict notions of noncontextuality and no-disturbance using techniques from \cite{LR25, KDL15, Fyrillas23} among others. 

\textit{Acknowledgments.-}
We acknowledge support from the General Research Fund (GRF) Grant No.\ 17211122, and the Research Impact Fund (RIF) Grant No.\ R7035-21.

\onecolumngrid
\appendix

\section{Preliminaries: Structure of the noncontextual models }
\label{sec:NCHV}

Multiple ways to define noncontextual hidden-variable models are present in the literature, including the marginal problem definition \cite{Vorobev62, Fine82}, the  graph-theoretic approach of \cite{CSW14}, the sheaf-theoretic approach of \cite{AB11}, the hypergraph-theoretic approach of \cite{AFLS15}, etc. Here we follow the graph-theoretic approach of \cite{CSW14} with its corresponding noncontextuality and no-disturbance polytopes, to explicitly define the $O$-valued hidden variable models that we rule out in this paper. Specifically, we introduce the formal mathematical framework for quantum contextuality scenarios using the language of projective measurements and their combinatorial representation via orthogonality graphs \cite{CSW14}. We note that this formulation restricts to ideal measurements corresponding to rank-one orthogonal projectors acting on a Hilbert space $\mathcal{H}$.

Let $\mathcal{V} = \{v_1, \ldots, v_n\}$ be a finite set of experimental events. In a quantum realization, each event $v_i \in \mathcal{V}$ is assigned a rank-one projector $\Pi_i = \ket{v_i}\bra{v_i}$ acting on a finite-dimensional Hilbert space $\mathcal{H}$. The operational relationships between these events are encoded using an orthogonality graph \cite{LSS87}.

\begin{definition}[Orthogonality Graph]
An orthogonality graph is an undirected graph $G = (V, E)$ where: 
\begin{enumerate}
    \item The vertex set $V$ is identical to the set of events $\mathcal{V}$, where each vertex corresponds to a specific rank-one projector $\Pi_i$.
    \item An edge $e = (v_i, v_j) \in E$ connects two distinct vertices if the corresponding events are mutually exclusive. In quantum theory, this means the projectors are orthogonal: $\Pi_i \Pi_j = 0 \iff \braket{v_i \mid v_j} = 0$.
\end{enumerate}
\end{definition}
Analogously, an orthonormal representation of $G = (V,E)$ in $n$-dimensional Euclidean space is a function $f: V \rightarrow \mathbb{R}^n$ such that $\| f(v) \| = 1$ for all $v \in V$ and $\langle f(u), f(v) \rangle = 0$ for all $\{u,v\} \in E$.
A context $C \subseteq V$ is defined as a maximal clique (a completely connected subgraph) of $G$. Since all vertices within a clique are pairwise connected, their corresponding rank-one projectors are mutually orthogonal and in quantum theory, these can be measured simultaneously as part of a single von Neumann measurement. Let $\mathcal{C}$ denote the set of all maximal cliques of $G$, forming a cover of the vertex set, and denote the set of all maximum cliques of $G$ as $\mathcal{C}_{\max} \subseteq \mathcal{M}$. Recall the graph-theoretic notions: a maximal clique is one that cannot be made bigger by adding another vertex, and a maximum clique is one that has the largest possible number of vertices in the entire graph (known as the clique number $\omega(G)$ of the graph).  

An operational description of the scenario is given by an empirical model, which assigns probabilities to these events.

\begin{definition}[Empirical Model]
An empirical model $e$ on an orthogonality graph $G=(V,E)$ with a clique cover $\mathcal{C}$ is an assignment of probabilities to individual vertices, conditional on the measured context. Formally, it is a set of distributions: $e = \{p_C\}_{C \in \mathcal{C}}$,
where $p_C: C \to [0, 1]$ satisfies the local normalization condition for each maximal clique $C$: $\sum_{v_i \in C} p_C(v_i) \le 1$ and for each maximum clique $C$: $\sum_{v_i \in C} p_C(v_i) = 1$.
The value $p_C(v_i)$ represents the probability that the event $v_i$ occurs given that the context $C$ is chosen.
\end{definition}
Since each context represents a measurement, any valid empirical model must satisfy the constraint ($\le 1$) for all maximal cliques, and a strict normalization condition ($= 1$) for the maximum cliques. 

\subsubsection{General No-disturbance Models}
The physical requirement of no-disturbance \cite{RSKK12} (also called consistency) dictates that the probability of an event $v_i$ happening must be independent of the specific context in which it is measured. Note that in Bell scenarios, the corresponding condition to no-disturbance is the well-known no-signalling principle. 

\begin{definition}[No-disturbance/Consistent models]
An empirical model $e = \{p_C\}_{C \in \mathcal{M}}$ is non-disturbing (or consistent) if, for any two contexts $C, C' \in \mathcal{C}$ sharing a vertex $v_i \in C \cap C'$, the assigned probabilities coincide: $p_C(v_i) = p_{C'}(v_i) =: p(v_i)$.
\end{definition}

In other words, an empirical model satisfying no-disturbance simplifies to a single global assignment $p: V \to [0, 1]$ over the vertices of the graph. The local normalization condition translates to a global constraint: the sum of probabilities over any clique $C$ in the graph cannot exceed unity: $\sum_{v_i \in C} p(v_i) \le 1 \quad \forall C \in \mathcal{C}$, and for maximum cliques $\sum_{v_i \in C} p(v_i) = 1$. 
This system of linear inequalities and equalities defines the no-disturbance/consistent polytope over the orthogonality graph $G$, that we term $\mathcal{ND}(G)$ and define formally using the graph-theoretic clique-constrained stable-set polytope $QSTAB(G)$ \cite{LSS87} as follows.


\begin{definition}[No-disturbance/Consistent Polytope]
The no-disturbance/consistent polytope $\mathcal{P}_{ND}(G)$ is defined by intersecting the clique-constrained stable set polytope $\text{QSTAB}(G)$ \cite{LSS87} with the maximum clique normalization hyperplanes:
\begin{equation}
    \mathcal{P}_{ND}(G) = \text{QSTAB}(G) \cap \left\{ p \in \mathbb{R}^V \;\middle|\; \sum_{v \in C_{\max}} p(v) = 1 \quad \forall C_{\max} \in \mathcal{C}_{\max} \right\},
\end{equation}
where $\text{QSTAB}(G)$ is the clique-constrained polytope defined by $p(v) \ge 0$ for all $v \in V$ and $\sum_{v \in C} p(v) \le 1$ for every maximal clique $C \in \mathcal{C}$.
\end{definition}

\subsubsection{Classical Noncontextual Models}

Classical noncontextual models are formed by taking the convex hull of deterministic assignments that respect both the exclusivity and the normalization constraints in the graph.

\begin{definition}[Deterministic Noncontextual models]
A deterministic noncontextual empirical model is a no-disturbance model where the global function $\lambda: V \to \{0,1\}$ maps each vertex to a value in $\{0,1\}$, and obeys the normalization constraints:
\begin{align}
    \sum_{v \in C} \lambda(v) \le 1 \quad \forall C \in \mathcal{C}, \qquad 
    \sum_{v \in C_{\max}} \lambda(v) = 1 \quad \forall C_{\max} \in \mathcal{C}_{\max}.
\end{align}
We denote the finite set of all such valid binary configurations as $\Lambda_{\text{B}}(G)$.
\end{definition}
The convex hull of all deterministic noncontextual models forms the classical noncontextual polytope $\mathcal{P}_{C}(G)$ that we define in terms of the stable set polytope $STAB(G)$ from graph theory \cite{LSS87} as follows.
\begin{definition}
The classical noncontextual polytope $\mathcal{C}(G)$ is the convex hull of the deterministic noncontextual models:
\begin{equation}
    \mathcal{P}_{C}(G) = \text{conv}(\Lambda_{\text{B}}(G)) = \text{STAB}(G) \cap \left\{ p \in \mathbb{R}^V \;\middle|\; \sum_{v \in C_{\max}} p(v) = 1 \quad \forall C_{\max} \in \mathcal{C}_{\max} \right\},
\end{equation}
where $\text{STAB}(G)$ is the stable set polytope \cite{LSS87}.
\end{definition}


\subsubsection{Noncontextual $O$-Valued Models: An intermediate set}
Quantum correlations lie within the set of no-disturbance behaviors $\mathcal{P}_{ND}(G)$ and separations from the classical set $\mathcal{P}_{C}(G)$ have been well-studied in the literature. In the current paper, we focus on a relaxation of the classical polytope to investigate how far quantum theory deviates from classical. 
To that end, we now relax the binary restriction of classical determinism by allowing the underlying, context-independent ontic states to assign values from a discrete set $O$ defined as follows. 
\begin{definition}
Let $O = \{0, w_1, w_2, \dots, w_q, 1\} \subset [0,1]$ be a finite set of reals such that $0 < w_1 < w_2 < \ldots < w_q < 1$.
\end{definition}
This set $O$ relaxes the set of classical behaviors and forms an intermediate set between the classical noncontextual (which correspond to $O = \{0,1\}$) and general no-disturbance (which correspond to $O = [0,1]$) models. 

\begin{definition}[$O$-Valuation]
An $O$-valuation is a global function $\lambda_O: V \to O$ that assigns a value from $O$ to each vertex, satisfying the KS constraints:
\begin{align}
    \sum_{v \in C} \lambda_O(v) \le 1 \quad \forall C \in \mathcal{C}, \qquad 
    \sum_{v \in C_{\max}} \lambda_O(v) = 1 \quad \forall C_{\max} \in \mathcal{C}_{\max}.
\end{align}
We denote the finite set of all such valid global configurations as $\Lambda_O(G) \subset O^V$.
\end{definition}


\begin{definition}[$O$-Valued Empirical Model]
An empirical model $p: V \to [0,1]$ is an $O$-valued empirical model if it can be written as a convex combination of global $O$-valuations. Formally, there exists a probability distribution $\mu_O: \Lambda_O(G) \to [0,1]$ such that:
\begin{equation}
    p(v_i) = \sum_{\lambda_O \in \Lambda_O(G)} \mu_O(\lambda_O) \cdot \lambda_O(v_i) \quad \forall v_i \in V,
\end{equation}
where $\sum_{\lambda_O \in \Lambda_O(G)} \mu_O(\lambda_O) = 1$. The set of all such models forms a polytope defined by $\mathcal{P}_O(G) = \text{conv}(\Lambda_O(G))$.
\end{definition}
We observe that in this terminology, $\mathcal{P}_{C}(G) \equiv \mathcal{P}_{\{0,1\}}(G)$ and $\mathcal{P}_{ND}(G) \equiv \mathcal{P}_{[0,1]}(G)$.
For instance, selecting $O = \{0, \frac{1}{2}, 1\}$ expands the classical boundary to capture highly contextual, non-signaling behaviors like the well-known Popescu-Rohrlich (PR) box.
s
Since $\{0,1\} \subseteq O \subset [0,1]$, we obtain a hierarchy of polytopes:
\begin{equation}
    \mathcal{P}(G) \subseteq \mathcal{P}_O(G) \subseteq \mathcal{P}_{ND}(G).
\end{equation}

At this point, it is worth remarking that quantum theory is also noncontextual, in that the probability assignment for a measurement of a projector $P$ on a state $\rho$ is given as $\text{Tr}[\rho P]$ independent of the specific basis in which the projector is measured. As such the set of quantum correlations $\mathcal{P}_{Q}(G)$ for any given orthogonality graph is contained within the no-disturbance polytope $\mathcal{P}_{Q}(G) \subset \mathcal{P}_{ND}(G)$. 

The question we are interested in is to engineer orthogonality graphs $G$ and suitable quantum correlations that lie outside $\mathcal{P}_O(G)$ for any given $O$. As such, it is worth remarking that for most of the studied orthogonality graphs in the literature the set of quantum correlations  $\mathcal{P}_{Q}(G) \subset \mathcal{P}_O(G)$ already for $O = \{0, 1/2, 1\}$. For instance, for the Hardy test from \cite{CBCB13} described in the main text, the no-disturbance polytope has been characterised. It is known that apart from the classical deterministic vertices, the only non-classical vertex is the one which assigns probability $1/2$ to every vertex of the pentagon $C_5$ \cite{Wright78}. In particular, Wright proved that one can mathematically construct a state on this pentagon logic where every single vertex has a "Yes" probability of exactly \(\frac{1}{2}\). In other words, while the quantum value for the Hardy test for $C_5$ was found to be $P_{(q)}(0,1|5,1) = 1/9 > 0$ in \cite{CBCB13}, the no-disturbance value for the test is $1/2 > 1/9$, showing that the test cannot rule out a $\{0,1/2,1\}$-empirical model. This observation can also be extended to general $n$-cycle scenarios where again it is known that $\mathcal{P}_{ND}(G) = \mathcal{P}_{O}(G)$ for $O = \{0, 1/2, 1\}$ \cite{AQBCC13}. Similarly known KS proofs such as the Peres-Mermin magic square and the Mermin pentagram as well as the original KS proof \cite{KS67} were found to admit value assignments from $O = \{0, 1/2, 1\}$. As such, the question of ruling out general $O$-valued empirical models has remained open (beyond the specific $O$ considered in \cite{Ravi24}).



\section{A Hardy-like test of contextuality to rule out arbitrary $O$-valued empirical models}
\label{app:HardyO-context}
In this section, we formulate a Hardy-type test of contextuality to rule out general $O$-valued empirical models. The test is state-dependent and experimentally feasible, and makes use of a specific orthogonality graph that we term the nested Clifton graph. 
\subsection{The Nested Clifton Graph and its optimal orthogonal representation}
Consider the orthogonality graph that results from a nesting of the well-known Clifton graph \cite{Clifton93} used by Kochen and Specker in their original proof \cite{KS67}. This Nested Clifton graph is shown in Fig. \ref{fig:clifton_nested} and its structure is as follows. 

\begin{figure}[htbp]
    \centering
    \includegraphics[width=0.6\textwidth]{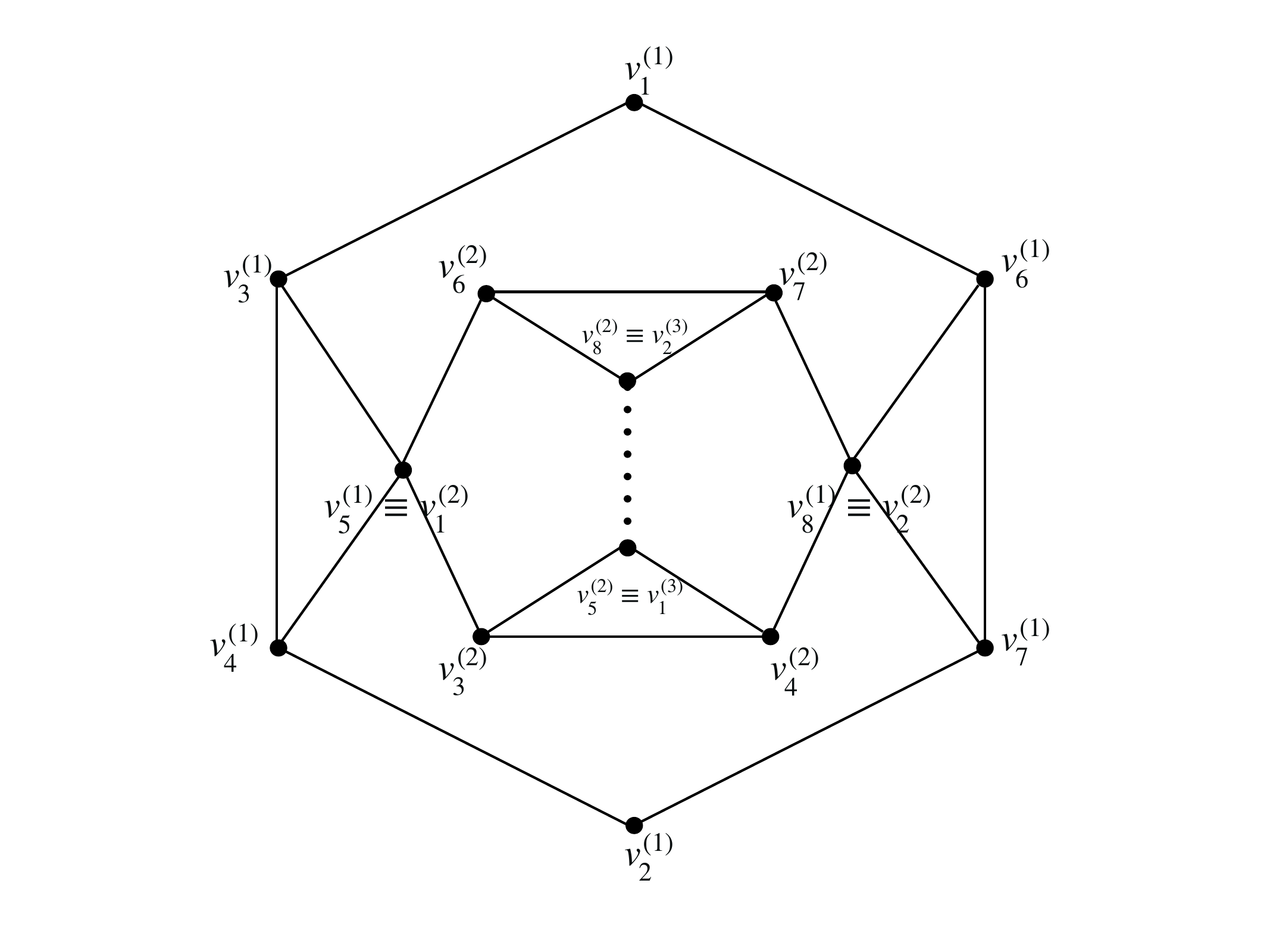}
    \caption{The $k$-nested Clifton graph with the recursive structural nesting as explained in the text. The dotted central line indicates that the recursive nesting with the $m+1$-th nesting replacing the edge $\{v_5^{(m)}, v_8^{(m)}\}$ with a Clifton graph as explained in the text. The optimal quantum (and no-disturbance) values for the Hardy test of contextuality built using this graph are derived in Thms. \ref{thm:optq-nestClif} and \ref{thm:NestClif-NDval}.}
    \label{fig:clifton_nested}
\end{figure}

Let $k \in \mathbb{Z}_+$ denote the total depth of nesting. The outer layer $m=1$ contains the vertices $\{v_1^{(1)}, v_2^{(1)}, \ldots, v_8^{(1)}\}$. Each layer is isomorphic to the Clifton graph, so that for each $m = 2, \ldots, k$ we introduce vertices $\{v_1^{(m)}, v_2^{(m)}, \ldots, v_8^{(m)}\}$ with the following identification for $m > 1$:
\begin{equation}
    v_1^{(m)} = v_5^{(m-1)} \; \; \;\text{and} \; \; \; v_2^{(m)} = v_8^{(m-1)}.
\end{equation}
That is, the central vertices $v_5$ and $v_8$ of the $(m-1)$-th layer form the two outer vertices $v_1$ and $v_2$ of the $m$-th layer. Let $G_k = (V_k, E_k)$ denote the $k$-nested Clifton graph, constructed by $k$ recursive embeddings of the Clifton gadget. 
The vertex set $V_k$ consists of $|V_k| = 6k+2$ vertices with the original eight-vertex Clifton bug corresponding to the case $k=1$. The edge set $E_k$ consists of $|E_k| = 10k+1$ edges with eleven edges in the Clifton bug corresponding to $k=1$.                                                           

We first observe that for the $k$-nested Clifton graph $G_k$ for any $k \in \mathbb{Z}_+$, it holds in any consistent/no-disturbance theory that $f(v_1^{(1)}) = 1 \implies f(v_2^{(1)}) < 1$. Specifically, we derive the following proposition by writing the no-disturbance value as a linear program. 
\begin{theorem}
\label{thm:NestClif-NDval}
 Let $G_k = (V_k, E_k)$ denote the $k$-nested Clifton graph of Fig. \ref{fig:clifton_nested}. For any consistent probability assignment $f: V_k \rightarrow [0,1]$ satisfying $f(v_1^{(1)})=1$ it holds that 
\begin{equation}
    f(v_2^{(1)}) \leq 1 - 2^{-k}.
\end{equation}
\end{theorem}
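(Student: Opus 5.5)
The plan is to propagate a single linear inequality through the layers. Each Clifton layer forces its two central vertices to be nearly as certain as its two outer vertices, with the slack at most doubling per layer. The innermost layer then forces the slack to be at least $1$.

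\textbf{Constraints used.} Read off the edges of $G_k$: per layer, the two triangles $\{v_3^{(m)},v_4^{(m)},v_5^{(m)}\}$ and $\{v_6^{(m)},v_7^{(m)},v_8^{(m)}\}$ (6 edges), the four edges $v_1^{(m)}v_3^{(m)}$, $v_1^{(m)}v_6^{(m)}$, $v_2^{(m)}v_4^{(m)}$, $v_2^{(m)}v_7^{(m)}$, and a single extra edge $v_5^{(k)}v_8^{(k)}$ in the innermost layer. This accounts for the $10k+1$ edges. The triangles are maximum cliques, so the consistency conditions of $\mathcal{P}_{ND}(G_k)$ give $f(v_3)+f(v_4)+f(v_5)=1$ and $f(v_6)+f(v_7)+f(v_8)=1$. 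Every edge lies in some clique, so every edge $\{u,w\}$ gives $f(u)+f(w)\le 1$; only these linear constraints are used.

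\textbf{One-layer step.} Fix a layer $m$ and drop the superscript. From $f(v_3)\le 1-f(v_1)$ and $f(v_4)\le 1-f(v_2)$, the first triangle yields
\begin{equation*}
f(v_5)=1-f(v_3)-f(v_4)\ \ge\ f(v_1)+f(v_2)-1 .
\end{equation*}
Symmetrically, using the edges $v_1v_6$ and $v_2v_7$ on the second triangle gives $f(v_8)\ge f(v_1)+f(v_2)-1$.

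\textbf{Recursion on the deficit.} Define $d_m := 2-f(v_1^{(m)})-f(v_2^{(m)})$. By the identifications $v_1^{(m+1)}=v_5^{(m)}$ and $v_2^{(m+1)}=v_8^{(m)}$, the two bounds above give $f(v_1^{(m+1)}),f(v_2^{(m+1)})\ge 1-d_m$. Hence $d_{m+1}\le 2d_m$ for $m=1,\dots,k-1$. Applying the same step once more inside layer $k$ gives $f(v_5^{(k)})+f(v_8^{(k)})\ge 2-2d_k$.

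\textbf{Conclusion.} The innermost edge $v_5^{(k)}v_8^{(k)}$ gives $f(v_5^{(k)})+f(v_8^{(k)})\le 1$. Combining with the previous bound yields $2d_k\ge 1$. The recursion then gives $d_1\ge 2^{-(k-1)}d_k\ge 2^{-k}$. With $f(v_1^{(1)})=1$ this is exactly $f(v_2^{(1)})\le 1-2^{-k}$.

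\textbf{Main obstacle.} The main place to be careful is the edge bookkeeping. I need to confirm that the orthogonality $v_5^{(m)}\perp v_8^{(m)}$ survives only in the last layer, since it is replaced by the nested gadget otherwise. I also need to confirm that the edges invoked are genuine constraints of $\mathcal{P}_{ND}(G_k)$: every edge lies in a maximal clique whose sum is at most $1$, and nonnegativity then gives the pairwise bound.

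If tightness is also wanted, I would exhibit the assignment saturating every step. Take $f(v_1^{(m)})=f(v_2^{(m)})=1-2^{m-1-k}$ for $m\ge 2$, with $f(v_3)=1-f(v_1)$, $f(v_4)=1-f(v_2)$, and similarly for $v_6,v_7$. Then check nonnegativity and that every edge sum is at most $1$. This is not needed for the stated inequality.
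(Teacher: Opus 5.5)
Your argument is correct and is essentially the paper's proof. The paper adds the same four edge inequalities and the two triangle normalizations to get $2f(v_1^{(1)})+2f(v_2^{(1)})\le 2+f(v_5^{(1)})+f(v_8^{(1)})$, then inducts on $k$ by applying the hypothesis to the inner copy of $G_k$; this is your deficit recursion $d_{m+1}\le 2d_m$ read from the inside out. One small correction to your optional tightness witness: it should be $f(v_1^{(m)})=f(v_2^{(m)})=1-2^{m-2-k}$ for $m\ge 2$, ending with $f(v_5^{(k)})=f(v_8^{(k)})=\frac{1}{2}$, because with $f(v_1^{(1)})=1$ your one-layer inequality forces $f(v_5^{(1)})\ge 1-2^{-k}$, which contradicts your stated $f(v_1^{(2)})=1-2^{1-k}$.
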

\begin{proof}
The proof follows straightforwardly by induction. Consider the base case $k=1$. We are interested in the value of the following linear program for any assignment $f: V_1 \rightarrow [0,1]$: maximize $\; f(v_1^{(1)}) + f(v_2^{(1)})$ subject to the constraints: $f(v_1^{(1)}) + f(v_3^{(1)}) \leq 1$, $f(v_1^{(1)}) + f(v_6^{(1)}) \leq 1$, $f(v_2^{(1)}) + f(v_4^{(1)}) \leq 1$, $f(v_2^{(1)}) + f(v_7^{(1)}) \leq 1$, $f(v_5^{(1)}) + f(v_8^{(1)}) \leq 1$, $f(v_3^{(1)}) + f(v_4^{(1)}) + f(v_5^{(1)})=1$ and $f(v_6^{(1)}) + f(v_7^{(1)}) + f(v_8^{(1)}) = 1$. 
 The constraints come from the orthogonality structure of the graph $G_1$ and the two maximum cliques (of size $\omega(G_1)=3$) for which the normalization condition applies. By adding the inequality constraints and substituting the normalization conditions, one can readily deduce that $2f(v_1^{(1)}) + 2f(v_2^{(1)}) \leq 3$. So that when $f(v_1^{(1)}) = 1$ it holds that $f(v_2^{(1)}) \leq \frac{1}{2}$. Note that this is the maximum consistent (no-disturbance) value of the probability assigned to $v_2^{(1)}$ when $v_1^{(1)}$ has probability one. 

 Now suppose that the statement holds for $k \in \mathbb{Z}_+$, i.e., that $f(v_1^{(1)}) + f(v_2^{(1)}) \leq 2 - \frac{1}{2^k}$ for any assignment $f: V_k \rightarrow [0,1]$. We'll see that the statement holds for $k+1$. In this case we are interested in the value of the following linear program: 
 \begin{eqnarray}
    \max \; &&f(v_1^{(1)}) + f(v_2^{(1)}) \nonumber \\
    \text{s.t.} \; &&f(v_1^{(1)}) + f(v_3^{(1)}) \leq 1, \quad f(v_1^{(1)}) + f(v_6^{(1)}) \leq 1, \nonumber \\
    &&f(v_2^{(1)}) + f(v_4^{(1)}) \leq 1, \quad f(v_2^{(1)}) + f(v_7^{(1)}) \leq 1, \quad f(v_5^{(1)}) + f(v_8^{(1)}) \leq 2 - \frac{1}{2^k}, \nonumber \\
    && f(v_3^{(1)}) + f(v_4^{(1)}) + f(v_5^{(1)})=1, \quad f(v_6^{(1)}) + f(v_7^{(1)}) + f(v_8^{(1)}) = 1.
\end{eqnarray}
Here, the constraint $f(v_5^{(1)}) + f(v_8^{(1)}) \leq 1 - \frac{1}{2^k}$ comes from the inductive hypothesis that for the graph $G_k$ the sum of the assignments to the two outer vertices is bounded by $2 - \frac{1}{2^k}$, and that $v_5^{(1)} = v_1^{(2)}$ and $v_8^{(1)} = v_2^{(2)}$. Now, again adding the inequality constraints and utilizing the normalization constraints gives that
\begin{equation}
    2f(v_1^{(1)}) + 2f(v_2^{(1)}) \leq 4 - \frac{1}{2^k}.
\end{equation}
In other words, when $f(v_1^{(1)})=1$ we have that $f(v_2^{(1)}) \leq 1 - \frac{1}{2^{k+1}}$. We thus see that if the proposition holds for $k$ it holds for $k+1$, so that by induction, the statement holds for any $k \in \mathbb{Z}_+$.   
\end{proof}
\begin{rem}
\label{rem:Omax}
Thm. \ref{thm:NestClif-NDval} shows that for any probability assignment $f: V_k \rightarrow [0,1]$ satisfying $f(v_1^{(1)}) = 1$ it holds that $f(v_2^{(1)}) \leq 1 - \frac{1}{2^k}$. Consider the case of $O$-valued assignments $f: V_k \rightarrow O$ where $O = \{0,w_1,w_2, \ldots, w_q, 1\}$ with $0 < w_1 < w_2 < \ldots < w_q < 1$. Let $\tilde{q}$ denote the largest index $i$ for which $w_i \leq 1 - \frac{1}{2^k}$. Then, for any $O$-valued assignment  
satisfying $f(v_1^{(1)})=1$ it holds that $f(v_2^{(1)}) \leq w_{\tilde{q}}$. 
\end{rem}

Now, let us proceed to identify the corresponding ``Hardy-type'' probability in the case of quantum theory. For this, we need to solve the rank-constrained optimization problem giving the optimal orthogonal representation of the Nested Clifton graph $G_k$. An ansatz solution was given for this problem in \cite{RRHS+20}, here we provide a rigorous proof showing the optimality of this solution. This will enable us identify the minimal $k$ for which the graph $G_k$ serves to rule out $O$-valued assignments.   


\begin{theorem}
\label{thm:optq-nestClif}
 Let $G_k = (V_k, E_k)$ denote the $k$-nested Clifton graph of Fig. \ref{fig:clifton_nested}, and let $v_1^{(1)}, v_2^{(1)} \in V_k$ denote the two outermost end vertices. For any orthonormal representation $f: V_k \rightarrow \mathbb{R}^3$ of $G_k$, it holds that 
 \begin{eqnarray}
     |\langle f(v_1^{(1)} | f(v_2^{(1)} \rangle| \leq \frac{k}{k+2}.
 \end{eqnarray}
\end{theorem}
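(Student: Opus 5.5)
The plan is to prove the statement by induction on the nesting depth $k$, peeling off the outermost Clifton layer at each step. The inductive quantity is the overlap $t_m := |\langle f(v_1^{(m)}) | f(v_2^{(m)})\rangle|$ between the two outer vertices of the sub-graph starting at layer $m$; the innermost sub-graph is the single Clifton graph $G_1$ of layer $k$. The recursion is driven by a one-layer ``gadget lemma'', which I have not yet proved and which is the main step.

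\emph{Gadget lemma (to be proved).} Let $a, b, c_3, c_4, c_5, c_6, c_7, c_8$ be unit vectors in $\mathbb{R}^3$. Suppose $\{c_3,c_4,c_5\}$ and $\{c_6,c_7,c_8\}$ are orthonormal bases, $a \perp c_3, c_6$, $b \perp c_4, c_7$, and $|\langle c_5|c_8\rangle| \le t$. Then
\begin{equation*}
|\langle a|b\rangle| \le g(t) := \frac{1+t}{3-t}.
\end{equation*}

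\emph{Induction, given the lemma.} Since $g$ is increasing on $[0,1)$, the lemma applies with the true inner overlap replaced by any upper bound on it.
\begin{itemize}
\item Base case $k=1$: the edge $\{v_5^{(1)},v_8^{(1)}\}$ forces $t=0$, so $|\langle a|b\rangle|\le g(0)=1/3$. This is the known Clifton bound.
\item Inductive step: the inner sub-graph on layers $2,\dots,k$ is a copy of $G_{k-1}$ whose outer vertices are $v_1^{(2)} = v_5^{(1)}$ and $v_2^{(2)} = v_8^{(1)}$. By hypothesis its overlap is at most $(k-1)/(k+1)$. The lemma then gives
\begin{equation*}
t_1 \le g\!\left(\tfrac{k-1}{k+1}\right) = \frac{2k/(k+1)}{(2k+4)/(k+1)} = \frac{k}{k+2},
\end{equation*}
which is the claim.
\end{itemize}
Tightness is not needed for the inequality, but the representation of Eq.~\eqref{eq:optq-genm-main} should saturate every step. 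This is consistent with $\sin 2\theta_j = j/(j+2)$, and its equality configuration can guide the proof of the lemma.

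\emph{Proof of the lemma: reduction.} Expand $a$ and $b$ in the basis $\{c_3,c_4,c_5\}$. Since $a\perp c_3$ and $b\perp c_4$, we get $\langle a|b\rangle = \langle a|c_5\rangle\langle c_5|b\rangle$. The same argument in the second basis gives $\langle a|b\rangle = \langle a|c_8\rangle\langle c_8|b\rangle$. So $a$ lies in the planes $\mathrm{span}(c_4,c_5)$ and $\mathrm{span}(c_7,c_8)$, and $b$ lies in $\mathrm{span}(c_3,c_5)$ and $\mathrm{span}(c_6,c_8)$.

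\emph{Proof of the lemma: parametrization.} I would fix coordinates so that $c_5 = e_3$ and $c_8 = (\sin\phi)\,e_2 + (\cos\phi)\,e_3$ with $|\cos\phi| \le t$. Choosing $c_3,c_4$ amounts to one rotation angle $\alpha$ in the plane $c_5^\perp$. Choosing $c_6,c_7$ amounts to one angle $\beta$ in the plane $c_8^\perp$. Then $a$ is, up to sign, the cross product of the normals of its two planes, namely $a \propto c_3\times c_6$; similarly $b \propto c_4\times c_7$. Hence $|\langle a|b\rangle|$ becomes an explicit trigonometric function of $(\alpha,\beta,\phi)$, to be maximized for fixed $|\cos\phi|$.

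\emph{Main obstacle.} The hard step is this maximization. A naive bound using only the projections onto $\mathrm{span}(c_5,c_8)$ gives just $(1+t)/2$, which is too weak at $t=0$. So the orthogonality of $a$ to $c_3,c_6$ and of $b$ to $c_4,c_7$ must be used in full, which the cross-product form does. I would first use the swap symmetry $(a,c_3,c_4,c_6,c_7)\leftrightarrow(b,c_4,c_3,c_7,c_6)$ to argue that the optimum has $\beta=\pm\alpha$, reducing to one variable. I would then solve the resulting one-variable extremum problem by elementary calculus and check that the maximum is $(1+\cos\phi)/(3-\cos\phi)$, increasing in $\cos\phi$. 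If the symmetry reduction cannot be justified rigorously, the fallback is to bound the product form $|\langle a|c_5\rangle\langle c_5|b\rangle|^{1/2}\,|\langle a|c_8\rangle\langle c_8|b\rangle|^{1/2}$ directly by AM--GM, combined with the norm constraints coming from the two bases.
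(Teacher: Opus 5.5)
\textbf{Gap: the one-layer lemma that carries the whole theorem is left unproved.} Your induction is the same as the paper's. Iterating $g(t)=(1+t)/(3-t)$ is exactly the recursion $x_k=(x_{k+1}+1)/(3-x_{k+1})$ of Lemma~\ref{lem:overlap}. Your base case, the monotonicity step and the computation $g\bigl(\tfrac{k-1}{k+1}\bigr)=\tfrac{k}{k+2}$ are all correct. Your inequality formulation is actually the logically right statement. The paper's lemma only evaluates the overlaps in one specific (extremal) configuration and treats the outer overlap as a function of the inner one, which at best shows that $k/(k+2)$ is attained, not that it is an upper bound.

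Neither route you sketch closes the lemma. Invariance of the objective under the swap involution does not place a maximizer on the fixed locus $\beta=\pm\alpha$ without a concavity or uniqueness argument. The AM--GM fallback works with the four numbers $\langle a|c_5\rangle,\langle c_5|b\rangle,\langle a|c_8\rangle,\langle c_8|b\rangle$ and norm bounds, which is exactly the data that yields only $(1+t)/2$. What it misses is that $a,b,c_5,c_8$ are four vectors sitting together in $\mathbb{R}^3$.

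\textbf{How to close it.} Fix $a,b$ rather than $c_5,c_8$.
\begin{enumerate}
\item Flip $b$ so that $\gamma=\langle a|b\rangle\ge 0$. The case $\gamma\le 1/3$ is trivial because $g\ge 1/3$. The case $\gamma=1$ forces $c_5=\pm a=\pm c_8$.
\item Your identity says $c_5,c_8\in S_\gamma:=\{c:\|c\|=1,\ \langle a|c\rangle\langle c|b\rangle=\gamma\}$. Conversely, every $c\in S_\gamma$ completes to a valid triangle by taking $c_3,c_4$ along the projections of $b,a$ onto $c^\perp$, so nothing is lost. It therefore suffices to show $|\langle c|c'\rangle|\ge(3\gamma-1)/(1+\gamma)$ for $c,c'\in S_\gamma$.
\item Use the orthonormal basis $e_\pm\propto a\pm b$, $e_0\propto a\times b$, with $0<\gamma<1$, and write $c=(x,y,z)$. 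Then $\langle a|c\rangle\langle c|b\rangle=\tfrac{1+\gamma}{2}x^2-\tfrac{1-\gamma}{2}y^2$, so $S_\gamma$ is given by
\begin{equation*}
x^2=\frac{2\gamma+(1-\gamma)y^2}{1+\gamma},\qquad z^2=\frac{1-\gamma-2y^2}{1+\gamma}.
\end{equation*}
Hence $x\neq 0$ on $S_\gamma$, and after replacing $c'$ by $-c'$ you may take $x,x'>0$.
\item Cauchy--Schwarz gives $xx'\ge\frac{2\gamma+(1-\gamma)|yy'|}{1+\gamma}$. The inequality $(A-2u)(A-2u')\le(A-2\sqrt{uu'})^2$, with $A=1-\gamma$, $u=y^2$, $u'=y'^2$, gives $|zz'|\le\frac{1-\gamma-2|yy'|}{1+\gamma}$. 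Therefore
\begin{equation*}
\langle c|c'\rangle\ \ge\ xx'-|yy'|-|zz'|\ \ge\ \frac{3\gamma-1+2(1-\gamma)|yy'|}{1+\gamma}\ \ge\ \frac{3\gamma-1}{1+\gamma}.
\end{equation*}
Equality holds at $y=y'=0$, $z'=-z$, which is precisely the paper's representation.
\item Combining this with $|\langle c_5|c_8\rangle|\le t$ and rearranging gives $\gamma\le(1+t)/(3-t)$.
\end{enumerate}
This completes your lemma without any symmetry reduction.
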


\begin{proof}
To find the optimal representation analytically, we formulate the geometric constraints of the graph as a rank-constrained positive semidefinite optimization problem. To that end, let us define an extended index set that includes a handle for the preparation state:
\begin{eqnarray}
    \mathcal{I} = \{0\} \cup \big\{ (i,m) \; | i \in \{1,\ldots, 8\}, m \in \{1,\ldots, k\} \big\}, 
\end{eqnarray}
where the index $0$ corresponds to the state $|\psi \rangle$ and each $(i,m)$ corresponds to a vector $|v_i^{(m)} \rangle$. 
We define the Gram matrix $X \in \mathbb{R}^{|\mathcal{I}| \times |\mathcal{I}|}$ of the vector set such that $M_{(i,m),(j,m')} = \langle v_i^{(m)} | v_j^{(m')} \rangle$ for all $(i,m), (j,m') \in \mathcal{I}$. Finding the maximum overlap $\langle \psi | v_2^{(1)} \rangle$ is equivalent to maximizing the matrix entry $M_{0,(2,1)}$. The problem is thus equivalent to the following non-convex rank-constrained semidefinite program:
\begin{eqnarray}
\label{eq:NestedClif-primal}
\text{Maximize}: && M_{0,(2,1)} \nonumber \\
\text{subject to}: &&M \succeq 0, \quad \text{rank}(M) \leq 3, \nonumber \\
&&M_{x,x} = 1 \; \; \forall x \in \mathcal{I}, \quad M_{0,(1,1)} = 1, \nonumber \\
&&M_{(1,m),(5,m-1)} = 1 \; \; \forall m \in \{2,\ldots, k\}, \quad M_{(2,m),(8,m-1)} = 1 \; \; \forall m \in \{2,\ldots, k\}, \nonumber \\
&&M_{(i,m),(j,m)} = 0 \; \; \forall m \in \{1,\ldots, k\} \; \forall \{i,j\} \in E_C, \nonumber \\
&&M_{(5,k),(8,k)} = 0. 
\end{eqnarray}
Since $M$ is positive semidefinite and has ones on the diagonal, the constraint $M_{(1,m),(5,m-1)} = 1$ strictly implies $|v_1^{(m)} \rangle = |v_5^{(m-1)} \rangle$ and the constraint $M_{(2,m),(8,m-1)} = 1$ strictly implies $|v_2^{(m)} \rangle = |v_8^{(m-1)} \rangle$. Similarly the constraint $M_{0,(1,1)} = 1$ enforces that $|\psi \rangle = |v_1^{(1)} \rangle$. $E_C$ denotes the edge set of the Clifton graph, with the constraint $M_{(i,m),(j,m)} = 0$ for all $\{i,j\} \in E_C$ enforcing the orthogonality constraint for each layer $m \in [k]$. The constraint $\text{rank}(M) \leq 3$ restricts the embedding to $\mathbb{R}^3$.

We now proceed to solve the rank-constrained semidefinite program in \eqref{eq:NestedClif-primal}. While duality is not helpful here due to the rank-constraint (see the remark following the proof), we proceed to prove that the optimal solution to \eqref{eq:NestedClif-primal} is $\frac{k}{k+2}$ by induction. The base case $k=1$ is well-known - the maximal overlap of the end vectors in the Clifton bug is known to be $1/3$. We establish the following lemma.
\begin{lemma}
\label{lem:overlap}
    For any valid orthogonal representation $f: V_k \to \mathbb{R}^3$ of $G_k$, it holds that $x_k = \big|\langle v_1^{(k)}|v_2^{(k)}\rangle\big|$ is a strictly monotonically increasing function of $x_{k+1} = \big|\langle v_1^{(k+1)}|v_2^{(k+1)}\rangle\big|= \big|\langle v_5^{(k)}|v_8^{(k)}\rangle\big|$.
\end{lemma}
\begin{proof}
    We have $x_k = \big|\langle v_1^{(k)}|v_2^{(k)}\rangle\big|$ and $x_{k+1} = \big|\langle v_1^{(k+1)}|v_2^{(k+1)}\rangle\big| = \big|\langle v_5^{(k)}|v_8^{(k)}\rangle\big|$. From the orthogonality constraints $\langle v_3^{(k)} | v_1^{(k)} \rangle = 0$, $\langle v_4^{(k)} | v_2^{(k)} \rangle = 0$ and $\langle v_4^{(k)} | v_3^{(k)} \rangle = 0$, we obtain the parametrization:
    \begin{eqnarray}
        |v_3^{(k)}\rangle = \frac{1}{\sqrt{1+x_k}} \left(s,-c,\sqrt{x_k}\right)^T, \quad |v_4^{(k)}\rangle = \frac{1}{\sqrt{1+x_k}} \left(s,c,-\sqrt{x_k}\right)^T,
    \end{eqnarray}
where $c^2 = (1+x_k)/2$ and $s^2 = (1-x_k)/2$. Since $\{|v_3^{(k)}\rangle, |v_4^{(k)}\rangle, |v_5^{(k)}\rangle\}$ forms an orthonormal basis, we have that $|v_5^{(k)}\rangle \propto |v_3^{(k)}\rangle \times |v_5^{(k)}\rangle$, and similarly for $|v_8^{(k)}\rangle \propto |v_6^{(k)}\rangle \times |v_7^{(k)}\rangle$. We obtain the parametrization:
\begin{eqnarray}
        |v_5^{(k)}\rangle = \frac{1}{\sqrt{1+x_k}} \left(0,\sqrt{2 x_k},\sqrt{1-x_k}\right)^T, \quad |v_8^{(k)}\rangle = \frac{1}{\sqrt{1+x_k}} \left(0,-\sqrt{2 x_k},\sqrt{1-x_k}\right)^T.
    \end{eqnarray}
We evaluate the inner product as $x_{k+1} = \big|\langle v_5^{(k)}|v_8^{(k)}\rangle\big| = \frac{3x_k-1}{x_k+1}$ giving $x_k = \frac{x_{k+1}+1}{3-x_{k+1}}$. We see that $\frac{d x_{k}}{d x_{k+1}} = \frac{4}{(3-x_{k+1})^2} > 0$, showing that $x_k$ is a strictly monotonically increasing function of $x_{k+1}$. 
\end{proof}
As the inductive hypothesis, suppose that the overlap at the $k$-th level of nesting equals $k/(k+2)$. We use the Lemma \ref{lem:overlap} to prove that the overlap at the $k+1$-th level equals $(k+1)/(k+3)$. This is seen by the proof of the lemma, where we have $\big|\langle v_1^{(k)}|v_2^{(k)}\rangle\big| = \frac{\big|\langle v_5^{(k)}|v_8^{(k)}\rangle\big|+1}{3-\big|\langle v_5^{(k)}|v_8^{(k)}\rangle\big|} = \frac{\frac{k}{k+2}+1}{3-\frac{k}{k+2}} = \frac{k+1}{k+3}$ establishing the theorem by induction. 
An orthonormal representation in $\mathbb{R}^3$ achieving the optimal overlap exists for the $k$-nested Clifton graph $G_k$, and it has a very elegant structure as we shall show below. 

Let $R$ denote a $\pi/2$-counterclockwise rotation about the axis $|u\rangle = \frac{1}{\sqrt{2}}\left(0,1,1\right)^T$ explicitly given as
\begin{eqnarray}
    R := \begin{pmatrix}
0 & \frac{1}{\sqrt{2}} & -\frac{1}{\sqrt{2}} \\
-\frac{1}{\sqrt{2}} & \frac{1}{2} & \frac{1}{2} \\
\frac{1}{\sqrt{2}} & \frac{1}{2} & \frac{1}{2}
\end{pmatrix}
\end{eqnarray}
and let $\theta_j = \frac{1}{2} \arcsin{\left(\frac{j}{j + 2}\right)}$ for $j = 1, \ldots, k$. The optimal orthogonal representation is given as follows: the vectors in the outer layer $m=1$ are
\begin{eqnarray}
\label{eq:optq-meq1}
    &&|v_1^{(1)}\rangle = \left(0, \cos{\theta_k}, \sin{\theta_k}\right)^T, \quad |v_2^{(1)} \rangle = \left(0, \sin{\theta_k}, \cos{\theta_k}\right)^T, \nonumber \\
    &&|v_3^{(1)}\rangle = \frac{1}{\sqrt{1+k}}\left(\sqrt{\frac{k}{2}} (-\cos{\theta_k}+\sin{\theta_k}), -\sin{\theta_k}, \cos{\theta_k}\right)^T, \; |v_4^{(1)}\rangle = \frac{1}{\sqrt{1+k}}\left(\sqrt{\frac{k}{2}} (\cos{\theta_k}-\sin{\theta_k}), -\cos{\theta_k}, \sin{\theta_k}\right)^T, \nonumber \\ 
    &&|v_6^{(1)}\rangle = \frac{1}{\sqrt{1+k}}\left(\sqrt{\frac{k}{2}} (-\cos{\theta_k}+\sin{\theta_k}), \sin{\theta_k}, -\cos{\theta_k}\right)^T, \; |v_7^{(1)}\rangle = \frac{1}{\sqrt{1+k}}\left(\sqrt{\frac{k}{2}} (\cos{\theta_k}-\sin{\theta_k}), \cos{\theta_k}, -\sin{\theta_k}\right)^T, \nonumber \\
    &&|v_5^{(1)}\rangle = \frac{1}{\sqrt{1+k}}\left(1, \sqrt{\frac{k}{2}}, \sqrt{\frac{k}{2}}\right)^T, \quad |v_8^{(1)}\rangle = \frac{1}{\sqrt{1+k}}\left(-1, \sqrt{\frac{k}{2}}, \sqrt{\frac{k}{2}}\right)^T. 
\end{eqnarray}
We immediately observe that $\big|\langle v_1^{(1)}| v_2^{(1)}\rangle \big| = \sin{(2\theta_k)} = \frac{k}{k+2}$, and $\big|\langle v_1^{(2)}| v_2^{(2)}\rangle \big| = \big|\langle v_5^{(1)}| v_8^{(1)}\rangle \big| = \frac{k-1}{k+1}$. 
The set of vectors in every successive layer $m$ for $m = 2,\ldots, k$ is obtained as follows:
\begin{eqnarray}
\label{eq:optq-genm}
    &&|v_1^{(m)}\rangle = R^{m-1} \begin{pmatrix} 0 \\ \cos\theta_{k-m+1} \\ \sin\theta_{k-m+1} \end{pmatrix}, \quad  |v_2^{(m)} \rangle = R^{m-1} \begin{pmatrix} 0 \\ \sin{\theta_{k-m+1}} \\ \cos{\theta_{k-m+1}} \end{pmatrix}, \nonumber \\
    &&|v_3^{(m)}\rangle =  \frac{1}{\sqrt{1+k}} R^{m-1}\begin{pmatrix} \sqrt{\frac{k}{2}} (-\cos{\theta_{k-m+1}}+\sin{\theta_{k-m+1}}) \\ -\sin{\theta_{k-m+1}} \\ \cos{\theta_{k-m+1}} \end{pmatrix}, \; |v_4^{(1)}\rangle =  \frac{1}{\sqrt{1+k}} R^{m-1}\begin{pmatrix} \sqrt{\frac{k}{2}} (\cos{\theta_{k-m+1}}-\sin{\theta_{k-m+1}}) \\ -\cos{\theta_{k-m+1}} \\ \sin{\theta_{k-m+1}} \end{pmatrix}, \nonumber \\ 
    &&|v_6^{(m)}\rangle =  \frac{1}{\sqrt{1+k}}R^{m-1} \begin{pmatrix}\sqrt{\frac{k}{2}} (-\cos{\theta_{k-m+1}}+\sin{\theta_{k-m+1}}) \\ \sin{\theta_{k-m+1}} \\ -\cos{\theta_{k-m+1}}\end{pmatrix}, \; |v_7^{(m)}\rangle =  \frac{1}{\sqrt{1+k}}R^{m-1}\begin{pmatrix}\sqrt{\frac{k}{2}} (\cos{\theta_{k-m+1}}-\sin{\theta_{k-m+1}}) \\ \cos{\theta_{k-m+1}} \\ -\sin{\theta_{k-m+1}}\end{pmatrix}, \nonumber \\
    &&|v_5^{(m)}\rangle =  \frac{1}{\sqrt{1+k}}R^{m-1} \begin{pmatrix} 1 \\ \sqrt{\frac{k}{2}} \\ \sqrt{\frac{k}{2}}\end{pmatrix}, \quad |v_8^{(m)}\rangle =\frac{1}{\sqrt{1+k}} R^{m-1} \begin{pmatrix}-1 \\ \sqrt{\frac{k}{2}} \\ \sqrt{\frac{k}{2}}\end{pmatrix}. 
\end{eqnarray}
Here $R^{m-1}$ is essentially $R^{(m-1) \; \text{mod} \; 4}$ since $R^4 = \mathbb{I}$. Note though that the vector to which the rotation is applied differs at each layer due to the dependence on $\theta_{j}$. By explicit calculation we see that $|v_1^{(2)}\rangle = |v_5^{(1)}\rangle$ and $|v_2^{(2)}\rangle = |v_8^{(1)}\rangle$.
We have thus derived the optimal orthonormal representation for the family of nested Clifton graphs with the maximum overlap between the outer vertices for any level of nesting. 
\end{proof}

We remark that the optimization problem in \eqref{eq:NestedClif-primal} being non-convex, the problem is not easily solved using duality as is standard for semidefinite programming problems. Specifically, as we shall show by writing the dual problem, there exists a duality gap so that the optimization problem is not exactly solvable by just finding a feasible solution to the dual.

Let $\mathcal{S}^N$ denote the set of real, symmetric $N \times N$ matrices, with the inner product $\langle A, B \rangle = \operatorname{Tr}(AB)$ and let $\mathcal{D}$ define the non-convex primal feasible domain for the optimization problem in \eqref{eq:NestedClif-primal}. We define a linear operator $\mathcal{A} : \mathcal{S}^N \to \mathbb{R}^K$ that encapsulates all $K$ equality constraints in the optimization. The primal problem takes the standard form: $\text{Maximize}: \langle C, M \rangle$ subject to $\mathcal{A}(M)=b, \text{and} \; M \in \mathcal{D}$. Here, the objective matrix $C \in \mathcal{S}^N$ is defined with entries $C_{0, (2,1)} = C_{(2,1), 0} = \frac{1}{2}$ and zeros elsewhere, so that $\langle C, M \rangle = M_{0, (2,1)}$. The vector $b \in \mathbb{R}^K$ contains the right-hand side values for each constraint ($1$ for normalization, $0$ for orthogonality).

We introduce the vector of dual variables $z \in \mathbb{R}^K$ corresponding to the equality constraints. The Lagrangian $L : \mathcal{S}^N \times \mathbb{R}^K \to \mathbb{R}$ is given as $L(M,z) = \langle C, M \rangle + \langle z, b- \mathcal{A}(M) \rangle = \langle b, z \rangle + \langle C- \mathcal{A}^*(z), M \rangle$, where $\mathcal{A}^*$ denotes the adjoint of $\mathcal{A}$. The Lagrangian dual function $g(z)$ is the supremum of $L(M, z)$ over the primal domain $\mathcal{D}$: $g(Z) = \sup_{M \in \mathcal{D}} L(M, z) = \langle b, z \rangle + \sup_{M \in \mathcal{D}} \langle C - \mathcal{A}^*(z), M \rangle$. Defining the dual slack matrix $S \in \mathcal{S}^N$ as $S = \mathcal{A}^*(z) - C$, we must evaluate $\sup_{M \in D} \langle - S, M \rangle$. 
\begin{lemma}
    The quantity $\sup_{M \in \mathcal{D}} \langle -S, M \rangle$ is finite (and exactly $0$) if and only if $S \succeq 0$.
\end{lemma}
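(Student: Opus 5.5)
The plan rests on one structural fact: I take the primal domain $\mathcal{D}$, once all equality constraints have been moved into $\mathcal{A}(M)=b$ and dualized, to be exactly the rank-constrained positive semidefinite set $\mathcal{D}=\{M\in\mathcal{S}^N : M\succeq 0,\ \operatorname{rank}(M)\le 3\}$. Everything else is then elementary. The first step is to record three properties of $\mathcal{D}$.
\begin{enumerate}
\item $\mathcal{D}$ is a cone: if $M\in\mathcal{D}$ and $t\ge 0$, then $tM\in\mathcal{D}$, because scaling preserves both positive semidefiniteness and rank.
\item $0\in\mathcal{D}$.
\item $\mathcal{D}$ contains every rank-one matrix $uu^T$ with $u\in\mathbb{R}^N$.
\end{enumerate}
By the cone property and linearity of $M\mapsto\langle -S,M\rangle$, the supremum over $\mathcal{D}$ is either $0$ or $+\infty$. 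Indeed, $M=0$ gives the value $0$. If any $M\in\mathcal{D}$ gives a strictly positive value, then scaling $M$ by $t\to\infty$ drives the value to $+\infty$. Hence ``finite'' and ``exactly $0$'' are equivalent, which explains the parenthetical in the statement.

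For the ``if'' direction, assume $S\succeq 0$. For any $M\in\mathcal{D}$ we have $M\succeq 0$. The trace of a product of two PSD matrices is nonnegative, since $\operatorname{Tr}(SM)=\operatorname{Tr}(S^{1/2}MS^{1/2})\ge 0$. Therefore $\langle -S,M\rangle=-\operatorname{Tr}(SM)\le 0$, and equality is attained at $M=0$. The supremum is therefore exactly $0$.

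For the ``only if'' direction, I argue by contrapositive. Suppose $S\not\succeq 0$. Then $S$, being real symmetric, has an eigenvector $u$ with eigenvalue $u^TSu<0$. Take $M_t=t\,uu^T$ for $t>0$. This matrix is PSD of rank one, so it lies in $\mathcal{D}$ even under the rank-$3$ restriction. It gives
\[
\langle -S,M_t\rangle=-t\,u^TSu\to+\infty \quad\text{as } t\to\infty .
\]
Thus the supremum is infinite. The point to stress is that the rank constraint never obstructs this argument, because a single rank-one direction suffices to witness unboundedness. This is why the dual feasibility condition coincides with that of the ordinary SDP relaxation.

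The only real obstacle is definitional. The argument needs $\mathcal{D}$ to be the rank-constrained PSD cone, with the diagonal, orthogonality and identification constraints all absorbed into $\mathcal{A}$. If those constraints were instead kept inside $\mathcal{D}$, the set would be bounded, because unit diagonal and PSD force $|M_{xy}|\le 1$. The supremum would then always be finite, and the equivalence would fail. I will therefore state this convention explicitly at the start, consistent with the Lagrangian $L(M,z)$ written above, which dualizes all $K$ equality constraints. The consequence I will note is that the dual reads $\min\langle b,z\rangle$ subject to $\mathcal{A}^*(z)-C\succeq 0$. This is the same dual as that of the convex relaxation without the rank constraint, which is the source of the duality gap mentioned in the remark.
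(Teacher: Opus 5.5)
Your proof is correct and matches the paper's argument. For $S\not\succeq 0$ you use the rank-one test matrix $t\,uu^T$ along a negative eigenvector, and for $S\succeq 0$ you use $\operatorname{Tr}(SM)\ge 0$ together with $M=\mathbf{0}$ to attain the value $0$. Your explicit convention that $\mathcal{D}$ is the rank-$\le 3$ PSD cone, with all equality constraints dualized, is exactly what the paper uses implicitly when it places $t\,vv^T$ and $\mathbf{0}$ in $\mathcal{D}$, so stating it is a useful clarification rather than a departure.
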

\begin{proof}
Suppose that $S$ is not positive semidefinite, then $S$ has at least one negative eigenvalue $\lambda < 0$ with corresponding normalized eigenvector $v \in \mathbb{R}^N$. Defining a test matrix $\tilde{M} = t v v^T$ for some $t > 0$, we see that $\tilde{M} \succeq 0$ and $\text{rank}(\tilde{M}) = 1 < 3$ so that $\tilde{M} \in \mathcal{D}$. Computing $\langle -S, \tilde{M} \rangle = - t \lambda$ we see that as $t \to \infty$ we have that $-t \lambda \to \infty$. Therefore the dual function is unbounded if $S \not \succeq 0$. When $S \succeq 0$ we have that $\langle - S, M \rangle \leq 0$ since $M \succeq 0$. Since $M' = \mathbf{0} \in \mathcal{D}$, the supremum is strictly achieved at $0$.
\end{proof}
Thus, the dual function evaluates to: 
\begin{eqnarray}
    g(z) = \begin{cases}
    \langle b, z \rangle & \text{if } \mathcal{A}^*(z) - C \succeq 0 \\
    \infty   & \text{otherwise}
\end{cases}
\end{eqnarray}
Let us write the dual vector $z \in \mathbb{R}^K$ explicitly as the vector of Langrage multipliers $\{\{y_x\}_{x \in \mathcal{I}}, \lambda, \{\mu_m\}_{m=2}^k, \{\nu_m\}_{m=2}^{k}, \{w_{u,v}\}_{\{u,v\} \in E_k} \}$. Here $\lambda$ is the Langrage multiplier corresponding to the constraint $M_{0,(1,1)} = 1$, $y_x$ is the multiplier corresponding to the constraint $M_{x,x} = 1$, $\mu_m$ and $\nu_m$ are the multipliers corresponding to the constraints $M_{(1,m),(5,m-1)} =1$ and $M_{(2,m),(8,m-1)} = 1$, and $w_{u,v}$ is the multiplier corresponding to the orthogonality constraint $M_{u,v} = 0$ for edge $\{u,v\} \in E_k$. Let $E_{i,j}$ be the indicator matrix with a $1$ at position $(i,j)$ and $0$ elsewhere, we have that 
\begin{eqnarray}
    \mathcal{A}^*(z) = &&\sum_{x \in \mathcal{I}} y_x E_{x,x} + \lambda \left(E_{0,(1,1)} + E_{(1,1),0} \right) + \sum_{m=2}^k \mu_m \left(E_{(1,m),(5,m-1)} + E_{(5,m-1),(1,m)} \right) \nonumber \\
    &&+ \sum_{m=2}^k \nu_m \left(E_{(2,m),(8,m-1)} + E_{(8,m-1),(2,m)} \right) + \sum_{\{u,v\} \in E_k} w_{u,v} \left(E_{u,v} + E_{v,u} \right). 
\end{eqnarray}
Finally, we write the dual optimization problem as:
\begin{eqnarray}
    \text{Minimize}: &&d = 2 \lambda + \sum_{x \in \mathcal{I}} y_x + 2 \sum_{m=2}^k (\mu_m + \nu_m) \nonumber \\
    \text{subject to:} && S \succeq 0, \nonumber \\
    &&S_{x,x} = y_x \; \; \forall x \in \mathcal{I}, \nonumber \\
    &&S_{0,(1,1)}=S_{(1,1),0} = \lambda, \nonumber \\
    &&S_{(1,m),(5,m-1)} = S_{(5,m-1),(1,m)} = \mu_m \; \; \forall m \in \{2, \ldots, k\}, \nonumber \\
     &&S_{(2,m),(8,m-1)} = S_{(8,m-1),(2,m)} = \nu_m \; \; \forall m \in \{2, \ldots, k\}, \nonumber \\
     &&S_{v,u} = S_{u,v} = w_{u,v} \; \; \forall \{u,v\} \in E_k \nonumber \\
     &&S_{0,(2,1)} = S_{(2,1),0} = - \frac{1}{2}, \; \; \; S_{i,j} = 0 \; \; \text{for all other} \; i,j. 
\end{eqnarray}
Let $p^*$  be the optimal value of the rank-$3$ primal problem and $d^*$ be the optimal value of the dual problem derived above. By weak duality we have $p^* \leq d^*$, and the quantity $\Delta = d^* - p^*$ is the duality gap. For this problem, we find that $d^*$ is strictly greater than $p^* = \frac{k}{k+2}$ so that $\Delta > 0$. Therefore, the exact solution to the optimization cannot be obtained by finding a dual feasible solution alone.

\subsection{A Hardy-type test to rule out nondeterministic noncontextual models}
We now translate the above Thm. \ref{thm:optq-nestClif} and Corollary \ref{rem:Omax} into an experimentally feasible test to rule out noncontextual $O$-valued empirical models for any finite $O$.  Specifically, we formulate the above theorems into a Hardy-like proof of $O$-valued contextuality, akin to the Hardy test of (usual $\{0,1\}$-valued) contextuality in \cite{CBCB13, MANCB14}.

Consider a physical system of $6k+2$ boxes, where each box can be either empty $(0)$ or full $(1)$. Let $P(0,1|i,j)$ denote the probability of finding box $i$ empty and box $j$ full, and similarly let $P(0,0,1|i,j,l)$ denote the joint probability of finding boxes $i$ and $j$ empty and box $l$ full.
The Hardy constraints are given for every  $m \in \{1, \dots, k\}$ as:
\begin{align}
    &P(0,0,1 \mid v_3^{(m)},v_4^{(m)},v_5^{(m)}) + P(0,1,0 \mid v_3^{(m)},v_4^{(m)},v_5^{(m)}) + P(1,0,0 \mid v_3^{(m)},v_4^{(m)},v_5^{(m)}) = 1, \nonumber \\
    &P(0,0,1 \mid v_6^{(m)},v_7^{(m)},v_8^{(m)}) + P(0,1,0 \mid v_6^{(m)},v_7^{(m)},v_8^{(m)}) + P(1,0,0 \mid v_6^{(m)},v_7^{(m)},v_8^{(m)}) = 1, \nonumber \\
    &P(1,1 \mid v_1^{(m)},v_3^{(m)}) = 0, \quad P(1,1 \mid v_1^{(m)},v_6^{(m)}) = 0, \quad P(1,1 \mid v_2^{(m)},v_4^{(m)}) = 0, \quad P(1,1 \mid v_2^{(m)},v_7^{(m)}) = 0, \nonumber \\
    &v_5^{(m)} \equiv v_1^{(m+1)}, \qquad v_8^{(m)} \equiv v_2^{(m+1)}, \qquad P(1 \mid v_1^{(1)}) = 1.
\end{align}
From these conditions, anyone who assumes that the result of finding the boxes empty or full is predetermined by a finite noncontextual $O$-valued empirical model - where $O = \{0,w_1,\ldots, w_q,1\}$ with $0 < w_1 < \ldots < w_q < 1$ - would necessarily conclude that the probability of finding the box $v_2^{(1)}$ full is strictly bounded as (following Thm. \ref{thm:NestClif-NDval} and Remark \ref{rem:Omax})
\begin{equation}
    P^{(O)}(1|v_2^{(1)}) \leq w_q < 1.
\end{equation}
However, one can prepare a qutrit system in the pure state $|v_1^{(1)}\rangle$ and map the boxes to the optimal orthonormal representation of the $k$-nested Clifton graph in Thm. \ref{thm:optq-nestClif} such that all the Hardy conditions are strictly satisfied while the quantum value evaluates to 
\begin{equation}
    P^{(q)}(1|v_2^{(1)}) = \left(\frac{k}{k+2}\right)^2.
\end{equation}
For any $O$ therefore, simply selecting a nesting depth $k > \frac{2 \sqrt{w_q}}{1- \sqrt{w_q}}$ implies that $P^{(O)}(1|v_2^{(1)}) <  P^{(q)}(1|v_2^{(1)})$ providing a Hardy-like test that state-dependently falsifies the $O$-valued noncontextual ontology.

We remark here that the above test makes the standard minimal assumptions for contextuality tests - that the observables measured obey the specified compatibility structure, that measuring one observable does not physically disturb or alter the system in a way that affects the subsequent outcome of a compatible observable. Methods have been devised to even relax these minimal assumptions and allow for an $\epsilon$-degree of dependence on the measurement context (see for example \cite{LR25, KDL15, Fyrillas23}). It would be interesting to pursue experimental realizations of the above tests incorporating these corrections and investigate the class of models that can be ruled out in physical realizations.

We also remark that while the above Hardy-type test can rule out general discrete noncontextual alternatives to the Born rule, in the literature other alternatives to the Born rule have also been considered and argued to lead to inconsistencies in other ways, see for instance \cite{Aaronson04}. For instance, one alternative to the probability of an outcome $|k\rangle$ in an orthonormal basis $\{|k' \rangle\}$ given a state $|\psi\rangle$ is \cite{GM17, Aaronson04}
\begin{eqnarray}
    P(k|\psi) = \frac{|\langle \psi | k \rangle|^{\alpha}}{\sum_{k'} |\langle \psi | k'\rangle|^{\alpha}} \quad \text{for} \; \alpha \neq 2.
\end{eqnarray}
In this case however observe that the probability assignment is explicitly contextual - the probability assigned to an outcome $|k \rangle$ depends explicitly upon the context $\{| k' \rangle\}$ in which it is measured owing to the denominator in the expression. As such, the Hardy test derived in this section does not rule out such alternatives (although they can be ruled out through other considerations \cite{GM17, Aaronson04})

\section{Applications}
\subsection{A stronger notion of Kochen-Specker contextuality}
\label{app:Sheaftheory-O}
In quantum information science, Kochen-Specker contextuality goes beyond being a philosophical quirk and is valued as a resource \cite{HZS+23, ABM17}behind a multitude of applications - providing the magic behind universal fault-tolerant computation via magic state distillation, certifying randomness and key in (semi)-device-independent cryptography, certifying quantum states via self-testing correlations and identifying scenarios with quantum communication complexity advantage. Viewed as a resource for such applications, it is important to quantify and understand the strongest manifestations of quantum contextuality.

A rigorous hierarchy of contextuality was formulated by Abramsky and Brandenburger in their sheaf-theoretic approach to contextuality in \cite{AB11}. In this framework, contextuality is not seen as simply a violation of a noncontextuality inequality, but more rigorously as a topological obstruction to the existence of a global section. Let us briefly recall the approach and then place our result as a stronger notion of contextuality than hitherto considered. 

We begin by formalising a physical contextuality experiment in terms of a measurement cover and the event presheaf following the approach of \cite{AB11}. 
\begin{definition}
    A measurement scenario is a triple $\langle X, A, \mathcal{M} \rangle$ where $X$ is a finite set of measurement labels (propositions/observables), $A$ is a finite set of measurement outcomes, $\mathcal{M}$ is a measurement cover, i.e., a family of subsets of $X$ such that $\cup_{C \in \mathcal{M}} C = X$. An element $C \in \mathcal{M}$ is a maximal context of jointly measurable, mutually exclusive propositions.
\end{definition}
\begin{definition}
    A presheaf $\mathcal{E}: \mathcal{P}(X)^{op} \to \text{Set}$ is defined on the powerset $\mathcal{P}(X)^{op}$ of $X$ ordered by inclusion. For any subset $U \subseteq X$, $\mathcal{E}(U) = A^{U}$ is the set of all outcome assignments (functions from $U$ to $A$). An element $s \in \mathcal{E}(U)$ is a local section. For any inclusion $V \subseteq U$, the restriction $\rho_{U,V}: \mathcal{E}(U) \to \mathcal{V}$ is the function restriction $\rho_{U,V}(s)=s|_{V}$. A global section is an element $g \in \mathcal{E}(X)$ representing a globally predefined hidden variable assignment for every observable in the system. 
\end{definition}
\begin{definition}
    Let $\mathcal{D}(S)$ denote the set of probability distributions over a finite set $S$. The event presheaf $\mathcal{E}$ is composed with the distribution functor $\mathcal{D}$ to yield a new presheaf $\mathcal{D}_{\mathcal{E}}(U) = \mathcal{D}(\mathcal{E}(U))$. For $V \subseteq U$ and $d \in \mathcal{D}_{\mathcal{E}}(U)$ the marginal distribution on $V$ is $d|_{V}(s') = \sum_{s \in \mathcal{E}(U), s|_{V} = s'} d(s)$.
\end{definition}
An empirical model is a family of probability distributions $e = \{e_{C}\}_{C \in \mathcal{M}}$ where each $e_C \in \mathcal{D}_{\mathcal{E}}(C)$ specifies the statistics for the context $C$. An empirical model $e$ is said to be no-signalling (aka no-disturbance) if for any two overlapping contexts $C_1, C_2 \in \mathcal{M}$, their marginals agree on the intersection (see App. \ref{sec:NCHV}):
\begin{eqnarray}
    e_{C_1}|_{C_1 \cap C_2} = e_{C_2}|_{C_1 \cap C_2}.
\end{eqnarray}
The no-disturbance requirement is thus simply that the family $\{e_C\}$ form a compatible family of local sections. Thus a no-disturbance model is a global section of the presheaf of probability distributions. An empirical model $e$ is non-contextual if there exists a global probability distribution $d \in \mathcal{D}_{\mathcal{E}}(X)$ such that for all $C \in \mathcal{M}$, $d|_{C} = e_{C}$. If no such global distribution exists, the model is contextual.   

Abramsky and Brandenburger then introduced a strict hierarchy in terms of probabilistic vs possibilistic vs strong contextuality. An empirical model $e$ is said to be possibilistically contextual if there exists no global section $g \in \mathcal{E}(X)$ that is consistent with the local supports. An empirical model $e$ is said to be strongly contextual if no local section in the support of any context can be extended to a consistent global section. In this framework, strong contextuality represents the absolute maximum deviation from classicality - in a strongly contextual model, the fraction of the behavior that can be explained by a classical hidden variable theory is zero (stated in other words, its contextual fraction is one \cite{ABM17}). Abramsky et al. \cite{ABKLM15} also demonstrated that strong contextuality manifests as a non-trivial cohomology class in the first cohomology group providing a purely homological witness for quantum paradoxes. In this paradigm, the quantum behavior maximally violating the CHSH inequality is contextual, but is not strongly contextual. In contrast, the Popescu-Rohrlich (PR) box is seen to be strongly contextual \cite{PR94, AB11} - there are no global sections compatible with its support. 

While the hierarchy is mathematically rigorous, it is fundamentally restricted to a $\{0,1\}$ ontological truth-set. Here we extend this to an even stronger form of contextuality (that we term strong $O$-valued contextuality) where we test empirical models $e$ against the set of $O$-valued models. 
\begin{definition}
Let $O = \{0, w_1, \ldots, w_q, 1\}$ with $0 < w_1 < \ldots < w_q < 1$. We define the $O$-valued event presheaf $\mathcal{E}_{O}: \mathcal{P}(X)^{op} \to \text{Set}$ over the base space $X$. For any subset $U \subseteq X$, the set of local sections $\mathcal{E}_{O}(U)$ consists of all $O$-valued assignments $s: U \to O$
satisfying the exclusivity and normalization conditions. Specifically for any maximal context $c \subseteq U$ we have that $\sum_{v \in c} s(v) \le 1$ and for any maximum context $c \subseteq U$ we have that $\sum_{v \in c} s(v) = 1$. 

For any inclusion $V \subseteq U$, the restriction map $\rho_{U,V}: \mathcal{E}_{O}(U) \to \mathcal{E}_{O}(V)$ is the standard function restriction $\rho_{U,V}(s) = s|_{V}$.
A global section $g \in \mathcal{E}_{O}(X)$ is a total function $g: X \to O$ such that its restriction to any context is a valid local section. 

For an empirical model $e$, the $O$-valued contextual fraction $CF_{O}(e)$ is the minimum weight of the empirical model that cannot be decomposed into a convex combination of $O$-valued global sections $g \in \mathcal{E}_{O}(X)$. 
\end{definition}

We finally arrive at the stronger notion of contextuality, that we term Strong $O$-valued Contextuality.
\begin{definition}
    An empirical model $e$ is $O$-valued contextual if $CF_{O}(e) > 0$. An empirical model $e$ is Strongly $O$-valued Contextual if no fraction of the empirical behavior can be explained by an $O$-valued global section in the presheaf $\mathcal{E}_{O}$, i.e., $\not \exists \; g \in \mathcal{E}_{O}(X) \; \text{s.t.} \; g(x) = e(x) \; \forall x \in X$, i.e., if $CF_{O}(e) = 1$.
\end{definition}
Let $e_{PR}$ denote the empirical model of the Popescu-Rohrlich box \cite{PR94}. While it exhibits standard Strong Contextuality ($CF_{\{0,1\}}(e_{PR}) = 1)$, we see that it fails to exhibit any $O$-valued contextuality for $O_1 = \{0,1/2,1\}$, that is there exists a single valid global section in $\mathcal{E}_{O_1}(X)$ and $CF_{O_1}(e_{PR}) = 0$. On the other hand, we see that the quantum empirical models arising from the $k$-nested Clifton graph exhibit $O$-valued contextuality for any $O = \{0, w_1, \ldots, w_q, 1\}$ such that $\frac{2 \sqrt{w_q}}{1-\sqrt{w_q}} < k$. 

The notion of strong $O$-valued contextuality extends the probabilistic-possibilistic-strong contextuality hierarchy of \cite{AB11}. While \cite{AB11} proved that contextuality is the obstruction of the Boolean presheaf $\mathcal{E}_{\{0,1\}}$, we have seen that quantum theory also evades the global sections for all finite-valued presheaves $\mathcal{E}_{O}$. 

\subsection{Extremal no-signalling behaviors and units of nonlocality}
Kochen-Specker contextuality is intimately connected with Bell nonlocality \cite{RW04, Liu2024}. In particular, it is known that for certain orthogonality graphs by distributing the contextual measurements across multiple, spatially separated subsystems one can convert KS contextuality into Bell nonlocality. To elaborate, in a KS scenario, one aims show that a non-contextual hidden-variable theory fails on a single system because measurement outcomes depend on which other compatible observables are measured at the same time. For certain configurations of observables, it is known that by giving each party a subset of these observables and measuring on an entangled states shared between the parties, one can force the system to exhibit nonlocality. In this process, the context-dependence within one system becomes the action-at-a-distance (nonlocality) between two separated systems, which then forms the bedrock of applications in quantum information theory to device-independent cryptography \cite{BarrettPRL, our-4, ZRLH22}. As such, the results presented in this paper for KS contextuality also have implications in Bell nonlocality. We highlight one of the consequences in this section. 

In an interesting set of papers in 2005, Barrett and Pironio \cite{BP05} and Jones and Masanes \cite{JM05} investigated the convex polytope of bipartite non-signalling correlations where two separated parties Alice and Bob choose from $d_x$ and $d_y$ measurement settings respectively, with each measurement yielding $d_a$ and $d_b$ measurement outcomes. In particular, the case of binary outcomes $d_a = d_b = 2$ was considered. By characterising the extreme points of the space of no-signalling correlations $P_{ab|xy}$ (the analog of the no-disturbance polytope), they revealed that every nonlocal extremal distribution inherently contains the structure of a Popescu-Rohrlich (PR) box \cite{PR94}. From an information-theoretic perspective, they proved that all such extremal nonlocal correlations are completely interconvertible under local operations and shared randomness. Consequently, the PR box was shown to serve as the fundamental unit of bipartite nonlocality in this regime, capable of simulating any two-outcome measurement performed on any bipartite quantum state. Specifically, the following theorem was shown.
\begin{theorem}[\cite{JM05, BP05}]
\label{thm:JM05}
    Let $\mathcal{P}_{NS}(d_x,2;d_y,2)$ be the convex polytope of bipartite nonsignaling correlations $P_{ab|xy}$ in the Bell scenario where the inputs $x \in \{0,\ldots, d_x - 1\}$ and $y \in \{0,\ldots, d_{y} - 1\}$ are drawn from alphabets of arbitrary finite cardinality, and the outcomes $a, b \in \{0,1\}$ are strictly binary. 
    
    For any extremal behavior $P^{ext} \in \mathcal{P}_{NS}(d_x,2;d_y,2)$, the marginal probability assignments for both Alice and Bob $P^{ext}_{a|x} \equiv \sum_b P_{ab|xy}$ and $P^{ext}_{b|y} \equiv \sum_a P_{ab|xy}$, are strictly constrained to the discrete semantic set:
    \begin{eqnarray}
        P^{ext}_{a|x}, \; P^{ext}_{b|y} \in \{0, 1/2, 1\} \quad \forall x,y,a,b.
    \end{eqnarray}
\end{theorem}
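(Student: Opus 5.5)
The plan is a perturbation argument: from any behavior whose marginals are not all in $\{0,1/2,1\}$ we will build a feasible line segment through it, so it cannot be extremal. First we pass to correlator coordinates. For each input write $A_x = P_{0|x} - P_{1|x}$, $B_y = P_{0|y}-P_{1|y}$ and $E_{xy} = \sum_{a,b}(-1)^{a+b}P_{ab|xy}$. Under no-signalling these $d_x + d_y + d_xd_y$ numbers parametrize $\mathcal{P}_{NS}(d_x,2;d_y,2)$ affinely, through
\begin{equation}
P_{ab|xy} = \tfrac14\bigl(1 + s A_x + t B_y + st E_{xy}\bigr), \qquad s=(-1)^a,\; t=(-1)^b .
\end{equation}
The polytope is then cut out exactly by the inequalities $1 + sA_x + tB_y + stE_{xy} \ge 0$ for all $x,y$ and all $s,t\in\{\pm1\}$. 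The claim $P^{ext}_{a|x}\in\{0,1/2,1\}$ is equivalent to $A_x\in\{-1,0,1\}$, and likewise for Bob.

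Next we classify, for a fixed pair $(x,y)$, which sign patterns $(s,t)$ can be tight (equality) at the same time.
\begin{itemize}
\item $(s,t)$ and $(s,-t)$ tight: adding the two equalities gives $A_x=-s$, so Alice's input $x$ is deterministic.
\item $(s,t)$ and $(-s,t)$ tight: this forces $B_y=-t$, so Bob's input $y$ is deterministic.
\item $(s,t)$ and $(-s,-t)$ tight: this gives $E_{xy}=-st$ and $sA_x+tB_y=0$, i.e.\ $A_x=-st\,B_y$.
\item Three or more tight constraints at one pair make both $x$ and $y$ deterministic.
\end{itemize}
If $x$ is deterministic with $A_x=-s$, then the constraints with that $s$ read $t(B_y+sE_{xy})\ge0$ for both $t$, so $E_{xy}=A_xB_y$.

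Now suppose $P$ is a feasible point. Define the perturbation $P_\varepsilon$ by
\begin{equation}
A_x\mapsto(1+\varepsilon)A_x,\qquad B_y\mapsto(1+\varepsilon)B_y,
\end{equation}
where this rescaling acts only on non-deterministic inputs (deterministic inputs have $A_x=\pm1$ and stay fixed). The correlators $E_{xy}$ are adjusted pair by pair as follows.
\begin{itemize}
\item No tight constraint: leave $E_{xy}$ unchanged. Strict inequalities survive for small $|\varepsilon|$.
\item Exactly one tight constraint $(s,t)$: set $\delta E_{xy} = -st\,(s\,\delta A_x + t\,\delta B_y)$, which keeps that equality.
\item Two tight constraints of type $(s,t),(-s,-t)$: leave $E_{xy}=-st$ fixed. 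Both equalities survive because $s\,\delta A_x + t\,\delta B_y=\varepsilon(sA_x+tB_y)=0$; this is exactly why uniform scaling is the right direction.
\item A deterministic input paired with anything: set $E_{xy}=A_xB_y$ along the perturbation, which preserves the two tight constraints and keeps the other two nonnegative.
\end{itemize}
Since the perturbation is linear in $\varepsilon$, every tight equality is preserved and every slack inequality stays positive for $|\varepsilon|$ small, in both directions $\pm\varepsilon$. So $P$ is the midpoint of $P_{+\varepsilon}$ and $P_{-\varepsilon}$. If $P$ is extremal this segment must be trivial, which forces $A_x=0$ for every non-deterministic $x$ and $B_y=0$ for every non-deterministic $y$. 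That is precisely the claim.

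The main obstacle is the bookkeeping in the case analysis. We must make sure it is exhaustive: pairs with one deterministic and one non-deterministic input, pairs with both inputs deterministic, and the consistency of the sign relation $A_x=-stB_y$ propagating around cycles. The scaling direction is chosen so that these cycle constraints are homogeneous and hence automatically satisfied. The other step that must be done carefully is verifying that the non-tight constraints at deterministic pairs remain valid after setting $E_{xy}=A_xB_y$.
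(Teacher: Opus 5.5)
The paper does not prove this statement. It quotes the result from Jones--Masanes and Barrett--Pironio and uses it only as a contrast to Theorem~\ref{thm:ext-ptsNS}, so there is no internal argument to compare with. Judged on its own, your proposal is a correct, self-contained proof, and the checks you list as the main obstacle all go through.

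\textbf{Exhaustiveness.} A pair containing a deterministic input always has at least two tight constraints. With $A_x=-s$, the $(s,\pm t)$ inequalities become $\pm t(B_y+sE_{xy})\ge 0$, so both are equalities. Hence your four cases are exhaustive. For two non-deterministic inputs, the tight set is empty, a singleton, or an opposite pair $\{(s,t),(-s,-t)\}$.

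\textbf{Deterministic pairs.} At a deterministic-$x$ pair the remaining two inequalities reduce to $2+2tB_y\ge 0$. This is unchanged if $y$ is deterministic, and strict (hence stable under small $\varepsilon$) if $|B_y|<1$.

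\textbf{Cycles.} The cycle worry is vacuous. The direction $\delta A_x=\varepsilon A_x$, $\delta B_y=\varepsilon B_y$ is prescribed globally rather than solved for, so it satisfies every homogeneous relation $sA_x+tB_y=0$ at once. All $E_{xy}$ updates are linear in $\varepsilon$, so $P$ really is the midpoint of $P_{\pm\varepsilon}$.

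Compared with simply invoking the cited classification, your route buys two things.

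First, one more line recovers the classification itself. If two non-deterministic inputs had no tight constraint between them, $E_{xy}$ alone could be moved by $\pm\eta$. So at a vertex every such pair has a tight constraint, and with $A_x=B_y=0$ this forces $E_{xy}=\pm 1$. That means perfect correlation or anticorrelation with $P_{ab|xy}\in\{0,1/2\}$. Pairs involving a deterministic input are product, $E_{xy}=A_xB_y$. This is the PR-type structure the paper attributes to the cited works.

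Second, it shows why the statement is special to binary outcomes. Rescaling about the uniform point works only because every relation imposed by a tight pattern is homogeneous in the biases; with two outcomes, $1/2+1/2=1$. With three outcomes, zero patterns in a $3\times 3$ table can force relations such as $P_{0|x}+P_{2|y}=1$. In centred coordinates $p-1/3$ this becomes $c_{0|x}+c_{2|y}=1/3$, which is not homogeneous, so the scaling direction is no longer feasible. This fits the Jones--Masanes three-outcome vertex with a marginal $1/4$, and the growth of marginal values in Theorem~\ref{thm:ext-ptsNS}.
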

Jones and Masanes also computed the extremal points of the no-signalling polytope for the Bell scenario $(3,3;3,3)$ with three inputs and three outputs per player, and found an extremal point that had marginals in the set $\{0,1/4,1/2,1\}$ showing that going to the scenario with three outcomes evades the $\{0,1/2,1\}$ marginal characterisation of Theorem \ref{thm:JM05}. 
In an earlier counterpart result, \cite{BLMPPR05} had shown that for $d_x = d_y =2$ and $d_a = d_b = d$, the extremal points are strictly constrained to have marginals in the set $\{0, 1/d, 1\}$. The significance and motivation for these results in quantum information comes from the fact that just as treating the singlet state as a unit of entanglement underpins entanglement resource theory and motivated questions of asymptotic interconversion, multipartite scenarios etc., so too there is an analogous set of unanswered information-theoretic problems involving units of nonlocality.  We now leverage the results proven in this paper to show that if one goes beyond the cases $(2,d;2,d)$ and $(d_x,2; d_y,2)$ to the scenario of $3$ outcomes per player and general $d_x = d_y = d$ then there is no finite set to which the marginals of the extremal behaviors can be constrained for all $d$, i.e., the cardinality of the set containing the marginal probabilities of the extremal non-signalling boxes grows to an infinite set of dense rational values as $d$ grows. 

Specifically, we consider that Alice and Bob perform measurements given by the maximal cliques of size $3$ (triangles) of the $k$-nested Clifton graph $G_k$ obtaining one of three outcomes each. Given that there are $6k+1$ triangles once every maximal clique is completed to a triangle, this corresponds to the Bell scenario $(6k+1, 3; 6k+1,3)$ for $k \geq 1$. Let $\bar{G}_k$ denote the completion of the $k$-nested Clifton graph such that every maximal clique is a triangle. Consider the Bell experiment in which Alice and Bob select a measurement setting corresponding to a triangle $x, y \in \{C_1, C_2, \ldots, C_{6k+1}\}$. Their outcomes $a,b \in \{1,2,3\}$ corresponding to the mutually exclusive vertices within the chosen triangle. Let $\mathcal{P}_{NS}(6k+1,3;6k+1,3)$ denote the convex polytope of correlations $P_{ab|xy}$ satisfying non-negativity and normalization and the affine non-signalling constraints: $\sum_b P_{ab|xy} = \sum_b P_{ab|xy'} \equiv P_{a|x}$ and $\sum_a P_{ab|xy} = \sum_a P_{ab|x'y} \equiv P_{b|y}$.

\begin{theorem}
\label{thm:ext-ptsNS}
    Let $\mathcal{P}_{NS}(6k+1,3;6k+1,3)$ denote the convex polytope of bipartite nonsignaling correlations $P_{ab|xy}$ in the Bell senario in which Alice and Bob select a measurement setting corresponding to a maximum clique $x, y \in \{C_1, C_2, \ldots, C_{6k+1}\}$ in $\bar{G}_k$ and obtain outcomes $a, b \in \{1,2,3\}$. For increasing values of $k$, the marginals $P^{ext}_{a|x}$ and $P^{ext}_{b|y}$ of the extremal points of $\mathcal{P}_{NS}(6k+1,3;6k+1,3)$ populate a set of distinct rational values of increasing cardinality.
\end{theorem}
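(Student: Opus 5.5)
The plan is to build, for each $k$, explicit extremal points of $\mathcal{P}_{NS}(6k+1,3;6k+1,3)$ from extremal points of the no-disturbance polytope $\mathcal{P}_{ND}(\bar{G}_k)$. First I would show that $\mathcal{P}_{ND}(\bar{G}_k)$ has a vertex $f_k$ with $f_k(v_1^{(1)})=1$ and $f_k(v_2^{(1)})=1-2^{-k}$, i.e. one attaining the bound of Thm.~\ref{thm:NestClif-NDval}. The induction in that proof already identifies which constraints are tight. I would fix $f(v_1^{(1)})=1$, which forces $f(v_3^{(1)})=f(v_6^{(1)})=0$. Propagating through the layers gives a recursion: the top pair $(v_1^{(m)},v_2^{(m)})$ takes values $(a_m,b_m)$ and the next layer takes values determined by tightness. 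Solving it yields the values $1/2^{j}$ and $1-1/2^{j}$ for $j\le k$, together with the completion vertices of the triangles. Extremality of $f_k$ would follow by checking that the tight constraints determine it uniquely, which is a rank computation on a chain-structured system and is done layer by layer.

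Second, I would lift $f_k$ to a bipartite no-signalling box by the standard "perfect-correlation" construction. Set $P_{ab|xy}$ so that when $x=y$ the outcomes are perfectly correlated, $P_{aa|xx}=f_k(a)$. When $x\neq y$ share a vertex $v$, the box respects the identification and gives coincidence on $v$. Otherwise the box takes the product, or a suitably correlated joint, consistent with the marginals $f_k$ on each side. The marginals of this box are exactly $P_{a|x}=f_k(a)$. Extremality in $\mathcal{P}_{NS}$ would be argued by contradiction. Any decomposition $P=\lambda P'+(1-\lambda)P''$ preserves the zero pattern, and the perfect correlations on $x=y$ force $P'$ and $P''$ to be such "diagonal" boxes as well. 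Each of them therefore induces a no-disturbance assignment on $\bar{G}_k$, which splits $f_k$ and contradicts its extremality in $\mathcal{P}_{ND}(\bar{G}_k)$.

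Finally, I would collect the marginal values across $k$. Since the vertex for $G_k$ contains the values $\{1/2^j, 1-1/2^j : 1\le j\le k\}$, the set of rational marginals grows with $k$, with cardinality at least $2k$ or so. This gives the claimed increasing cardinality.

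The main obstacle is the extremality transfer in the second step. I must ensure the off-diagonal entries $P_{ab|xy}$ for $x\neq y$ can be chosen without creating freedom for a nontrivial decomposition. If the product choice fails, I would instead use the "copy" construction, in which Bob's box is a relabelled copy of Alice's via a shared global assignment restricted to contexts. That is not available for a contextual $f_k$, though, so the fallback is to verify the extremality directly for small $k$ by linear programming and then argue inductively using the recursive gadget structure.
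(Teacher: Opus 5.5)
\textbf{Where the proposal fails.} Your first step is sound. Fixing $f(v_1^{(1)})=1$ and $f(v_2^{(1)})=1-2^{-k}$ forces every inequality in the induction of Thm.~\ref{thm:NestClif-NDval} to be tight. This gives a unique maximizer $f_k$ of $f(v_2^{(1)})$ on the face $\{f(v_1^{(1)})=1\}$, hence a vertex of $\mathcal{P}_{ND}(\bar{G}_k)$, with values $\{0,1\}\cup\{2^{-j},1-2^{-j}\}_{1\le j\le k}$. The gap is the extremality transfer in step two. From $P=\lambda P'+(1-\lambda)P''$ and inheritance of zeros you correctly get that the marginals $f',f''$ lie in $\mathcal{P}_{ND}(\bar{G}_k)$ and average to $f_k$. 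The vertex property then only gives $f'=f''=f_k$, which is no contradiction. $P'$ and $P''$ can still differ in the blocks $P_{\cdot\cdot|xy}$ with $x\neq y$, because non-signalling fixes those blocks only through their row and column sums. With your product choice they do differ. Already for $k=1$, the disjoint contexts $\{v_3^{(1)},v_4^{(1)},v_5^{(1)}\}$ and $\{v_6^{(1)},v_7^{(1)},v_8^{(1)}\}$ both have marginals $(0,\tfrac12,\tfrac12)$, so the product block has a $2\times 2$ support. Adding $\pm\epsilon$ on $(v_4,v_7),(v_5,v_8)$ and $\mp\epsilon$ on $(v_4,v_8),(v_5,v_7)$ changes no marginal and stays in $\mathcal{P}_{NS}$ with the same zero pattern. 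So the box you build is in general not extremal, and the small-$k$ linear-programming plus induction fallback is not an argument.

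\textbf{How to repair it, and how it compares with the paper.} You do not need $P$ to be extremal. Write $P=\sum_i\lambda_i P_i$ with $P_i$ vertices of $\mathcal{P}_{NS}$. Since $\lambda_iP_i\le P$ entrywise, each $P_i$ inherits the zeros of $P$: synchronous diagonal blocks, and coincidence on shared vertices. Hence its marginal is a no-disturbance assignment $f_i$ on $\bar{G}_k$, with $\sum_i\lambda_i f_i=f_k$. Because $f_k$ is a vertex, $f_i=f_k$ for all $i$, so these are extremal boxes with $2k+1$ distinct rational marginals. Alternatively, choose every off-diagonal block as a vertex of its transportation polytope (acyclic support). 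Such a block is determined by its marginals and support, which makes $P$ itself extremal. The same inheritance step is what the paper's proof uses, implicitly. The paper instead applies it to the super-quantum box built from the optimal representation of Thm.~\ref{thm:optq-nestClif}, whose marginal on $v_2^{(1)}$ is $(k/(k+2))^2$. It then argues by contradiction against a fixed finite $O$ via $f(v_2^{(1)})\le w_q$, which only shows that extremal marginal values accumulate at $1$ as $k$ grows. Once repaired, your route needs no quantum realization. It gives the sharper, explicit conclusion that $\mathcal{P}_{NS}(6k+1,3;6k+1,3)$ has vertices with at least $2k+1$ distinct marginal values for each $k$.
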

\begin{proof}
    Suppose that by contradiction the marginals of all extremal behaviors are constrained to a finite set $O$. We construct a no-signaling box that cannot be decomposed as a convex combination of such behaviors providing a contradiction. 

    The no-signaling box $P$ is constructed from the optimal orthogonal representation in Thm. \ref{thm:optq-nestClif} and the resulting quantum contextual behavior. Specifically let Alice's marginal behavior $P_{a|x}$ and Bob's marginal $P_{b|y}$ be constructed as the probabilities resulting from the projective measurements in \eqref{eq:optq-meq1} and \eqref{eq:optq-genm} on the state $|v_1^{(1)}\rangle = (0, \cos{\theta_k}, \sin{\theta_k})^T$ with $\theta_k = \frac{1}{2} \arcsin{\frac{k}{k+2}}$. Construct the no-signaling behavior $P$ with perfect synchronous correlations between Alice and Bob, specifically $P_{a|x} = P_{b|y}$ for $a=b, x=y$, $P_{ab|xy} = 0$ when $\langle v_{a,x} | v_{b,y} \rangle = 0$. Note that $P_{a|x} = P_{b|y} = 1$ for the input-output pair corresponding to $|v_1^{(1)}\rangle$ and $P_{a|x} = P_{b|y} = \left(\frac{k}{k+2}\right)^2$ for the input-output pair corresponding to $|v_2^{(1)}\rangle$. This gives a super-quantum nonsignaling correlation for which the marginal behavior is a consistent (no-disturbance) empirical model. 

    Now, suppose that the marginal behaviors are contained within a finite set $O = \{0,w_1,w_2, \ldots, w_q, 1\}$ with $0 < w_1 < w_2 < \ldots < w_q < 1$. Then as we have seen in App. \ref{app:HardyO-context}, under the condition that $f(v_1^{(1)}) = 1$ the probability corresponding to $|v_2^{(1)}\rangle$ is at most $w_q$. For $k = \lfloor \frac{2 \sqrt{w_q}}{1 - \sqrt{w_q}} \rfloor + 1$, we have that $w_q < \left(\frac{k}{k+2}\right)^2$ so that the no-signaling behavior $P^*$ cannot belong to the convex hull of behaviors with marginal probabilities in $O$. We have reached a contradiction, therefore, the marginals of the extremal points of $\mathcal{P}_{NS}(6k+1,3;6k+1,3)$ must populate an increasing set of distinct rational values growing infinitely with $k$.   
\end{proof}
Thm. \ref{thm:ext-ptsNS} establishes a characteristic of the extremal non-local points of the no-signalling polytope. It is known that quantum theory does not allow a realization of such extremal points \cite{SK25, RTHH16}. Extremal points are highly valuable in device-independent (DI) cryptography \cite{RTHH16, our-4} since they ensure that the devices held by honest parties are decoupled from that of any adversary. As such, one motivation for deriving state-independent contextuality proofs for ruling out $O$-valued models is to convert the corresponding quantum correlations to a bipartite Bell scenario via \cite{RW04}. This would realize quantum correlations that are close to extremal nonlocal vertices of the no-signalling polytope thereby providing a route to achieving security against this powerful class of adversaries in DI tasks. 

Viewed as a single system result, we also observe that the extremal points of the no-disturbance polytope $\mathcal{P}_{ND}(G_k)$ form a set of distinct rational values of increasing cardinality, and that the optimal quantum correlations in our tests get increasingly close to an extremal nonclassical point of the no-disturbance polytope as the number of measurements $6k+1$ grows. This provides a pathway towards achieving security of the corresponding semi-device-independent protocol for randomness expansion and amplification \cite{LR25} against the most general consistent adversaries, an application which we explicitly formulate in forthcoming work.

\subsection{A connection to finite many-valued logical models }
\label{app:many-valuedlogic-O}
In 1936, Birkhoff and von Neumann in their breakthrough paper "The logic of quantum mechanics" \cite{BvN36} argued that the structure of a set of dichotomic (`yes-no') propositions about properties of quantum objects, usually termed `quantum logic', differs from the structure of a set of such propositions pertaining to classical objects in which case it is a Boolean algebra. Since an experimental test of any such dichotomic proposition results in either a true or false outcome, quantum logic in the Birkhoff-von Neumann sense is generally treated as a $2$-valued non-classical logic. Classical theory is deterministic, any statement concerning future events is either true or false already at the moment of stating it. On the other hand, quantum theory is fundamentally indeterministic since the results of future experiments can be predicted only probabilistically, and this is a fundamental fact not caused by a lack of knowledge of some hidden parameters, i.e., quantum probabilities are 'ontic' not 'epistemic'. 

At the same time, other non-classical logics were studied,  among them being various kinds of many-valued logics. In particular, {\L}ukasiewicz \cite{Lukasiewicz70} a founding father of the theory of many-valued logics, argued that statements about future non-certain events belong to the domain of many-valued logic. He identified non-classical truth-values (different from $0$ or $1$) with probabilities that the considered statements will turn out to be true in future. For instance, if a projector is assigned an ontic probability of $1/2$, then in this framework the truth value of the proposition is $0.5$. Of course, once the projector is measured we see a result of true or false, which results in a collapse of the many-valued logic onto $2$-valued logic \cite{Pykacz94}. The question of whether quantum logic is related to the many-valued logic was investigated and settled in a neat result in \cite{Pykacz94} where a faithful isomorphism was established between Birkhoff-von Neumann quantum logic and a specific infinite-valued {\L}ukasiewicz logic. It must be noted that the assignments in many-valued logic are not natively treated as a probability but rather as an objective truth-value assigned to the event. To see the equivalence established in \cite{Pykacz94}, we elaborate with some mathematical details.  

The standard model of quantum logic \cite{Svozil98} is the family $\mathcal{L}(\mathcal{H})$ of closed linear subspaces of a Hilbert space or equivalently the family of operators of orthogonal projections onto these subspaces. These families are orthomodular lattices partially ordered, by the set-theoretic inclusion and by the relation $P_1 \leq P_2$ iff $P_1 P_2 = P_1$. Formally, by a quantum logic is meant an orthocomplemented $\sigma$-orthocomplete orthomodular poset \cite{Pykacz94}, i.e., a partially ordered set $L$ that contains the smallest element $0$ and the greatest element $I$ and an orthocomplementation map $\perp: L \to L$ satisfying the following conditions exists
\begin{itemize}
\item $(v^{\perp})^{\perp} = v$, 
\item $v \leq w \implies w^{\perp} \leq v^{\perp}$,
\item The greatest lower bound (meet) $v \cap v^{\perp}$ and the least upper bound (join) $v \vee v^{\perp}$ with respect to the give partial order exist in $L$ and $v \cap v^{\perp} = 0$ and $v \vee v^{\perp} = I$.
\end{itemize}
Furthermore, the $\sigma$-orthocompleteness condition holds:
\begin{itemize}
    \item If $v_i \leq v^{\perp}_j$ for $i \neq j$ (such elements are orthogonal and denoted $v_i \perp v_j$) then the join $\bigvee_i v_i$ exists in $L$.
    \item If $v \leq w$ then $w = v \vee (v^{\perp} \cap w) = v \vee (v \vee w^{\perp})^{\perp}$.
\end{itemize}
As per Birkhoff and von Neumann \cite{BvN36}, elements of a quantum logic represent experimentally verifiable dichotomic propositions about a physical system. Elements $v, w \in L$ are compatible if there exist in $L$ pairwise orthogonal elements $v_1, w_1, u$ such that $v = v_1 \vee u$ and $w = w_1 \vee u$. Compatibility of elements in a Hilbertian quantum logic $\mathcal{L}(\mathcal{H})$ means that the projectors are commuting. 

A probability measure (also called a state) on a quantum logic $L$ is a mapping $s: L \to [0,1]$ such that 
\begin{itemize}
    \item $s(I) = 1$,
    \item $s(\bigvee_i v_i) = \sum_i s(v_i)$ for any sequence of pairwise orthogonal elements of $L$. 
\end{itemize}
In a Hilbertian quantum logic, by Gleason's Theorem these probability measures are in one-to-one correspondence with density operators on the Hilbert space. 

Maczy\'{n}ski \cite{Maczynski73} proved a functional representation theorem that formed the bridge by which one can move from Birkhoff-von Neumann quantum logic in its order-theoretic form above to its fuzzy set or many-valued representation. Pykacz \cite{Pykacz94} built upon this theorem to write BvN quantum logic as infinite-valued logic. In this framework, set-theoretic complementation is related to logical negation, intersection of sets is related to conjunction and union of sets is related to disjunction as per usual notions. In the case of fuzzy sets and many-valued logics though, the specific forms of these expressions depend on the model of negation, conjunction and disjunction. We adopt the original {\L}ukasiewicz negation: 
\begin{equation}
\label{eq:Lneg}
    \tau(\neg p) = 1 - \tau(p)
\end{equation} 
where $\tau(p) \in [0,1]$ is the truth value of proposition $p$, and as per {\L}ukasiewicz this number is identified with the probability that the proposition $p$ will turn out to be true when a suitable experiment is performed. The conjunction and disjunction are the logical counterparts of the fuzzy set operations and are given as
\begin{eqnarray}
\label{eq:Lcondis}
    \tau(p \sqcap q) &=& \max\{ \tau(p) + \tau(q) - 1, 0 \}, \nonumber \\
    \tau(p \sqcup q) &=& \min\{\tau(p) + \tau(q), 1 \}.
\end{eqnarray}
With these, we are in position to state the equivalence between quantum logic and infinite many-valued logic established in \cite{Pykacz94, Maczynski73} as
\begin{theorem}[\cite{Pykacz94, Maczynski73}]
\label{thm:Pykacz}
    Any quantum logic $L$ with an ordering set of probability measures $S$ can be isomorphically represented as a family $\Lambda(S)$ of propositional functions defined on $S$ and satisfying the following conditions:
    \begin{enumerate}
        \item $\Lambda(S)$ contains the always-false propositional function $f$.
        \item $\Lambda(S)$ is closed with respect to {\L}ukasiewicz negation \eqref{eq:Lneg}.
        \item $\Lambda(S)$ is closed with respect to {\L}ukasiewicz disjunction of pairwise exclusive propositional functions, i.e., if $v(\cdot)_i \sqcap v(\cdot)_j = f$ for all $i \neq j$ then $\bigsqcup_i v(\cdot)_i \in \Lambda(S)$.
        \item The always-false propositional function $f$ is the only propositional function in $\Lambda(S)$ that excludes itself, i.e., for any $v(\cdot) \in \Lambda(S)$, if $v(\cdot) \sqcap v(\cdot) = f$, then $v(\cdot) = f$.
    \end{enumerate}
    Conversely, any family of many-valued propositional functions defined on a common domain $D$ and satisfying the conditions above is a quantum logic in the BvN sense when we identify propositional functions that assume the same truth value for every argument in $D$. This family is partially ordered as $v(\cdot) \leq w(\cdot)$ iff $\tau(v(x)) \leq \tau(w(x))$ for all $x \in D$ with {\L}ukasiewicz negation as orthocomplementation, orthogonality of elements coinciding with their exclusivity, and an ordering set of probability measures being generated according to $s_x(v(\cdot)) = \tau(v(x))$ for all $x \in D$.
\end{theorem}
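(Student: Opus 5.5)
The plan is to verify the two directions separately. The forward direction is a routine translation of the measure-theoretic properties of states. The converse is the harder part: there one must recover orthocompleteness and orthomodularity from the four closure conditions alone.

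\emph{Forward direction.} Let $L$ be a quantum logic with an ordering set $S$ of probability measures. Ordering means that $s(v)\le s(w)$ for all $s\in S$ implies $v\le w$. To each $v\in L$ associate the propositional function $v(\cdot):S\to[0,1]$ with truth value $\tau(v(s)) = s(v)$, and let $\Lambda(S)$ be the image.

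First, this map is an order isomorphism onto its image.
\begin{itemize}
\item If $v\le w$, orthomodularity gives $w = v\vee(v\vee w^\perp)^\perp$, an orthogonal join. Additivity then yields $s(w)=s(v)+s\big((v\vee w^\perp)^\perp\big)\ge s(v)$.
\item The converse implication is exactly the ordering property, and antisymmetry of the order gives injectivity.
\end{itemize}

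Next, the four conditions follow.
\begin{itemize}
\item The element $0$ maps to the constant $0$, which is the always-false function $f$ (condition 1).
\item Since $v\vee v^\perp=I$ is an orthogonal join, $s(v^\perp)=1-s(v)$. So $v^\perp$ maps to the {\L}ukasiewicz negation \eqref{eq:Lneg} (condition 2).
\item Exclusivity coincides with orthogonality: $v(\cdot)\sqcap w(\cdot)=f$ iff $s(v)+s(w)\le 1=s(w)+s(w^\perp)$ for all $s$. By the ordering property this holds iff $v\le w^\perp$.
\item For a pairwise orthogonal family, $\sigma$-orthocompleteness gives the join, and additivity gives $s(\bigvee_i v_i)=\sum_i s(v_i)\le 1$. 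Hence the truncation in \eqref{eq:Lcondis} is inactive, and $\bigsqcup_i v_i(\cdot)$ is the image of $\bigvee_i v_i$ (condition 3).
\item If $v\sqcap v=f$, then $v\le v^\perp$. Thus $v\le v\wedge v^\perp=0$, which is condition 4.
\end{itemize}

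\emph{Converse direction.} Let $\Lambda$ be a family of functions on $D$ satisfying conditions 1--4, ordered pointwise, with $\perp$ given by $1-\tau$. Involutivity and order reversal of $\perp$ are immediate.

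For $v\wedge v^\perp=0$ and $v\vee v^\perp=I$, proceed as follows.
\begin{itemize}
\item Since $\tau(v)+\tau(1-v)=1$, the elements $v$ and $v^\perp$ are exclusive.
\item Condition 3 therefore puts $v\sqcup v^\perp\equiv 1$ into $\Lambda$. This function is the greatest element $I=f^\perp$, so it is the join.
\item The meet then follows by De Morgan duality.
\end{itemize}
One also checks that the states $s_x(v)=\tau(v(x))$ order $\Lambda$ by definition of the pointwise order.

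\emph{Main obstacle.} The crux is $\sigma$-orthocompleteness, orthomodularity, and additivity of $s_x$. All three reduce to the following claim.

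\emph{Claim.} For a pairwise exclusive family in $\Lambda$, $\sum_i\tau(v_i(x))\le 1$ for every $x$. Consequently $\bigsqcup_i v_i$ equals the untruncated sum, and this sum is the least upper bound.

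My first attempt at the least-upper-bound part is the following. If $u$ is an upper bound of pairwise exclusive $v_1,v_2$, then $u^\perp$ is exclusive with each $v_i$, since $1-\tau(u)+\tau(v_i)\le 1$. So $\{u^\perp,v_1,v_2\}$ is pairwise exclusive. Granting the claim for this triple gives $\tau(v_1)+\tau(v_2)\le\tau(u)$, which is exactly minimality of the join.

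To prove the claim I would argue by contradiction, using condition 4. Suppose the sum exceeds $1$ at some $x$. I would try to form, via conditions 2 and 3, an element $w=\big(\bigsqcup_{i\ne j}v_i\big)^\perp$. The goal is to show that $v_j\le w$ and $v_j\le w^\perp$ both hold, which forces $v_j\sqcap v_j=f$ and hence $v_j=f$ by condition 4. This contradicts the positive sum. I expect this step to need an induction on the size of the family, following Pykacz's argument.

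Once the claim is in hand, orthomodularity follows. For $v\le w$, the element $(v\vee w^\perp)^\perp$ has truth value $\tau(w)-\tau(v)$, and joining it with $v$ recovers $w$. Additivity of each $s_x$ is then immediate.
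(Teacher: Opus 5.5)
The paper does not prove this statement; it imports it from Pykacz and Maczy\'{n}ski, so your argument has to stand on its own. Your forward direction is fine. The converse, however, has a genuine gap exactly at the Claim you single out.

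\textbf{The Claim.} Your proposed route is circular. Once the $(n-1)$-fold disjunction is known to be untruncated, $w=1-\sum_{i\ne j}\tau(v_i)$. So ``$v_j\le w$'' is literally the inequality $\sum_i\tau(v_i)\le 1$ you are trying to prove; your contradiction hypothesis says precisely that it fails at $x$. Moreover, $v_j$ is the wrong candidate for the self-exclusive witness. Your hypothesis only concerns the point $x$ and says nothing about $v_j$ elsewhere, where it may exceed $1/2$.

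What is missing is an element of $\Lambda$ that vanishes off the region where the sum exceeds $1$. The triple case can be done as follows.
\begin{enumerate}
\item Let $v_1,v_2,v_3$ be pairwise exclusive and put $\Sigma=\sum_{i=1}^3\tau(v_i)$. Adding the three pairwise inequalities gives $\Sigma\le 3/2$.
\item Condition 3 gives $t=v_1\sqcup v_2\sqcup v_3$ with value $\min(\Sigma,1)$. Then $t^\perp$ has value $\max(1-\Sigma,0)$ and is exclusive with $v_1$ and $v_2$ (check the regions $\Sigma\le1$ and $\Sigma>1$ separately).
\item So $q=t^\perp\sqcup v_1\sqcup v_2\in\Lambda$. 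It takes the value $1-\tau(v_3)$ where $\Sigma\le 1$, and $\tau(v_1)+\tau(v_2)$ where $\Sigma>1$.
\item Hence $q^\perp\le v_3$, so $q^\perp$ and $v_3^\perp$ are exclusive. The element $(q^\perp\sqcup v_3^\perp)^\perp\in\Lambda$ has value $\max(\Sigma-1,0)\le 1/2$.
\item It is therefore self-exclusive, so it equals $f$ by condition 4, i.e.\ $\Sigma\le 1$.
\end{enumerate}
For $n\ge 4$, induct by merging. By hypothesis $a=v_1\sqcup\cdots\sqcup v_{n-2}$ is an untruncated sum. Then $\{a,v_{n-1},v_n\}$ is a pairwise exclusive triple, since each pairwise sum is the sum of a subfamily of size at most $n-1$. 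Countable families follow from their finite partial sums.

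\textbf{The join $v\vee v^\perp$.} There is a second, smaller error of the same kind. That $v\sqcup v^\perp\equiv 1$ is the greatest element of $\Lambda$ does not make it the join of $v$ and $v^\perp$; you must show that every upper bound equals $I$. For example, on a one-point domain $\Lambda=\{0,1/2,1\}$ satisfies conditions 1--3 and contains $1/2\sqcup 1/2=1$. Yet the join of $1/2$ and $(1/2)^\perp$ in the pointwise order is $1/2$.

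Condition 4 is exactly what repairs this. An upper bound $u$ of $v$ and $v^\perp$ has $\tau(u^\perp)\le\min(\tau(v),1-\tau(v))\le 1/2$, so $u^\perp=f$. Equivalently, apply the triple case to $\{u^\perp,v,v^\perp\}$, as in your least-upper-bound argument.

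With these two repairs, the rest of your converse goes through as you describe: orthomodularity, additivity of $s_x$, and the ordering property.
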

In words, we have the following equivalence. On the left-hand side (quantum logic), we have projectors $P$ that elements of the quantum logic and represent dichotomic experimental propositions, and density operators $\rho$. This is translated into many-valued logic with each projector $P$ being mapped to a propositional function $\tilde{P}(\cdot): S \to [0,1]$ where $S$ parametrizes the states. The truth value $\tilde{P}(x)$ at a state $x$ corresponds to the probability $s_x(P) = \text{Tr}[\rho_x P]$. The full family of such propositional functions satisfies the conditions stated in Thm. \ref{thm:Pykacz}, and {\L}ukasiewicz operations on these functions mirror the quantum logic operations. The conjunction and disjunction are partial, i.e., one can only conjoin/disjoin propositions when the corresponding projectors commute (are orthogonal in the abstract logic). This reformulation shows that quantum propositions about future measurement outcomes are naturally many-valued and state-dependent, with truth degrees in $[0,1]$ interpreted as objective probabilities of the proposition becoming true upon measurement. This equivalence theorem therefore provides a rigorous many-valued logical semantics for quantum theory.

As seen above, it is recognized in the literature that finite-valued logics are mathematically untenable for quantum theory as a corollary of Gleason's theorem. However, as we have seen Gleason's theorem relies intrinsically on the continuity of the entire sphere, and it is preferable to have an experimentally amenable test that queries only a strictly finite subset of propositions (a finite partial Boolean algebra) to rule out finite-valued logics. We therefore formalize the statement that the finite partial Boolean algebra arising from the nested Clifton graph and the quantum violation of the corresponding noncontextuality inequality serves to rule out finite-valued logical models.

Let $G_k = (V_k, E_k)$ denote the nested Clifton graph seen as the exclusivity graph of the finite experimental events corresponding to the rank-$1$ projectors with orthogonal projections corresponding to mutually exclusive propositions. The graph generates a finite partial Boolean algebra $\mathcal{B}(G) \subset \mathcal{L}(\mathcal{H})$. Let $O$ be a finite, ordered set of numbers representing the allowed ontological degrees of truth, $O = \{0, w_1, w_2, \ldots, w_q, 1\}$ with $0 < w_1 < w_2 < \ldots < w_q < 1$. 

\begin{definition}
    A finite-valued logical model is a mapping $s: \mathcal{B}(G) \to O$,  subject to the exclusivity and normalization conditions \eqref{eq:Lneg}, \eqref{eq:Lcondis}. For any maximal clique of mutually exclusive propositions $\{P_1, \ldots, P_d\}$ in $G$, the valuation is finitely additive:
    \begin{eqnarray}
        s\left(\bigvee_{i=1}^d P_i \right) = \sum_{i=1}^n s(P_i) \leq 1.
    \end{eqnarray}
\end{definition}
From Thms. \ref{thm:NestClif-NDval} and \ref{thm:optq-nestClif} in App. \ref{app:HardyO-context}, we see that for any such $O$ with a maximum non-certain truth value $w_q < 1$, one can find a $k$-nested Clifton graph corresponding to a threshold $k_{th}$ given by the smallest positive integer such that
\begin{equation}
    \left(\frac{k}{k+1} \right)^2 > w_q \implies k > \frac{2 \sqrt{w_q}}{1-\sqrt{w_q}},
\end{equation}
for which under $f(v_1^{(1)}) = 1$ we have the quantum value being strictly greater than the truth value $w_q$. That is, for $k_{th} = \lceil \frac{2 \sqrt{w_q}}{1-\sqrt{w_q}} \rceil$, we are either forced to violate the orthomodularity of the $6k+2$ propositions or abandon the finite-valued logic. 

\subsubsection{Vorob'ev theorem and the connection with the Treewidth of the Orthogonality Graph}
The Fine Theorem \cite{Fine82} and Vorob'ev Theorem \cite{Vorobev62} are seminal results in quantum foundations that address the marginal problem: given a set of marginal probability distributions, under what conditions do they come from a single, global joint probability distribution? In the Bell scenario, the Fine Theorem shows that observables admits a classical joint probability distribution (a local hidden-variable model) if and only if it satisfies a complete set of Bell inequalities. In a general contextuality scenario, the Vorob'ev Theorem \cite{Vorobev62} states that a system of observables admits a classical joint probability distribution (i.e., the measurements do not generate contextuality regardless of the quantum state on which they are performed) if and only if its compatibility graph is chordal \cite{RSKK12}. Here the compatibility graph is the undirected graph where vertices represent observables and an edge between two observables denotes commutation, and a chordal graph is a graph where every cycle of length $4$ or greater has a chord - an edge connecting two non-consecutive vertices of the cycle. 

In this context, it is natural to investigate properties of the $k$-nested Clifton graph $G_k$ which rules out $O$-valued models (the analog of the Fine theorem in this case is that the behavior lies within the convex polytope given by the convex hull of all $O$-valued probability assignments). Of special interest is the treewidth of the graph \cite{Diestel05}, which measures how close the graph is to being a tree (and hence fully noncontextual), we recall its formal definition in terms of chordal completions here.
\begin{definition}[Treewidth via Chordal Completion]
Let $G = (V, E)$ be a simple, undirected graph. A chordal graph $H = (V, E_H)$ is a chordal completion of $G$ if $E \subseteq E_H$. Let $\mathcal{H}(G)$ denote the set of all valid chordal completions of $G$. The treewidth of $G$, denoted by $\mathrm{tw}(G)$, is defined as:
\begin{equation}
\mathrm{tw}(G) = \min_{H \in \mathcal{H}(G)} \omega(H) - 1,
\end{equation}
where $\omega(H)$ is the clique number of $H$.
\begin{lemma}
    The Treewidth of the $k$-nested Clifton graph $G_k$ is $3$, i.e., $\mathrm{tw}(G_k) = 3$ for all $k$. 
\end{lemma}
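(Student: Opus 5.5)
The plan is to prove the two inequalities $\mathrm{tw}(G_k)\ge 3$ and $\mathrm{tw}(G_k)\le 3$ separately. For the lower bound I use that treewidth does not increase under taking minors and that $\mathrm{tw}(K_4)=3$. For the upper bound I construct an explicit tree decomposition of width $3$, meaning bags of size $4$. Equivalently, the union of cliques on the bags gives a chordal completion $H$ with $\omega(H)=4$, matching the definition via chordal completions. Throughout I use the layer structure of $G_k$. In layer $m$ the edges are $v_1v_3$, $v_1v_6$, $v_2v_4$, $v_2v_7$, and the two triangles $\{v_3,v_4,v_5\}$ and $\{v_6,v_7,v_8\}$. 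The vertices $v_5^{(m)},v_8^{(m)}$ are identified with $v_1^{(m+1)},v_2^{(m+1)}$. The edge $v_5v_8$ is present only in the last layer $m=k$. This accounts for the $10k+1$ edges.

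\emph{Lower bound.} It suffices to exhibit a $K_4$ minor inside a single layer, since layer $m=1$ is a subgraph of $G_k$ (with the chain $v_5^{(1)}\cdots v_8^{(1)}$ through deeper layers in place of the edge $v_5v_8$ when $k>1$).
\begin{enumerate}
\item Contract $v_1$ into $v_3$ and $v_2$ into $v_4$. This produces the edges $v_3v_6$ and $v_4v_7$.
\item Connect $v_5$ to $v_8$. For $k=1$ this is the edge $v_5v_8$. For $k>1$, contract the path $v_5^{(1)}=v_1^{(2)}$, $v_3^{(2)}$, $v_4^{(2)}$, $v_2^{(2)}=v_8^{(1)}$. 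Together with the two triangles this gives a triangular prism.
\item Contract the triangle $\{v_6,v_7,v_8\}$ to a single vertex. It is adjacent to each of $v_3,v_4,v_5$, which form a triangle, so we obtain $K_4$.
\end{enumerate}
Hence $\mathrm{tw}(G_k)\ge 3$ for every $k$.

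\emph{Upper bound.} For each layer $m$ I take the five bags, writing $v_i$ for $v_i^{(m)}$:
\[
B_1=\{v_2,v_3,v_4,v_5\},\;
B_2=\{v_1,v_2,v_3,v_5\},\;
B_3=\{v_1,v_2,v_5,v_6\},\;
B_4=\{v_2,v_5,v_6,v_7\},\;
B_5=\{v_5,v_6,v_7,v_8\}.
\]
Within the layer they are arranged as the path $B_1-B_2-B_3-B_4-B_5$. Layers are chained by joining $B_5^{(m)}$ to $B_2^{(m+1)}$. This works because $B_2^{(m+1)}\supseteq\{v_1^{(m+1)},v_2^{(m+1)}\}=\{v_5^{(m)},v_8^{(m)}\}$. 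The result is a tree.

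It remains to check the tree-decomposition axioms.
\begin{enumerate}
\item Edge coverage: $v_3v_4$, $v_3v_5$, $v_4v_5$, $v_2v_4$ lie in $B_1$; $v_1v_3$ lies in $B_2$; $v_1v_6$ lies in $B_3$; $v_2v_7$ lies in $B_4$; and $v_6v_7$, $v_6v_8$, $v_7v_8$, as well as the final edge $v_5^{(k)}v_8^{(k)}$, lie in $B_5$.
\item Running intersection within a layer: $v_1\in B_2,B_3$; $v_2\in B_1,\dots,B_4$; $v_3\in B_1,B_2$; $v_5\in B_1,\dots,B_5$; $v_6\in B_3,B_4,B_5$; $v_7\in B_4,B_5$. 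Each of these is a connected subpath.
\item Running intersection across layers: $v_5^{(m)}=v_1^{(m+1)}$ occupies $B_1^{(m)},\dots,B_5^{(m)},B_2^{(m+1)},B_3^{(m+1)}$. The vertex $v_8^{(m)}=v_2^{(m+1)}$ occupies $B_5^{(m)}$ and $B_1^{(m+1)},\dots,B_4^{(m+1)}$, which is connected through the edge $B_5^{(m)}B_2^{(m+1)}$. No other vertex is shared between layers.
\end{enumerate}
All bags have size $4$, so $\mathrm{tw}(G_k)\le 3$, and combined with the lower bound $\mathrm{tw}(G_k)=3$.

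The main obstacle is this connectivity bookkeeping across the identifications $v_5^{(m)}\equiv v_1^{(m+1)}$ and $v_8^{(m)}\equiv v_2^{(m+1)}$. In particular, $B_1^{(m+1)}$ hangs off $B_2^{(m+1)}$ in the path order, so I must confirm that $v_2^{(m+1)}$ still sits in a connected subtree. It does, because $B_1^{(m+1)}$ is adjacent to $B_2^{(m+1)}$ in the tree.
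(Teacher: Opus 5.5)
Your proof is correct, and it takes a different route from the paper. The paper treats $G_1$ by suppressing the degree-$2$ vertices $v_1^{(1)},v_2^{(1)}$ to obtain the triangular prism, and quotes that such suppression preserves treewidth once it is at least $2$. It then passes to $G_k$ by regarding consecutive layers as glued along a ``virtual'' edge $v_5^{(m)}v_8^{(m)}$, and invokes the fact that the treewidth of a clique-sum is the maximum over the summands. You instead give a direct $K_4$-minor for the lower bound and one explicit width-$3$ tree decomposition of all of $G_k$ for the upper bound.

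The paper's argument is shorter and modular: it would apply unchanged to any chain of gadgets glued at two vertices. As written, however, the clique-sum step only controls the supergraph obtained by adding the virtual edges. In that supergraph the summands for $m\ge 2$ are Clifton bugs with the extra edge $v_1v_2$, so the step really yields only $\mathrm{tw}(G_k)\le 3$.

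For $k>1$ the lower bound is not automatic. Layer $1$ of $G_k$ lacks the edge $v_5^{(1)}v_8^{(1)}$, and on its own it has treewidth $2$. Your contraction of the path $v_1^{(2)}v_3^{(2)}v_4^{(2)}v_2^{(2)}$ to recover that edge, and hence the prism and then $K_4$, is exactly the step that closes this. Your explicit bags also make the upper bound checkable without outside results. Incidentally, they also cover the virtual edges ($v_1,v_2\in B_2$ and $v_5,v_8\in B_5$), so they certify the paper's summands have width at most $3$ as well.
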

\begin{proof}
We first focus on $G_1$ and  consider the graph formed by suppressing the degree-2 outer vertices $v_1^{(1)}$ and $v_2^{(1)}$, and adding a virtual edge between their neighbors: $\{v_3^{(1)}, v_6^{(1)}\}$ and $\{v_4^{(1)}, v_7^{(1)}\}$. The resulting 6-vertex core is isomorphic to the triangular prism graph, which is seen to have a treewidth of $3$ by explicit tree decomposition. Suppressing degree-2 vertices is known to not change the treewidth of graphs where $\mathrm{tw} \ge 2$, so that $\mathrm{tw}(G_1) = 3$.

We now proceed to consider the general graph $G_k$. The operation of gluing layer $m$ to layer $m+1$ at vertices $v_5$ and $v_8$ is a graph-theoretic clique-sum when we assume a virtual edge between the gluing vertices. A fundamental theorem of tree decompositions \cite{Lovasz06} states that the treewidth of a clique-sum of two graphs is simply the maximum of the treewdiths of the individual graphs (the individual layers which are all isomorphic to the Clifton bug in our case). Therefore, $\mathrm{tw}(G_k) = \max(\mathrm{tw}(C^{(1)}), \dots, \mathrm{tw}(C^{(k)})) = 3$.
\end{proof}

\end{definition}
In this regard, it is worth noting that Vorob'ev theorem states that graphs with treewidth equal to one are all noncontextual (the corresponding behaviors can be explained by $O = \{0,1\}$-valued models and their convex mixtures). One might expect that graphs with increasing treewidth would be required to rule out more general $O$-valued models, however we observe that the $k$-nested Clifton graph has only a small constant treewidth defying this expectation. 

\section{Gleason Theorem rules out consistent probability assignments from finite subsets of $[0,1] \setminus \{1/d\}$}
\label{sec:Gleason}
Here, we show the explicit connection between the Gleason theorem \cite{Gleason57} and the Kochen-Specker theorem \cite{KS67} via Logical Compactness \cite{Enderton01}. Informally, Gleason's theorem establishes that under the constraint of orthogonal additivity, the Born rule is forced, i.e., every countably additive state has the form $\mu(|\psi\rangle) = \langle \psi | \rho | \psi \rangle$ for a unique density operator $\rho$ and any unit vector $|\psi\rangle$. This map $|\psi \rangle \to \langle \psi | \rho | \psi \rangle$ is continuous and descends to the projective space which is connected. A $\{0,1\}$-valued continuous function on the connected space is just a constant. One can readily see that a constant $0$ assignment to all unit vectors implies that the sum over any orthonormal basis is $0 \neq 1$, and a constant $1$ assignment implies that the sum over any orthonormal basis is $\text{dim}(\mathcal{H}) > 1$. In either case, we obtain a contradiction so that no $\{0,1\}$-valued assignment is possible when one considers the entire set of unit vectors.  

Here, we formulate the probability assignments from finite $\mathcal{O} \subset [0,1]\setminus \{1/d\}$ as a satisfiability problem and use this to show analogously that the Gleason theorem rules out consistent probability assignments from $\mathcal{O}$ and consequently convex combinations thereof. 

Let \(\mathcal{H}\) be a complex Hilbert space with \(\dim(\mathcal{H}) = d \ge 3\). The projective space \(\mathbb{P}(\mathcal{H})\) is the set of all 1-dimensional linear subspaces (rays) of \(\mathcal{H}\). Equivalently, \(\mathbb{P}(\mathcal{H})\) is bijectively mapped to the set of all rank-1 orthogonal projection operators on \(\mathcal{H}\):
\[\mathbb{P}(\mathcal{H}) \cong \{ P \in \mathcal{B}(\mathcal{H}) \mid P = P^\dagger = P^2, \, \text{Tr}(P) = 1 \},\]
where \(P = \vert{}\psi\rangle\langle\psi\vert{}\) for \(\vert{}\psi\rangle \in \mathcal{S}(\mathcal{H})\). \(\mathbb{P}(\mathcal{H})\) is equipped with the topology induced by the operator norm \(\Vert{}P - Q\Vert{}\). Under this topology, \(\mathbb{P}(\mathcal{H})\) is a compact and connected topological space.

\begin{definition}
A full orthonormal frame (or basis) \(\mathcal{F}(\mathcal{H})\) is a subset of \(\mathbb{P}(\mathcal{H})\) containing exactly \(d\) elements:
\[B = \{P_1, P_2, \dots, P_d\} \subset \mathbb{P}(\mathcal{H}),\]
such that the operators are mutually orthogonal and sum to the identity operator:
$P_i P_j = \delta_{ij} P_i \quad \forall i,j \in \{1, \dots, d\}$ and  $\sum_{i=1}^d P_i = I$. 
The set of all such full orthonormal frames is denoted by \(\mathcal{F}(\mathcal{H})\).
\end{definition}

\begin{definition}
Let \(\Omega \subseteq \mathbb{R}\). A function \(f: \mathbb{P}(\mathcal{H}) \to \Omega\) is defined as a frame function of weight \(W\) if for every full orthonormal frame \(B \in \mathcal{F}(\mathcal{H})\), the sum of the function values over the elements of \(B\) equals a constant \(W \in \mathbb{R}\):
\[\sum_{P \in B} f(P) = W \quad \forall B \in \mathcal{F}(\mathcal{H}).\]
\end{definition}
Gleason's Theorem \cite{Gleason57} is the statement that in dimensions greater than two, essentially the only valid frame functions arise via the Born rule. Formally, it is stated as follows.
\begin{theorem}[Gleason's Theorem \cite{Gleason57}]\label{thm:gleason}
Let \(\mathcal{H}\) be a real or complex Hilbert space of \(\dim(\mathcal{H}) \ge 3\). Let \(f: \mathbb{P}(\mathcal{H}) \to [0, \infty)\) be a non-negative frame function of weight \(W\). Then there exists a unique positive semi-definite, self-adjoint, trace-class operator \(\rho\) acting on \(\mathcal{H}\) such that:
\[f(P) = \text{Tr}(\rho P) \quad \forall P \in \mathbb{P}(\mathcal{H}),\]
where the trace of the operator satisfies \(\text{Tr}(\rho) = W\).
\end{theorem}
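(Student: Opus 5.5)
The plan is to follow the classical route: reduce to the three-dimensional real case, prove that a non-negative frame function there is a quadratic form, and then assemble the general $\rho$ by polarization. \textbf{Step 1 (reduction).} Given $f$ on $\mathbb{P}(\mathcal{H})$ of weight $W$, restrict it to the rays of any three-dimensional subspace $\mathcal{K}\subset\mathcal{H}$. If $\dim\mathcal{H}>3$, fix an orthonormal completion $B'$ of $\mathcal{K}$ in $\mathcal{H}$. Then $f|_{\mathbb{P}(\mathcal{K})}$ is a non-negative frame function of weight $W-\sum_{P\in B'}f(P)$. This weight is independent of the chosen frame of $\mathcal{K}$, because the complement $B'$ is held fixed. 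For complex $\mathcal{H}$ one further restricts to real three-dimensional subspaces, namely the real spans of three orthonormal vectors. It therefore suffices to show: \emph{every non-negative frame function on the unit sphere $S^2\subset\mathbb{R}^3$ is of the form $x\mapsto\langle x,Tx\rangle$ for a symmetric $T$.}

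\textbf{Step 2 (continuity in $\mathbb{R}^3$).} I expect this to be the main obstacle. A non-negative frame function is bounded, since $0\le f\le W$, so $\sup f$ and $\inf f$ are finite. First, I would rotate so that $f$ is close to its infimum at the north pole $n$. Then I would use Gleason's great-circle argument on the upper hemisphere to show that $f(x)$ is close to $f(n)$ for all $x$ near $n$. The two key facts are these. On any great circle through a point, $f(x)+f(x^{\perp})$ is constant, where $x^{\perp}$ is the orthogonal point on that circle. And one can travel from $n$ to a target point along a finite chain of great circles, each step changing $f$ by at most a controlled amount. Together these give an oscillation bound $\sup_{U_\epsilon(n)}|f-f(n)|\le c\,(f(n)-\inf f)$ on a cap $U_\epsilon(n)$. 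Choosing $n$ with $f(n)$ arbitrarily close to $\inf f$, and moving the base point by rotations, then yields uniform continuity on all of $S^2$. If the geometric chaining becomes unwieldy, I would fall back on the elementary version of Cooke, Keane and Moran, which runs the same scheme with more explicit lemmas.

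\textbf{Step 3 (continuous $\Rightarrow$ quadratic, then assembly).} A continuous frame function on $S^2$ can be expanded in spherical harmonics. Because the frame condition is rotation-invariant, each isotypic component $f_\ell$ is itself a frame function, of weight $0$ when $\ell\ge1$. Averaging $\sum_{P\in B}f_\ell(P)=0$ over the rotations about a fixed axis, and using the Funk--Hecke formula, gives $f_\ell(x)\bigl(1+2P_\ell(0)\bigr)=0$ for all $x$. The factor $1+2P_\ell(0)$ vanishes only at $\ell=2$, and odd $\ell$ are excluded because $f(x)=f(-x)$. Hence only $\ell\in\{0,2\}$ survive, and $f(x)=\langle x,Tx\rangle$ on each three-dimensional slice. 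Back in $\mathcal{H}$, define $q(\psi)=\|\psi\|^2 f([\psi])$ and $q(0)=0$. On every (real) three-dimensional subspace $q$ is a quadratic form, and any two vectors lie in such a subspace. Hence $q$ satisfies the parallelogram law, and for complex $\mathcal{H}$ its behaviour under the phases $i\psi$ can be checked on suitable real slices. Polarization then produces a bounded Hermitian sesquilinear form, so a bounded operator $\rho$ with $f(P)=\mathrm{Tr}(\rho P)$. Positivity of $\rho$ follows from $f\ge0$. Summing over any basis gives $\mathrm{Tr}\rho=W<\infty$, so $\rho$ is trace class, and uniqueness holds because rank-one projectors span the trace-class dual.
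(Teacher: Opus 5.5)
Since the paper gives no proof of this statement (it quotes Gleason and uses the result as a black box), the benchmark is Gleason's own argument. Your plan follows its three-stage architecture. Step 1 is sound, and so is the harmonic-analysis part of Step 3; the Funk--Hecke identity $f_\ell(x)\,(1+2P_\ell(0))=0$ is a clean way to do it. The genuine gap is Step 2, which is the whole difficulty of the theorem, and as written it does not work, for three reasons.

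(i) The estimate $\sup_{U_\epsilon(n)}|f-f(n)|\le c\,(f(n)-\inf f)$ on a cap of fixed radius is false. For $f(x)=\langle x,e_1\rangle^2$ and $n=e_3$ one has $f(n)=\inf f=0$, so the bound would force $f\equiv 0$ near $e_3$. A correct local statement must let the cap shrink with the tolerance: for every $\eta>0$ there is $\delta>0$ such that $f(n)<\inf f+\delta$ implies $|f-f(n)|<\eta$ on the $\delta$-cap about $n$.

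(ii) The two facts you list cannot supply the per-step control. The first only says that $g(\theta)+g(\theta+\pi/2)$ is constant along each great circle, and that alone is compatible with arbitrarily discontinuous $g$. The control must come from non-negativity combined with interlocking frames. For example, let $C(p)$ be the great circle tangent at $p$ to the circle of latitude, with normal $m_p$, and let $e_p$ be its equatorial point orthogonal to $p$. The frames $\{p,e_p,m_p\}$ and $\{q,q',m_p\}$ give $f(q)+f(q')=f(p)+f(e_p)$ for every orthogonal pair $q,q'\in C(p)$. Turning such identities into an inequality that survives chaining along these circles is the content of Gleason's key geometric lemma, and your proposal never formulates it.

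(iii) Moving the base point by rotations propagates nothing. $f$ is not rotation invariant: the near-minimisers of $f\circ R$ are just rotated near-minimisers of $f$, so points where $f$ sits well above $\inf f$ are never reached. What does work is transfer through frames. If $s,s'$ lie on a great circle $G$ and $s^\perp_G,s'^\perp_G\in G$ are the points at $90^\circ$ from them, then $f(s')-f(s)=f(s^\perp_G)-f(s'^\perp_G)$. Choose $G$ so that these partners lie in the controlled cap, which is possible whenever $s^\perp$ meets it. Composing two such moves in independent directions then controls the oscillation on a band around $n^\perp$. One more round from the band gives uniform continuity on all of $S^2$, because every great circle $s^\perp$ meets $n^\perp$.

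There is also a smaller inaccuracy in Step 3. For complex $\mathcal{H}$ your slices are totally real, so $\psi,\phi$ lie in a common slice only if $\langle\psi,\phi\rangle\in\mathbb{R}$. The claim that any two vectors lie in such a subspace is therefore false, and the parallelogram law for general pairs does not follow directly. It can be recovered as follows. Take an orthonormal basis $e_1,e_2,e_3$ of a complex three-dimensional subspace. The slice spanned by $e_1,e^{i\varphi}e_2,e_3$ shows that the cross term of $q$ between $e_1$ and $e_2$ is some function $g(\varphi)$ with $g(\varphi+\pi)=-g(\varphi)$. The slice spanned by $(e_1+ie_2)/\sqrt2$, $(ie_1+e_2)/\sqrt2$, $e_3$ has real combinations that realise every relative phase with equal moduli, and this forces $g(\varphi)=\mathrm{Re}(c\,e^{i\varphi})$. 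Hence $q$ is a Hermitian form. This is presumably what your remark about the phases $i\psi$ intends, but it has to be carried out explicitly rather than inferred from the parallelogram law.
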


We will also use the Logical Compactness theorem \cite{Enderton01} to infer the existence of a finite subset of rays that rule out probability assignments from a finite $\mathcal{O}$.
\begin{theorem}[Logical Compactness Theorem]\label{thm:compactness}
Let \(\Sigma\) be a set of sentences in a propositional logic language \(\mathcal{L}\). \(\Sigma\) is satisfiable if and only if every finite subset \(\Sigma_0 \subseteq \Sigma\) is satisfiable.
\end{theorem}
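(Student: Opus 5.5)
The ``only if'' direction is immediate: a valuation satisfying all of $\Sigma$ satisfies every finite $\Sigma_0 \subseteq \Sigma$. For the converse I would give the standard topological proof. Let $\mathcal{P}$ be the set of propositional atoms of $\mathcal{L}$ (it may be uncountable, e.g.\ one atom for each pair of a ray in $\mathbb{P}(\mathcal{H})$ and a value in $\mathcal{O}$, which is the intended application). A truth assignment is then a point of $T = \{0,1\}^{\mathcal{P}}$. Equip $\{0,1\}$ with the discrete topology and $T$ with the product topology. By Tychonoff's theorem $T$ is compact.

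The key observation is that each sentence $\varphi$ mentions only finitely many atoms $p_1,\dots,p_n$. Whether a valuation $t$ satisfies $\varphi$ therefore depends only on $(t(p_1),\dots,t(p_n))$. So the set $\mathrm{Mod}(\varphi) = \{t \in T : t \models \varphi\}$ is a finite union of basic cylinder sets, and hence clopen, in particular closed. Finite satisfiability of $\Sigma$ says exactly that the family $\{\mathrm{Mod}(\varphi)\}_{\varphi \in \Sigma}$ of closed sets has the finite intersection property: for any finite $\Sigma_0$ we have $\bigcap_{\varphi \in \Sigma_0} \mathrm{Mod}(\varphi) \neq \emptyset$. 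Compactness of $T$ then gives $\bigcap_{\varphi \in \Sigma}\mathrm{Mod}(\varphi) \neq \emptyset$, and any point of this intersection is a valuation satisfying all of $\Sigma$.

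The main step, and the only non-elementary one, is the compactness of $\{0,1\}^{\mathcal{P}}$ for arbitrary $\mathcal{P}$, which rests on a choice principle. Tychonoff for Hausdorff factors, equivalently the Boolean prime ideal/ultrafilter lemma, suffices. If one prefers to avoid Tychonoff, I would instead argue algebraically. Use Zorn's lemma to extend $\Sigma$ to a maximal finitely satisfiable set $\Sigma^*$, then check the two key facts: for every $\varphi$, exactly one of $\varphi,\neg\varphi$ lies in $\Sigma^*$, and $\Sigma^*$ is closed under the connectives. Defining $t(p)=1 \iff p \in \Sigma^*$ then gives, by induction on formulas, $t \models \varphi \iff \varphi \in \Sigma^*$. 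In the countable case (countably many atoms, which is all that is needed once one restricts to a countable dense set of rays) one can avoid choice altogether. Enumerate the atoms $p_1,p_2,\dots$ and at each stage choose the value of $p_i$ that keeps the set finitely satisfiable; a König's-lemma argument shows this is always possible.
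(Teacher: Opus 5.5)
The paper does not prove this statement. It cites it as a black box from Enderton, whose textbook proof is the Lindenbaum-style construction: enumerate all formulas $\alpha_1,\alpha_2,\dots$ and add $\alpha_n$ or $\neg\alpha_n$ at each stage while preserving finite satisfiability. That enumeration presupposes a countable set of sentence symbols.

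Your proof is correct, and your main route is different. You work in the Stone space $\{0,1\}^{\mathcal{P}}$ and note that each $\mathrm{Mod}(\varphi)$ is clopen because $\varphi$ involves only finitely many atoms. Tychonoff then finishes the argument through the finite intersection property. Your Zorn's-lemma variant is the textbook argument with the enumeration replaced by a maximality principle.

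What your version buys is what the paper actually needs. In Theorem~\ref{thm:GleasGenKS} there is an atom $X_{P,r}$ for every ray $P\in\mathbb{P}(\mathcal{H})$, so the language is uncountable and the enumeration proof does not literally apply. The general form of compactness is required, and in ZF that form is equivalent to the Boolean prime ideal theorem; its dependence on a choice principle is the price.

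Your key observation also clarifies two points in the paper's use of the theorem:
\begin{itemize}
\item The clauses written there as $\bigwedge_{P\in\mathbb{P}(\mathcal{H})}(\cdots)$ must be read as schemata, one finite sentence per ray and per frame.
\item The failure for a dense $\tilde{\mathcal{O}}$ is precisely that $\bigvee_{r\in\tilde{\mathcal{O}}}X_{P,r}$ has an open but non-closed model set, so the finite-intersection argument breaks.
\end{itemize}

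One side remark in your proposal is wrong, though it does not affect your proof of the statement itself. You say countably many atoms suffice for the application ``once one restricts to a countable dense set of rays''. The unsatisfiability of $\Sigma$ in the paper comes from Gleason's theorem plus the connectedness of $\mathbb{P}(\mathcal{H})$. Neither survives passing to a countable subset, which is totally disconnected.

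In fact there are countable dense sets of rays that admit $\{0,1\}$-valued frame functions: Meyer's rational unit vectors in $\mathbb{R}^3$, and the Clifton--Kent constructions in general finite dimension. Such a set therefore also admits $\mathcal{O}$-valued frame functions whenever $\{0,1\}\subseteq\mathcal{O}$, which is the paper's main case. You would need an independent argument that your chosen countable set carries an unsatisfiable theory, and the paper's argument does not provide one. So for this application the choice-free countable proof is not enough, and you need the uncountable version you gave first.

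A minor point: the greedy step in your countable sketch is justified by the same two-finite-witnesses argument as in your Zorn variant. K\"onig's lemma belongs to the alternative tree formulation.
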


We can now state the theorem formally as follows.
\begin{theorem}
\label{thm:GleasGenKS}
Let \(\mathcal{H}\) be a complex Hilbert space with \(\dim(\mathcal{H}) = d \ge 3\). Let \(\mathcal{O}\) be a strictly finite subset of the real unit interval \([0,1]\) that explicitly excludes the value \(\frac{1}{d}\):
\[\mathcal{O} \subset [0, 1] \setminus \left\{ \frac{1}{d} \right\}, \quad \vert{}\mathcal{O}\vert{} < \infty.\]
There exists a finite subset \(K \subset \mathbb{P}(\mathcal{H})\) such that no partial function \(\nu_K: K \to \mathcal{O}\) satisfies \(\sum_{P \in B} \nu_K(P) = 1\) for all full orthonormal frames \(B \in \mathcal{F}(\mathcal{H})\) where \(B \subseteq K\).
\end{theorem}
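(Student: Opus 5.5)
The plan is to argue by contradiction in two stages. First, Gleason's theorem and the connectedness of $\mathbb{P}(\mathcal{H})$ exclude any $\mathcal{O}$-valued frame function of weight $1$ on the whole projective space. Then Theorem~\ref{thm:compactness} transfers this infinite obstruction to a finite set $K$.

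\textbf{Step 1 (no global assignment).} Suppose $f:\mathbb{P}(\mathcal{H})\to\mathcal{O}$ is a frame function of weight $W=1$. Since $\mathcal{O}\subset[0,1]$, $f$ is non-negative, so Theorem~\ref{thm:gleason} applies and gives a density operator $\rho$ with $\mathrm{Tr}\rho=1$ and $f(P)=\mathrm{Tr}(\rho P)$ for all $P$. The map $P\mapsto \mathrm{Tr}(\rho P)$ is continuous in the operator-norm topology, and $\mathbb{P}(\mathcal{H})$ is connected. A continuous function from a connected space into the finite (hence discrete) set $\mathcal{O}$ must be constant, say $f\equiv c$. Evaluating on any full frame gives $dc=1$, so $c=1/d$. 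This contradicts $1/d\notin\mathcal{O}$, so no such $f$ exists.

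\textbf{Step 2 (propositional encoding).} For each $P\in\mathbb{P}(\mathcal{H})$ and each $o\in\mathcal{O}$, introduce a propositional variable $x_{P,o}$, read as ``$\nu(P)=o$''. Let $\Sigma$ contain the following sentences:
\begin{itemize}
\item for each $P$, the finite disjunction $\bigvee_{o\in\mathcal{O}} x_{P,o}$;
\item for each $P$ and each $o\neq o'$, the sentence $\neg(x_{P,o}\wedge x_{P,o'})$;
\item for each frame $B=\{P_1,\dots,P_d\}$ and each tuple $(o_1,\dots,o_d)\in\mathcal{O}^d$ with $\sum_i o_i\neq 1$, the sentence $\neg(x_{P_1,o_1}\wedge\cdots\wedge x_{P_d,o_d})$.
\end{itemize}
Each sentence is finite because $\mathcal{O}$ and $d$ are finite. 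Truth assignments satisfying $\Sigma$ correspond exactly to functions $\nu:\mathbb{P}(\mathcal{H})\to\mathcal{O}$ summing to $1$ on every frame. By Step 1, $\Sigma$ is therefore unsatisfiable.

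\textbf{Step 3 (compactness).} Theorem~\ref{thm:compactness} holds for propositional languages with arbitrarily many variables. It therefore gives a finite unsatisfiable subset $\Sigma_0\subseteq\Sigma$. Let $K$ be the finite set of rays appearing in $\Sigma_0$. Suppose some $\nu_K:K\to\mathcal{O}$ summed to $1$ on every frame $B\subseteq K$. Setting $x_{P,o}$ true iff $\nu_K(P)=o$ for $P\in K$, and assigning variables of rays outside $K$ arbitrarily, satisfies $\Sigma_0$. This works because every frame constraint in $\Sigma_0$ mentions only rays in $K$, so its frame lies in $K$. That contradicts the unsatisfiability of $\Sigma_0$, so $K$ is the required set.

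\textbf{Main obstacle.} The delicate step is Step 1. One must check that the continuity of $\mathrm{Tr}(\rho P)$ and the connectedness of $\mathbb{P}(\mathcal{H})$ are used in the right topology, and that Gleason's hypothesis (a frame function of constant weight) is met using only the equality $\sum_{P\in B}\nu(P)=1$. The rest is routine bookkeeping.
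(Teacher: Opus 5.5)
Your proposal is correct and follows the paper's proof. Gleason's theorem together with the connectedness of $\mathbb{P}(\mathcal{H})$ forces any $\mathcal{O}$-valued weight-one frame function to be the constant $1/d$, and propositional compactness then extracts $K$ as the set of rays occurring in a finite unsatisfiable subtheory $\Sigma_0$. The only differences are in bookkeeping: you encode the frame condition by forbidding the tuples whose sum is not $1$, where the paper uses a disjunction over the admissible tuples $\mathcal{S}_B$, and you explicitly check that every frame constraint in $\Sigma_0$ names all rays of its frame (so that frame lies in $K$), a step the paper leaves implicit and which your encoding still secures in the degenerate case $\mathcal{S}_B=\emptyset$.
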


\begin{proof}
Assume by contradiction that there exists a global context-independent assignment function \(\nu: \mathbb{P}(\mathcal{H}) \to \mathcal{O}\) that satisfies the frame condition of weight \(W=1\):
\begin{equation}\label{eq:frame_O}
\sum_{P \in B} \nu(P) = 1 \quad \forall B \in \mathcal{F}(\mathcal{H}).
\end{equation}
By definition, \(\nu\) is a frame function. Because its codomain is \(\mathcal{O} \subset [0, 1] \subset [0, \infty)\), \(\nu\) is a non-negative frame function. Since \(\dim(\mathcal{H}) \ge 3\), Gleason's Theorem \(\ref{thm:gleason}\) strictly guarantees the existence of a unique density operator \(\rho \ge 0\) with \(\text{Tr}(\rho) = 1\) such that:
\begin{equation}\label{eq:born_map}
\nu(P) = \text{Tr}(\rho P) \quad \forall P \in \mathbb{P}(\mathcal{H}).
\end{equation}
The map \(\Phi: \mathbb{P}(\mathcal{H}) \to \mathbb{R}\) defined by \(\Phi(P) = \text{Tr}(\rho P)\) is a linear trace function. Because the trace function is continuous with respect to the operator norm topology on \(\mathbb{B}(\mathcal{H})\), \(\Phi\) is a continuous map. A fundamental property of continuous mappings is that the image of a connected space under a continuous function must be a connected subset of the codomain. Therefore, \(\text{Im}(\nu) = \Phi(\mathbb{P}(\mathcal{H}))\) must be a connected subset of \(\mathbb{R}\). Since \(\mathcal{O}\) is a strictly finite subset of \([0,1] \setminus \{1/d\}\), the subspace topology it inherits from the standard topology of \(\mathbb{R}\) is the discrete topology. The only connected subsets of a discrete topological space are singletons.
Thus, the image of \(\nu\) must be a singleton \(\{c\}\) for some constant \(c \in \mathcal{O}\):
\[\nu(P) = c \quad \forall P \in \mathbb{P}(\mathcal{H}).\]
For an arbitrary full orthonormal frame \(B \in \mathcal{F}(\mathcal{H})\), we obtain:
\[\sum_{P \in B} \nu(P) = \sum_{P \in B} c = 1 \implies c = d^{-1}.\]
However, this directly contradicts the condition that \(\mathcal{O} \subset [0,1] \setminus \{\frac{1}{d}\}\). Therefore, no global context-independent assignment \(\nu: \mathbb{P}(\mathcal{H}) \to \mathcal{O}\) can exist.

To get to a finite set of vectors, we model the existence of an \(\mathcal{O}\)-valued assignment as a satisfiability problem in propositional logic. Let the finite set of valid values be indexed as \(\mathcal{O} = \{r_1, r_2, \dots, r_m\}\), where \(m = \vert{}\mathcal{O}\vert{}\). Construct a formal propositional language \(\mathcal{L}\) where the propositional variables are indexed by both the projection operators and the values in \(\mathcal{O}\):
\[\text{Var}(\mathcal{L}) = \{ X_{P, r} \mid P \in \mathbb{P}(\mathcal{H}), \, r \in \mathcal{O} \}. \]
The proposition \(X_{P, r}\) is interpreted semantically as \(\nu(P) = r\).

Let \(\Sigma\) be the infinite set of sentences over \(\mathcal{L}\) defined by the following structural rules:
\begin{enumerate}
    \item For every projector \(P \in \mathbb{P}(\mathcal{H})\), \(P\) cannot be assigned more than one value from \(\mathcal{O}\): $\bigwedge_{P \in \mathbb{P}(\mathcal{H})} \;\; \bigwedge_{r_i \neq r_j \in \mathcal{O}} \neg (X_{P, r_i} \wedge X_{P, r_j})$.
    
    \item For every projector \(P \in \mathbb{P}(\mathcal{H})\), \(P\) must be assigned at least one value from \(\mathcal{O}\): $ \bigwedge_{P \in \mathbb{P}(\mathcal{H})} \left( \bigvee_{r \in \mathcal{O}} X_{P, r} \right)$.

    \item For every full orthonormal frame \(B = \{P_1, P_2, \dots, P_d\} \in \mathcal{F}(\mathcal{H})\), the assigned values must sum to 1. Let \(\mathcal{S}_B\) be the finite set of all valid \(d\)-tuples of weights that satisfy the frame condition:
    $\mathcal{S}_B = \left\{ (a_1, a_2, \dots, a_d) \in \mathcal{O}^d \;\middle\vert{}\; \sum_{i=1}^d a_i = 1 \right\}$.
    The assigned values for the elements of \(B\) must match one of these valid tuples:
    \begin{equation}\label{eq:ax_frame}
    \bigvee_{(a_1, \dots, a_d) \in \mathcal{S}_B} \left( \bigwedge_{i=1}^d X_{P_i, a_i} \right).
    \end{equation}
\end{enumerate}
A truth valuation \(v: \text{Var}(\mathcal{L}) \to \{\text{True}, \text{False}\}\) satisfies the infinite set of sentences \(\Sigma\) if and only if it defines a consistent, global frame function mapping \(\mathbb{P}(\mathcal{H}) \to \mathcal{O}\) of weight 1. As we have seen above, no such global function can exist. Thus, the infinite theory \(\Sigma\) is strictly unsatisfiable. By the contrapositive of the Logical Compactness Theorem \ref{thm:compactness}, since the infinite set of sentences \(\Sigma\) is unsatisfiable, there must exist a finite subset of sentences \(\Sigma_0 \subset \Sigma\) that is also unsatisfiable.
Let $K \subset \mathbb{P}(\mathcal{H})$ be the finite set of projection operators whose corresponding propositional variables 
$X_{P,r}$ appear inside the sentences of the finite sub-theory $\Sigma_0$. Since $\Sigma_0$ is finite and unsatisfiable, it is logically impossible to map the elements of the finite set of projections $K$ to the finite set of values $\mathcal{O}$ 
 without violating the sum constraints for the finite collection of frames represented in $\Sigma_0$. This finite set 
 constitutes a Generalized Kochen-Specker configuration for outcome probability assignments from $\mathcal{O}$. 
 \end{proof}

While Theorem \ref{thm:GleasGenKS} rules out consistent probability assignments from the finite sets $\mathcal{O}$, it does so in a non-constructive manner. For experimental tests to rule out the corresponding theories, the Hardy-type tests formulated in the current paper are more relevant.

As an aside, also consider the case that $\mathcal{O} \subset [0,1] \setminus \{1/d\}$ is not a finite subset but a dense subset such as $\mathbb{Q} \cap [0,1] \setminus \{1/d\}$. Let \(\tilde{\mathcal{O}}\) be a dense subset of the real unit interval \([0, 1]\) that excludes the value \(\frac{1}{d}\). That is, the closure of \(\tilde{\mathcal{O}}\) satisfies $\overline{\tilde{\mathcal{O}}} = [0, 1], \quad \text{and} \quad \frac{1}{d} \notin \tilde{\mathcal{O}}$. 
In this case the logical compactness theorem cannot be applied because the length of the clauses becomes infinite. Infinitary logics do not possess the compactness property \cite{Enderton01}, an infinite set of sentences can be unsatisfiable even if every single one of its finite sub-theories is perfectly satisfiable. The non-existence of probability assignments from such an $\tilde{\mathcal{O}}$ cannot be therefore reduced to a finite configuration of projections via logical compactness. Nevertheless one can show the following via the same continuity argument, since the trace is a continuous map and the projective space is connected, the image cannot lie strictly inside a disconnected set.   .
\begin{prop}
Let \(\mathcal{H}\) be a Hilbert space with \(\dim(\mathcal{H}) = d \ge 3\). If \(\mathcal{O}\) is a dense subset of \([0, 1] \setminus \{\frac{1}{d}\}\), then there exists no global context-independent assignment \(\nu: \mathbb{P}(\mathcal{H}) \to \mathcal{O}\) satisfying $\sum_{P \in B} \nu(P) = 1$ for all full orthonormal frames $B \subset K$.     
\end{prop}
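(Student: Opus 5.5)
The plan is to rerun the continuity argument of Theorem \ref{thm:GleasGenKS} with the finite set replaced by the dense set $\mathcal{O}$. Logical compactness is not needed, since the Proposition only asserts that no \emph{global} assignment exists. Suppose, for contradiction, that $\nu: \mathbb{P}(\mathcal{H}) \to \mathcal{O}$ satisfies $\sum_{P\in B}\nu(P)=1$ for every full orthonormal frame $B \in \mathcal{F}(\mathcal{H})$. I read the stray ``$B \subset K$'' in the statement as ranging over all of $\mathcal{F}(\mathcal{H})$. Then $\nu$ is a frame function of weight $W=1$, and it is non-negative because $\mathcal{O}\subset[0,1]$. Since $d\ge 3$, Gleason's Theorem \ref{thm:gleason} gives a density operator $\rho\succeq 0$ with $\mathrm{Tr}\,\rho=1$ and $\nu(P)=\mathrm{Tr}(\rho P)$ for every rank-one projector $P$.

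Next I would use continuity. The map $\Phi(P)=\mathrm{Tr}(\rho P)$ is continuous in the operator-norm topology, and $\mathbb{P}(\mathcal{H})$ is connected, since it is the continuous image of the connected unit sphere. Hence $\mathrm{Im}(\nu)=\Phi(\mathbb{P}(\mathcal{H}))$ is a connected subset of $\mathbb{R}$, that is, an interval $I$. The image is also contained in $\mathcal{O}$, so $I\subseteq \mathcal{O}\subseteq[0,1]\setminus\{1/d\}$. It is compact, because $\mathbb{P}(\mathcal{H})$ is compact, so $I=[a,b]$ with $1/d\notin[a,b]$.

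There are then two cases. If $a<b$, the interval $I$ is a nondegenerate interval lying inside $\mathcal{O}$. This contradicts nothing yet, because density of $\mathcal{O}$ does not forbid $\mathcal{O}$ from containing intervals. The main obstacle is here: the hypothesis ``dense'' alone is too weak, since for example $\mathcal{O}=[0,1]\setminus\{1/d\}$ is dense. I would therefore fall back on an averaging argument that is independent of the topology of $\mathcal{O}$. For any frame $B=\{P_1,\dots,P_d\}$, the values $\nu(P_i)$ lie in $[a,b]$ and sum to $1$, so $da\le 1\le db$, which gives $a\le 1/d\le b$ and hence $1/d\in I\subseteq\mathcal{O}$, a contradiction. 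This inequality argument in fact handles both cases at once. In the degenerate case $a=b=c$ it gives $dc=1$, so $c=1/d\notin\mathcal{O}$, which is again a contradiction.

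So the key step is the observation that the image of the Born-rule map is an interval containing the frame average $1/d$. Connectedness combined with $\min\le\text{average}\le\max$ forces $1/d\in\mathrm{Im}(\nu)\subseteq\mathcal{O}$. Equivalently, by the intermediate value theorem, some ray $P$ has $\mathrm{Tr}(\rho P)=1/d$, for instance by connecting an eigenvector of the largest eigenvalue of $\rho$ to an eigenvector of the smallest. This contradicts $1/d\notin\mathcal{O}$. The proof uses only that $\mathcal{O}\subseteq[0,1]$ omits $1/d$, so density plays no role beyond framing the question. I would add this remark after the proof to explain why compactness could not be used and why it is not needed here.
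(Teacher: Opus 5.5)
Your proof is correct. It follows the same Gleason-plus-connectedness route as the paper, but it is more complete at the decisive step.

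The paper handles the dense case in one line: the image of $P\mapsto\mathrm{Tr}(\rho P)$ is connected, so it ``cannot lie strictly inside a disconnected set.'' Taken literally, that is not enough, because a connected set can sit inside a single component of a disconnected set. The one-liner only closes when $\mathcal{O}$ is totally disconnected, as in the paper's motivating example $\mathbb{Q}\cap[0,1]\setminus\{1/d\}$. There the image collapses to a constant $c$, and $dc=1$ gives the contradiction exactly as in Theorem~\ref{thm:GleasGenKS}.

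As you note, a dense $\mathcal{O}$ may contain whole intervals; $[0,1]\setminus\{1/d\}$ is itself dense. In that case connectedness alone rules nothing out. Your frame-averaging step supplies the missing ingredient: $\min_{P\in B}\nu(P)\le 1/d\le\max_{P\in B}\nu(P)$ for any frame $B$. Equivalently, the extreme eigenvalues of $\rho$ bracket $1/d$. So the image contains a value $\le 1/d$ and a value $\ge 1/d$. Being an interval, it then contains $1/d$, and $1/d\in\mathrm{Im}(\nu)\subseteq\mathcal{O}$ is the contradiction.

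This gives a stronger statement than the one claimed: no global assignment with values in $[0,1]\setminus\{1/d\}$ exists at all. Density of $\mathcal{O}$ and compactness of $\mathbb{P}(\mathcal{H})$ are not needed. Your reading of the stray ``$B\subset K$'' as ranging over all of $\mathcal{F}(\mathcal{H})$ is the intended one.
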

 
\section{Generalized Kochen-Specker theorem to rule out $O$-valued noncontextual models}
\label{app:GenKS}
Thus far, we have seen a family of experimentally feasible contextual tests to rule out arbitrary $O$-valued noncontextual models. However these tests are state-dependent being of the Hardy-type, i.e., they check that if $f(v_1^{(1)}) = 1$ then $f(v_2^{(1)}) \le w_q$ in an $O$-valued theory, while $f^{(q)}(v_2^{(1)}) = \left(\frac{k}{k+2}\right)^2$ in quantum theory. The test thus requires measurement on the specific state $|v_1^{(1)}\rangle$ and is state-dependent. 

On the other hand, the KS theorem is a state-independent test, i.e., the statistics obtained by measuring a set of projectors proving the KS theorem cannot be explained by an $O = \{0,1\}$-valued model irrespective of the state on which they are measured. It is thus highly desirable to have an analogous generalisation of the KS theorem that rules out $O$-valued models beyond the Boolean case. A first initial result in this direction was provided by us in \cite{Ravi24} for the case $O = \{0, p, 1-p, 1\}$ for $0 \leq p \leq 1/2$ and $p \neq 1/3$ (which included the interesting case $O = \{0,1/2,1\}$. Here we provide a KS theorem for a more general class of models which includes the result of \cite{Ravi24} as a special case.

Let $\Delta^2 \subset \mathbb{R}^3$ denote the standard 2-simplex, defined as:
\[
\Delta^2 = \left\{ (x, y, z) \in \mathbb{R}^3 \mid x \ge 0, y \ge 0, z \ge 0 \text{ and } x + y + z = 1 \right\}.
\]
The boundary of $\Delta^2$ is the set of points where at least one coordinate vanishes:
\[
\partial \Delta^2 = \left\{ (x, y, z) \in \Delta^2 \mid x=0 \text{ or } y=0 \text{ or } z=0 \right\}.
\]
The interior of the simplex consists of all strictly positive probability vectors:
\[
\text{int}(\Delta^2) = \Delta^2 \setminus \partial \Delta^2 = \left\{ (x, y, z) \in \mathbb{R}^3 \mid x > 0, y > 0, z > 0 \text{ and } x + y + z = 1 \right\}.
\]
The class of $O$-valued models for which we can prove a state-independent theorem is defined as follows. 
\begin{definition}
\label{def:Ostar}
    Let $O^* = \{0, w_1, w_2, \ldots, w_q, 1\}$ be such that the intersection of $O^3$ with the $2$-simplex is entirely contained within its boundary, i.e., $(O^*)^3 \cap \Delta^2 \subseteq \partial \Delta^2$ or equivalently $(O^*)^3 \cap \text{int}(\Delta^2) = \emptyset$.
\end{definition}
To elaborate, the set of $O$-valued models we consider in this section are the $O^*$ for which every probability assignment $(p_1,p_2,p_3) \in {O^*}^3$ satisfying the normalization condition $p_1 + p_2 + p_3 = 1$ contains at least one zero. We remark that this is a strict generalization of the models considered in \cite{Ravi24} since while $O^* = \{0, p, 1-p, 1\}$ satisfies the condition, there are a large class of other models that also fall under the condition, such as for instance $O^* = \{0, 1/m, 1\}$ for $m \geq 2$. The proof of the following theorem is constructive as opposed to the non-constructive proof obtained via the Gleason theorem (see App. \ref{sec:Gleason}). The proof is obtained via several improvements to the proof structure from our \cite{Ravi24}. 

\begin{theorem}
    Let $O^*$ be as in Def. \ref{def:Ostar}. There exists a finite set of rays $V_{O^*} \subset \mathbb{C}^3$ that admits no global $O^*$-valued assignment. 
\end{theorem}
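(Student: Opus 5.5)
The plan is to exploit the defining property of $O^*$: every complete basis (triangle) in $\mathbb{C}^3$ must contain at least one vertex assigned $0$. Any $O^*$-valuation $\lambda$ therefore induces a $\{0,1\}$-type structure on the zero set $Z_\lambda = \{v : \lambda(v)=0\}$. Every orthonormal basis meets $Z_\lambda$, and whenever two vertices of a basis are nonzero, the third is forced to be $0$. This is exactly the kind of ``colouring'' obstruction that KS-type constructions defeat. So I would build the finite set $V_{O^*}$ in two stages. First, I would construct gadgets that propagate the value $1$ and the value $0$. Second, I would combine these gadgets into a configuration where the zero pattern is contradictory.

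\textbf{Step 1: a uniform lemma on value $1$.} If $\lambda(v)=1$, then every $u\perp v$ has $\lambda(u)=0$, because of the clique constraints $\sum\le 1$. If instead a triangle $\{a,b,c\}$ has $\lambda(a)=0$, then $(\lambda(b),\lambda(c))$ lies in $\{(x,1-x)\}$, so the remaining pair sums to one. The central gadget I would use is the Clifton graph (and its nested versions $G_k$). By Thm.~\ref{thm:NestClif-NDval} and Remark~\ref{rem:Omax}, if $\lambda(v_1^{(1)})=1$ then $\lambda(v_2^{(1)})\le w_q<1$. This holds for any consistent assignment with values in $O^*$, provided the end vectors have overlap at most $k/(k+2)$, which Thm.~\ref{thm:optq-nestClif} makes realizable for every $k$.

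\textbf{Step 2: upgrade ``$<1$'' to ``$=0$''.} I would then show that in an $O^*$-model the Clifton implication strengthens to $\lambda(v_1)=1\Rightarrow\lambda(v_2)=0$, at least after adjoining a few extra completing vectors. To argue this, suppose $\lambda(v_2)=x>0$. Then $v_4,v_7$ receive values $\le 1-x$, while $v_3=v_6=0$. Hence $\lambda(v_4)+\lambda(v_5)=1$ and $\lambda(v_7)+\lambda(v_8)=1$. Completing the bases $\{v_2,v_4,\cdot\}$ and $\{v_2,v_7,\cdot\}$ and using the no-interior-point property, each triangle containing $v_2$ and a nonzero partner must carry a $0$ on its third vertex. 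I would chase these constraints, together with the edge $v_5\perp v_8$, to a contradiction unless $x=0$. If one Clifton layer does not suffice, I would use the nesting with the propagation $v_5^{(m)}=v_1^{(m+1)}$ to push the constraint down. Once the implication $1\Rightarrow 0$ holds for all pairs at some fixed angle, rotations give a ``$1$-propagation'' gadget for all pairs at angle below a threshold.

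\textbf{Step 3: state-independent closure.} With $1\Rightarrow 0$ available at small angles, the argument follows the Kochen--Specker/\cite{Ravi24} template. Some basis has a vertex of value $1$, or else all its nonzero values are fractional. In the first case, chaining the gadgets around a great circle forces a vector and one orthogonal to it both to have value $1$, which contradicts the clique constraint. In the second case, I would take a basis with values $(0,x,1-x)$, $0<x<1$, and use the interpolation trick from \cite{Ravi24}. In the plane of the two nonzero vectors, rotated bases $\{a_\phi,b_\phi,c\}$ with $\lambda(c)=0$ force $\lambda(a_\phi)+\lambda(b_\phi)=1$ along a finite chain. Additional triangles through $a_\phi$ then use the boundary property to force some $a_\phi$ to take value $1$ or $0$, which reduces to the first case. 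Taking the union of all the finitely many gadgets gives $V_{O^*}$.

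The main obstacle is Step 2 together with the fractional case of Step 3. Unlike in the $\{0,1\}$ setting, values like $1/m$ can appear on two vertices of a triangle, and the only leverage is that the third vertex must then be $0$. My first attempt would be to reuse the explicit vector sets of \cite{Ravi24}, replacing their specific use of $p,1-p$ by the abstract property $(O^*)^3\cap\mathrm{int}(\Delta^2)=\emptyset$, and checking at each step that only this property is needed.
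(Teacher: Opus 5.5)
\paragraph{The gap is in Step 2.} This step carries the whole theorem, and the local chase you describe cannot give $\lambda(v_1)=1\Rightarrow\lambda(v_2)=0$. Take $O^*=\{0,\tfrac12,1\}$, which satisfies Def.~\ref{def:Ostar}. Complete each of the five non-triangle edges of the Clifton graph to a basis. The resulting configuration admits the valuation $\lambda(v_1)=1$, $\lambda(v_3)=\lambda(v_6)=0$, $\lambda(v_2)=\lambda(v_4)=\lambda(v_5)=\lambda(v_7)=\lambda(v_8)=\tfrac12$, with $0$ on every completing vector. Every basis then reads $(1,0,0)$ or $(\tfrac12,\tfrac12,0)$ up to order, so the no-interior property is never violated.

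Nesting does not rescue this. Thm.~\ref{thm:NestClif-NDval} and Remark~\ref{rem:Omax} show that deeper nesting only \emph{relaxes} the admissible bound on $\lambda(v_2^{(1)})$, and a half-half pattern like the one above extends through every layer. The missing ingredient is an argument that uses the \emph{finiteness} of $O^*$, which your plan never invokes.

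\paragraph{How the paper uses finiteness.} The paper excludes the value $1$ outright.
\begin{enumerate}
\item Take two nested-Clifton implications, from $w_1$ to $w_2$ and from $w_1$ to $w_4$, with $w_1\perp w_3$ and $\{w_2,w_3,w_4\}$ a basis. Whenever $\lambda(w_1)=1$, this gives the strict two-sided bound $0<\lambda(w_2)<1$.
\item Build the gadget $\Gamma(u_1,u_2,u_3)$ from $u_1'\perp u_1$, the bases $\{u_1',u_2,u_2'\}$ and $\{u_2',u_3',u_3\}$, the previous gadget on $(u_1,u_3')$, and a nested-Clifton implication on $(u_1,u_2)$. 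It yields $\lambda(u_2)\le w_q$ and $\lambda(u_2)=\lambda(u_3')+\lambda(u_3)>\lambda(u_3)$.
\item Chaining copies $\Gamma(u_1,u_t,u_{t+1})$ produces a strictly decreasing sequence $\lambda(u_2)>\lambda(u_3)>\cdots$ in the $(q+1)$-element set $O^*\setminus\{1\}$. This is impossible beyond $q+1$ terms.
\end{enumerate}
Hence no vector of the finite set can take the value $1$. Alternatively, Corollary~1 of \cite{HP04} with $t=q+3$ gives the same conclusion. Once this lemma is in place, your $1\Rightarrow0$ upgrade and the great-circle chaining become unnecessary.

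\paragraph{Your second case.} It is also only sketched: you do not say how the rotated triangles would force some $a_\phi$ to $0$. Once the value $1$ is excluded, however, this case becomes easy. Each basis then contains \emph{exactly} one zero, since two zeros would force a $1$.

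The paper uses this fact with two gadgets. The nine-vector gadget $U_2$ has four bases that force three linearly independent zeros. The further gadget of Fig.~\ref{fig:step3-gadget}, together with \cite{Ravi24}, shows that three such zeros force some vector to the value $1$, contradicting step one.

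Equivalently: on a set in which every orthogonal pair lies in a basis, the indicator of $\lambda^{-1}(0)$ is an ordinary $\{0,1\}$ Kochen--Specker colouring. So any KS set, completed and with a value-$1$-excluding gadget attached to each vector, would finish the proof.

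Your opening observation about the zero set was therefore the right one. What the proposal lacks is the finiteness-based exclusion of the value $1$, which is what makes the zero set rigid.
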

\begin{proof}
Let $O^* = \{0, w_1, w_2,\ldots, w_q,1\}$ be given as in Def. \ref{def:Ostar}. The proof consists of three steps each of which is defined by a specific gadget consisting of a finite set of vectors. In the first step, we build upon the nested Clifton graph construction and present a gadget that rules out a `one-value' for any given vector in an $O^*$-valued assignment. This step is used to reduce the problem to the case of outcome probability assignments in $O^*\setminus \{1\}$ since a finite vector set that does not admit assignments from $O^* \setminus \{1\}$ can be augmented with finite gadgets from step one for each of the vectors in the set. In the second step, we construct a gadget that admits assignments from $O^* \setminus \{1\}$ but such that in any such assignment it is necessarily that three linearly independent vectors are all assigned value $0$. In the third step, we construct a gadget that admits assignments from $O^* \setminus \{1\}$ but such that in any such assignment it is necessarily the case that the three linearly independent vectors from step two cannot all be assigned value $0$. Taking the union of the finite vector sets from steps two and three gives a finite set that does not admit outcome assignments from $O^* \setminus \{1\}$. Augmenting this finite set $S$ with a gadget from step one for each of its vertices proves the final no-go theorem. 

\textit{Step one:} Accordingly, we first proceed to construct the gadget for step one in terms of the following lemma. 
\begin{lemma}
\label{lem:stepone}
    Let $|u \rangle \in \mathbb{C}^3$ be an arbitrary fixed vector. There is a finite set of vectors $V_{u} \subset \mathbb{C}^3$ with $|u \rangle \in V_{u}$ such that for any outcome assignment $f: V_{u} \rightarrow O^*$, it holds that $f(u) \neq 1$. 
\end{lemma}
\begin{proof}
    The proof of this lemma uses the nested Clifton graph of App. \ref{app:HardyO-context} and a further construction providing a proof of Pitowsky's Logical Indeterminacy Principle from \cite{Pitowsky98, HP04}. 

    We first consider the nested Clifton graph for a suitable level of nesting $k$ depending on $O^*$ shown in Fig. \ref{fig:clifton_nested}. The property of interest is that in any $O^*$-valued assignment $f$, it holds that $f(v_1) = 1 \implies f(v_2) < 1$ (see Prop. \ref{thm:NestClif-NDval}). The nested Clifton graph will be depicted by a dashed edge in our construction signifying this property. 

    Secondly, consider the gadget shown in Fig. \ref{fig:stepone-1}. In this gadget we have that if $f(w_1) = 1$ then by the exclusivity constraint $f(w_3) = 0$ and by the property of the dashed edge (the nested Clifton graph from above) $f(w_2) < 1$ and $f(w_4) < 1$. Since the completeness condition imposes that $f(w_2) + f(w_3) + f(w_4) = 1$ we have that $0 < f(w_2) < 1$ (as also $0 < f(w_4) < 1$). We denote this gadget by a directed dashed edge in our construction signifying this property. At this point, it is also worth remarking that this gadget was used by us to provide a simpler proof of Pitowsky's logical indeterminacy principle in \cite{RRHS+20}. 

    Our proof of the lemma is completed by the gadget shown in Fig. \ref{fig:stepone-2}. Suppose that $f(u_1) = 1$ in an $O^*$-valued assignment so that by exclusivity $f(u'_1) = 0$. We then have by completeness that $f(u_2) + f(u'_2) =1$ and $f(u'_2) + f(u'_3) + f(u_3) = 1$ giving $f(u_2) = f(u'_3) + f(u_3)$. By the property of the directed dashed edge $\{u_ 1, u'_3\}$ (implying $0< f(u'_3) < 1$) we obtain that $f(u_2) > f(u_3)$ (note the strict inequality). By the property of the dashed edge $\{u_1, u_2\}$ we have that $f(u_2) < 1$ implying that in an $O^*$-valued assignment there holds $f(u_2) \leq w_q$. Taken together this implies that $f(u_3) \leq w_{q-1}$. Call this gadget $\Gamma(u_1, u_2, u_3)$, it has the property that $f(u_1) = 1$ implies $f(u_2) \leq w_q$ and $f(u_3) \leq w_{q-1}$. Now consider a similar gadget with $\Gamma(u_1, u_3, u_4)$ giving $f(u_4) \leq w_{q-2}$. Taking the union $\bigcup_{t=1}^{q+2} \Gamma(u_1, u_t, u_{t+1})$ gives that when $f(u_1) = 1$, there holds $f(u_{q+3}) < 0$ which yields a contradiction.
Therefore, the gadget graph we have constructed does not admit an $O^*$-valued outcome assignment with $f(u_1) = 1$ proving the statement.
\end{proof}

\begin{figure}[htbp]
\centering
\begin{minipage}{.48\textwidth}
  \centering
  \includegraphics[width=.8\linewidth]{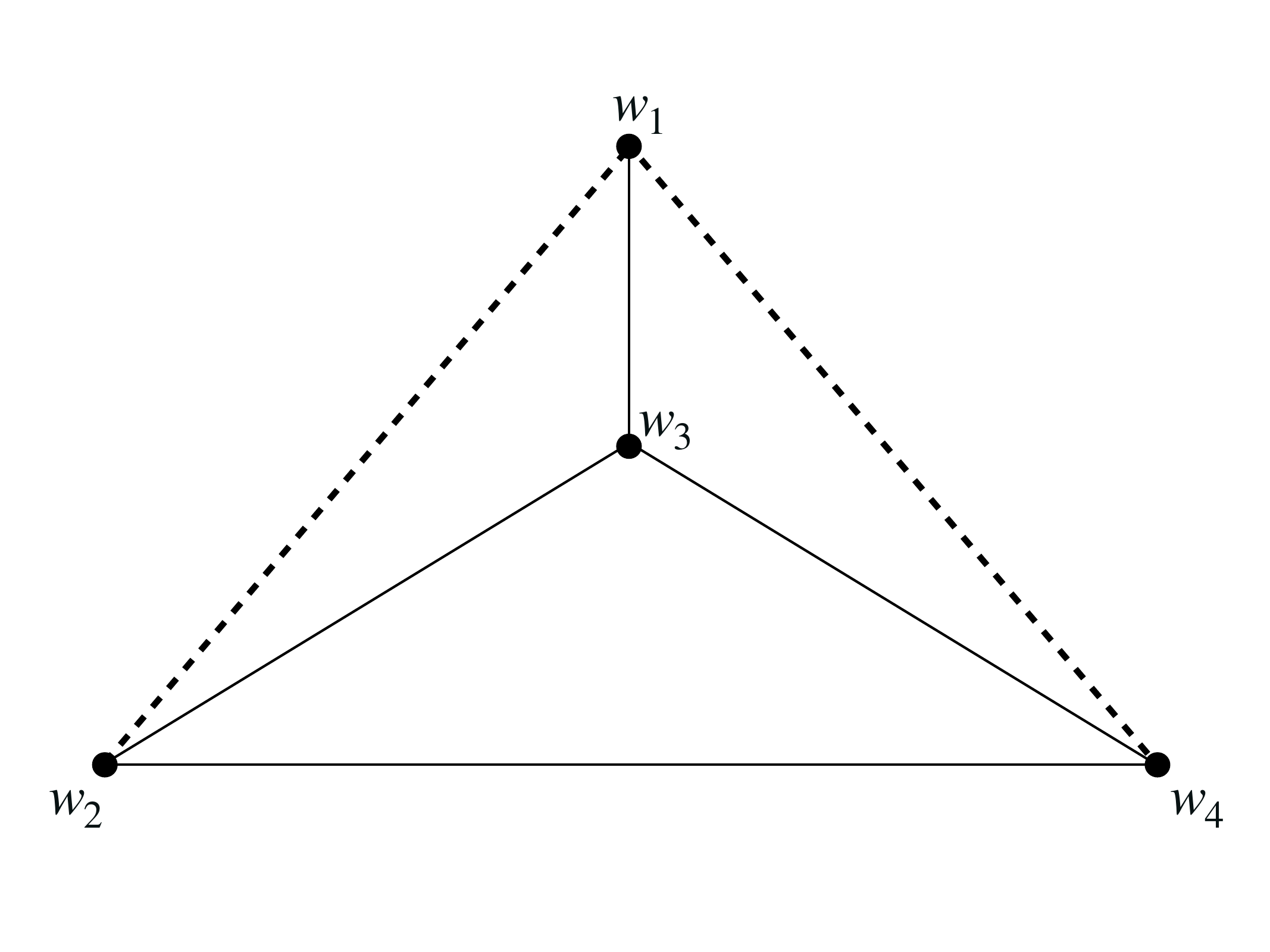}
  \caption{In this gadget graph used in the proof of step one, the dashed edge denotes the nested Clifton graph. As explained in the text, this gadget ensures the property that when $f(w_1) = 1$ it holds that $0 < f(w_2) < 1$ and $0 < f(w_4) < 1$.} 
  \label{fig:stepone-1}
\end{minipage}%
\begin{minipage}{.45\textwidth}
  \centering
  \includegraphics[width=.9\linewidth]{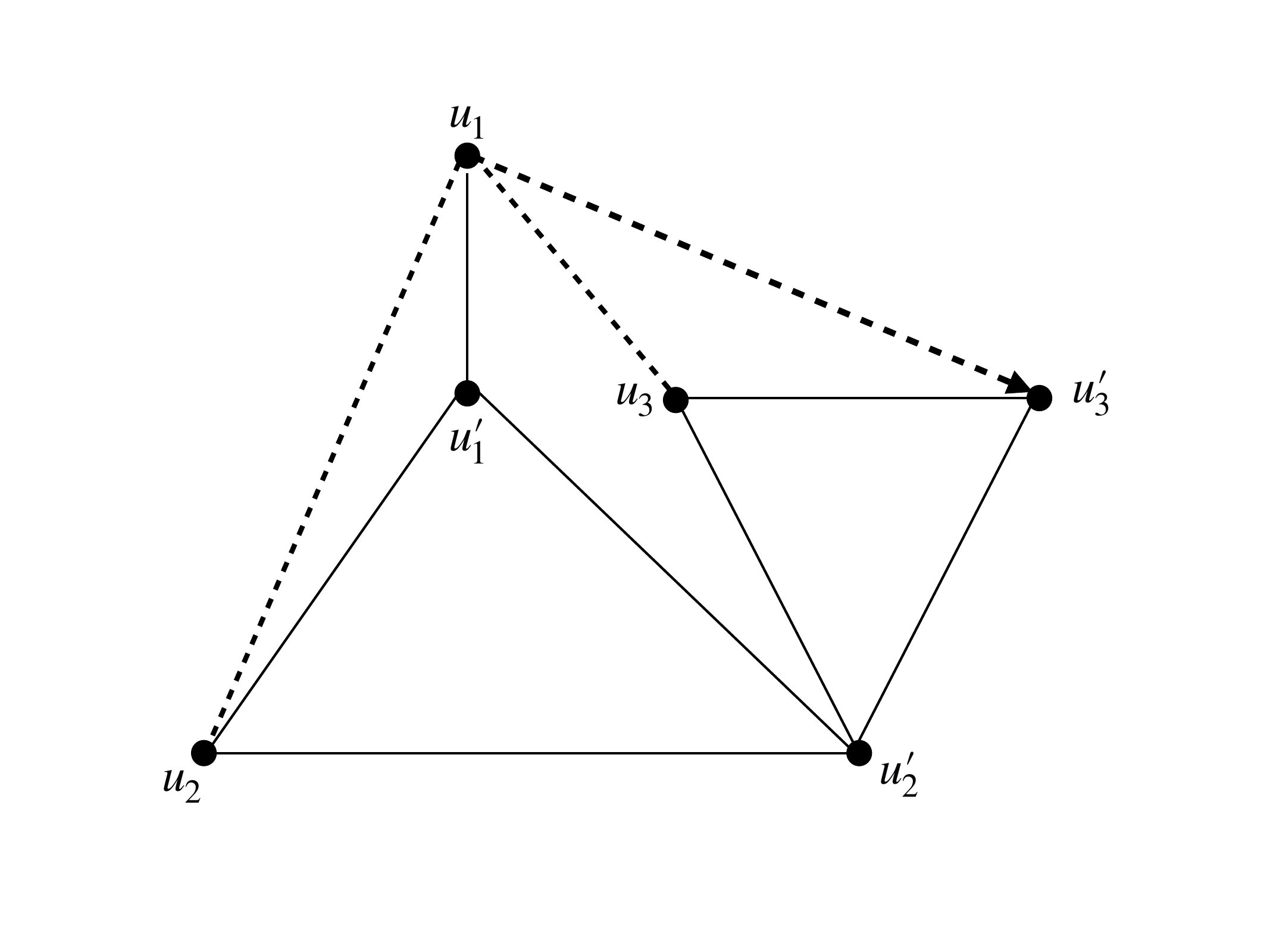}
  \caption{In this gadget graph used in the proof of step one, the directed dashed edge denotes the gadget from Fig. \ref{fig:stepone-1}. As explained in the text, this gadget ensures the property that when $f(u_1) = 1$ it holds that $f(u_2) \leq w_q$ and $f(u_3) \leq w_{q-1}$ in any $O^*$-valued outcome assignment.}
  \label{fig:stepone-2}
\end{minipage}
\end{figure}

We remark that step one of the proof can also be accomplished in a different manner as shown by Hrushovski and Pitowsky in \cite{HP04}. Specifically Lemma \ref{lem:stepone} can be obtained from their Corollary 1 (through a much larger set of vectors and a more involved proof) which we restate here as a lemma.
\begin{lemma}[Corollary 1 of \cite{HP04}]
\label{lem:Pitowskylem}
    Given a ray $z$ in a Hilbert space $\mathcal{H}$ of finite dimension $d \geq 3$ and an integer $t$, there is a finite set of rays $\Lambda_t(z)$ such that every state $p$ on $\Lambda_t(z)$ for which $p(z) = 1$ has at least $t$ distinct values.
\end{lemma}
Choosing $t = q+3$ and the ray $z$ to correspond to vector $| u \rangle$ provides an alternative proof for step one, namely when $f(u) = 1$ the gadget $\Lambda_{q+3}(u)$ in Lemma \ref{lem:Pitowskylem} requires at least $q+3$ distinct values and is hence uncolorable in $O^* = \{0, w_1, \ldots, w_q, 1\}$ (which only allows assignments from $q+2$ distinct values and convex combinations thereof). \\

\textit{Step two:} The second step is proven using the gadget corresponding to the orthogonality graph of the vector set $U_2 = \{|u_1\rangle, \ldots, |u_9 \rangle\}$ explicitly given as:
\begin{align}
    |u_1 \rangle &= \begin{pmatrix} 1 \\ 0 \\ 0 \end{pmatrix}, &
    |u_4 \rangle &= \begin{pmatrix} 0 \\ 1 \\ 0 \end{pmatrix}, &
    |u_7 \rangle &= \begin{pmatrix} 0 \\ 0 \\ 1 \end{pmatrix}, \\
    |u_2 \rangle &= \frac{1}{2}\begin{pmatrix} 0 \\ 1 \\ \sqrt{3} \end{pmatrix}, &
    |u_5 \rangle &= \frac{1}{2}\begin{pmatrix} 1 \\ 0 \\ \sqrt{3} \end{pmatrix}, &
    |u_8 \rangle &= \frac{1}{2}\begin{pmatrix} 1 \\ \sqrt{3} \\ 0 \end{pmatrix}, \\
    |u_3 \rangle &= \frac{1}{2} \begin{pmatrix} 0 \\ -\sqrt{3} \\ 1 \end{pmatrix}, &
    |u_6 \rangle &= \frac{1}{2} \begin{pmatrix} -\sqrt{3} \\0 \\ 1 \end{pmatrix}, &
    |u_9 \rangle &= \frac{1}{2} \begin{pmatrix} -\sqrt{3} \\ 1 \\ 0 \end{pmatrix}.
\end{align}
We first observe that the following triples form complete bases: $B_1 = \{|u_1\rangle, |u_4\rangle, |u_7\rangle\}$, $B_2 = \{|u_1\rangle, |u_2\rangle, |u_3\rangle\}$, $B_3 = \{|u_4\rangle, |u_5\rangle, |u_6\rangle\}$, and $B_4 = \{|u_7\rangle, |u_8\rangle, |u_9\rangle\}$. Now consider any outcome probability assignment $f: U_2 \to O^* \setminus \{1\}$. To obey the KS completeness condition that the sum of probabilities assigned to a complete basis is one, it follows that we must have three linearly independent vectors in the set that are assigned value $0$. Formally we have the following lemma.
\begin{lemma}
    Let $f: U_2 \to O^* \setminus \{1\}$ be a valid outcome probability assignment satisfying the Kochen-Specker completeness condition $\sum_{|u\rangle \in B_k} f(|u\rangle) = 1$ for all $k \in \{1, 2, 3, 4\}$. It then holds that the zero-set $Z = \{|u_i\rangle \in U_2\mid f(|u_i\rangle) = 0\}$ contains exactly three elements, and these three vectors are linearly independent.
\end{lemma}

\begin{proof}
For any complete orthonormal basis $B_k = \{|v_1\rangle, |v_2\rangle, |v_3\rangle\}$, the completeness condition requires $\sum_{i=1}^3 f(|v_i\rangle) = 1$. Consequently, the triple $(f(|v_1\rangle), f(|v_2\rangle), f(|v_3\rangle))$ belongs to $(O^*)^3 \cap \Delta^2$. By Def. ~\ref{def:Ostar}, this triple must lie on the boundary $\partial \Delta^2$, which implies that at least one coordinate must equal zero:
\begin{equation}
    \exists |v_i\rangle \in B_k \quad \text{such that} \quad f(|v_i\rangle) = 0.
\end{equation}
Furthermore, since the codomain of $f$ explicitly excludes one, a basis cannot contain more than one zero, so that every complete basis $B_k$ must contain exactly one vector assigned to $0$. 

Let $Z = \{|u_i\rangle \in \mathcal{U} \mid f(|u_i\rangle) = 0\}$. Consider the first basis $B_1 = \{|u_1\rangle, |u_4\rangle, |u_7\rangle\}$. We have that exactly one element of $B_1$ belongs to $Z$, wlog say $|u_1 \rangle \in Z$. 
This gives that:
\begin{equation}
    f(|u_1\rangle) = 0 \implies f(|u_4\rangle) > 0 \quad \text{and} \quad f(|u_7\rangle) > 0.
\end{equation}
For basis $B_2$, since $|u_1\rangle \in Z$, the unique zero for this basis is already satisfied. For basis $B_3$, since $f(|u_4\rangle) > 0$, we have that $|u_A\rangle \in Z$ where $|u_A\rangle \in \{|u_5\rangle, |u_6\rangle\}$. For basis $B_4$, since $f(|u_7\rangle) > 0$, we have that $|u_B\rangle \in Z$ where $|u_B\rangle \in \{|u_8\rangle, |u_9\rangle\}$. Therefore, $Z = \{|u_1\rangle, |u_A\rangle, |u_B\rangle\}$. By inspection we see that the vectors are linearly independent and span $\mathbb{C}^3$. 

By symmetry, we see that the zero-set $Z$ must belong to one of three configurations:
\begin{equation}
    Z \in \begin{cases}
        \{|u_1\rangle\} \times \{|u_5\rangle, |u_6\rangle\} \times \{|u_8\rangle, |u_9\rangle\}, \\
        \{|u_4\rangle\} \times \{|u_2\rangle, |u_3\rangle\} \times \{|u_8\rangle, |u_9\rangle\}, \\
        \{|u_7\rangle\} \times \{|u_2\rangle, |u_3\rangle\} \times \{|u_5\rangle, |u_6\rangle\}.
    \end{cases}
\end{equation}
\end{proof}

\textit{Step three:} The step three is strictly analogous to the result from \cite{Ravi24} so we do not repeat it here but only outline the key steps. This part of the proof makes use of the gadget shown in Fig. \ref{fig:step3-gadget}. 

\begin{figure}[t]
    \centering
    \includegraphics[width=0.6\textwidth]{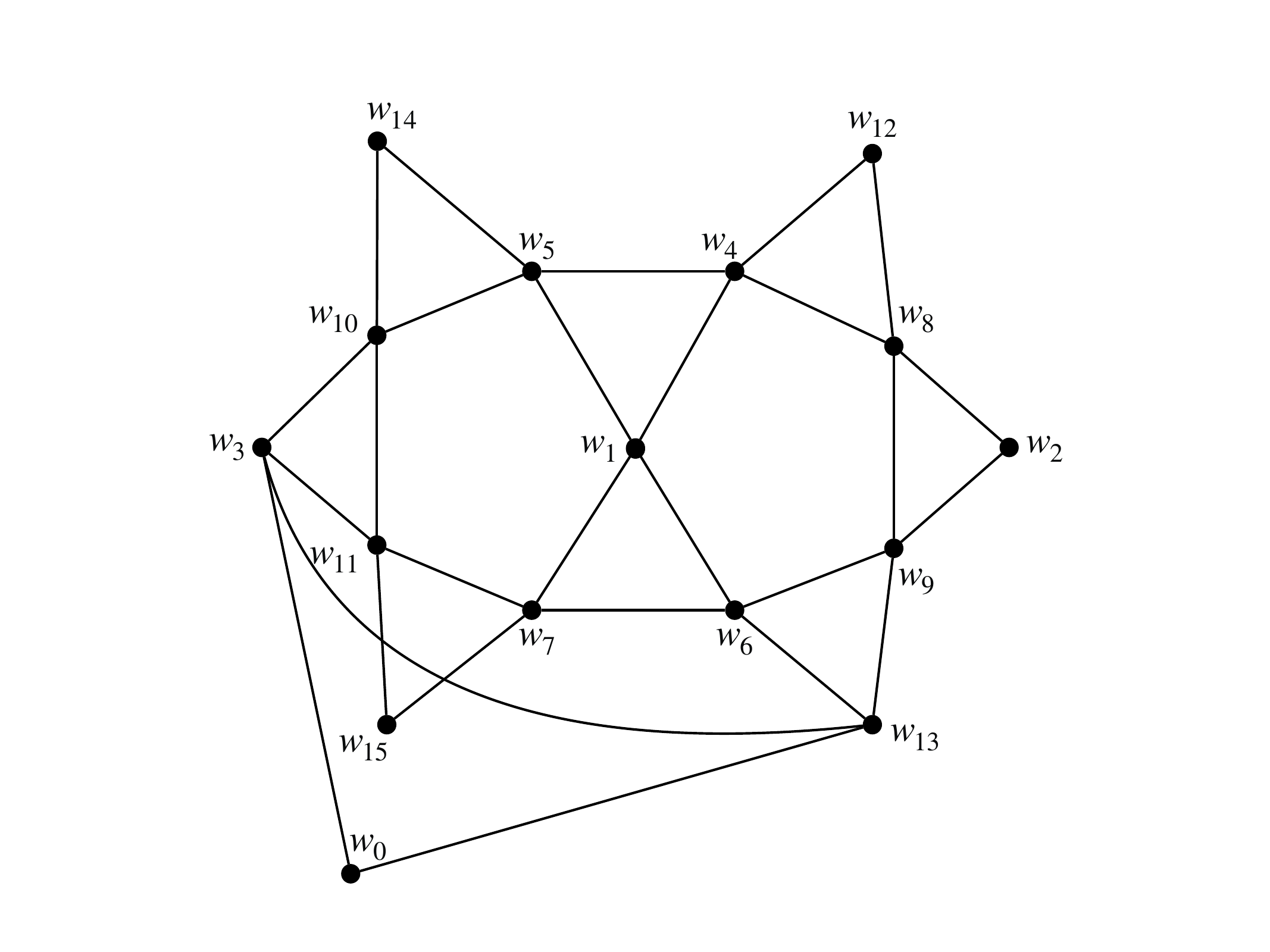}
    \caption{The gadget graph used to establish step three of the proof. Namely that when $f(w_1) = f(w_2) = f(w_3) = 0$, then under the KS constraints of exclusivity and normalisation, it holds that $f(w_{12}) = f(w_{13}) = f(w_{14}) = f(w_{15}) = 0$ giving $f(w_0) = 1$.}
    \label{fig:step3-gadget}
\end{figure}

Specifically consider the set of vectors given as 
\begin{align*} 
|w_1 \rangle &= \begin{pmatrix} 1 \\ 0 \\ 0 \end{pmatrix} & |w_2 \rangle &= \frac{1}{2} \begin{pmatrix} 1 \\ \sqrt{3} \\ 0 \end{pmatrix} & |w_3  \rangle &= \frac{1}{2} \begin{pmatrix} 1 \\ 0 \\ \sqrt{3} \end{pmatrix} & |\tilde{w}_0 \rangle &= \begin{pmatrix}
    - 3 \sqrt{3} \sin^2{(\phi_2)} \\ 4  \\ 3 \sin^2{(\phi_2)}
\end{pmatrix} & \\
|w_4 \rangle &= \begin{pmatrix} 0 \\ \cos{(\phi_1)} \\ \sin{(\phi_1)} \end{pmatrix} & |w_5 \rangle &= \begin{pmatrix} 0 \\ - \sin{(\phi_1)} \\ \cos{(\phi_1)} \end{pmatrix} & |w_6 \rangle &= \begin{pmatrix} 0 \\ \cos{(\phi_2)}
 \\ \sin{(\phi_2)} \end{pmatrix} & |w_7 \rangle &= \begin{pmatrix} 0 \\ -\sin{(\phi_2)} \\ \cos{(\phi_2)} \end{pmatrix} & \\ 
|\tilde{w}_8 \rangle &=  \begin{pmatrix} \sqrt{3} \sin{(\phi_1)} \\ - \sin{(\phi_1)} \\ \cos{(\phi_1)}
     \end{pmatrix} & |\tilde{w}_9 \rangle &= \begin{pmatrix}
         \sqrt{3} \sin{(\phi_2)} \\ -\sin{(\phi_2)} \\ \cos{(\phi_2)} 
     \end{pmatrix} & |\tilde{w}_{10} \rangle &=  \begin{pmatrix}
         \sqrt{3} \sin{(\phi_1)} \\ - \cos{(\phi_1)} \\ -\sin{(\phi_1)} 
     \end{pmatrix} & |\tilde{w}_{11} \rangle &=  \begin{pmatrix}
         \sqrt{3} \sin{(\phi_2)} \\ - \cos{(\phi_2)} \\ - \sin{(\phi_2)} \end{pmatrix} & \\
 |\tilde{w}_{12} \rangle &= \begin{pmatrix}
     1 \\ \sqrt{3} \sin^2{(\phi_1)} \\ - \sqrt{3} \sin{(\phi_1)} \cos{(\phi_1)} 
 \end{pmatrix} & |\tilde{w}_{13} \rangle &= \begin{pmatrix}
     \sqrt{3} \\ 3 \sin^2{(\phi_2)} \\ -1 
 \end{pmatrix} &  |\tilde{w}_{14} \rangle &= \begin{pmatrix}
     1 \\ \sqrt{3} \sin{(\phi_1)} \cos{(\phi_1)} \\ \sqrt{3} \sin^2{(\phi_1)}  \end{pmatrix}  & |\tilde{w}_{15} \rangle &= \begin{pmatrix}
         \sqrt{3} \\ 1 \\ 3 \sin^2{(\phi_2)} 
     \end{pmatrix} &   
\end{align*}
with $\phi_2 = \frac{1}{2}\arcsin{(\frac{2}{3})}$ and $\phi_1 = \arctan{(-\frac{3+\sqrt{5}}{8})}$ so that $\tan(\phi_1)\tan(\phi_2) = -\frac{1}{4}$. Here $|\tilde{w}_j \rangle$ denotes that the vector is not normalized.
We recognize the first row as the vectors $|w_1 \rangle = |u_1 \rangle, |w_2 \rangle = |u_8\rangle, |w_3 \rangle = |u_5\rangle$ from step two, this is one of the twelve triples forming the zero-set $Z$ of step two. We also verify that the vectors faithfully realize the orthogonality graph shown in Fig. \ref{fig:step3-gadget}. 

Now consider the constraint from step two, namely that $f(|w_1 \rangle) = f(|w_2 \rangle) = f(|w_3 \rangle) = 0$. To ensure the normalization conditions we obtain that $f(|w_4 \rangle) + f(|w_5 \rangle) = 1$, $f(|w_6 \rangle) + f(|w_7 \rangle) = 1$, $f(|w_8 \rangle) + f(|w_9 \rangle) = 1$ and $f(|w_{10} \rangle) + f(|w_{11} \rangle) = 1$. On the other hand, we also have $f(|w_4 \rangle) + f(|w_8 \rangle) + f(|w_{12} \rangle) = 1$, $f(|w_5 \rangle) + f(|w_{10} \rangle) + f(|w_{14} \rangle) = 1$, $f(|w_7 \rangle) + f(|w_{11} \rangle) + f(|w_{15} \rangle) = 1$ and $f(|w_6 \rangle) + f(|w_9 \rangle) + f(|w_{13} \rangle) = 1$. These two conditions together imply that $f(|w_{12} \rangle) = f(|w_{13} \rangle) = f(|w_{14} \rangle) = f(|w_{15} \rangle) = 0$. Finally, for the basis $\{|w_0 \rangle, |w_{3} \rangle, |w_{13} \rangle \}$ we see that with $f(|w_3 \rangle) = 0$ and $f(|w_{13} \rangle) = 0$ it must be the case that $f(|w_0 \rangle) = 1$. Augmenting the graph with a gadget from step one which ensures $f(|w_0 \rangle) \neq 1$ we obtain a contradiction. Therefore, as promised we have constructed a gadget that admits assignments from $O^* \setminus \{1\}$ and such that in any such assignment it is necessarily the case that the three linearly independent vectors $|w_1 \rangle, |w_2 \rangle, |w_3 \rangle$ from step two cannot all be assigned value $0$. In \cite{Ravi24}, the final step is shown, i.e., constructing the gadget for which it is necessarily the case that any of the triples in the zero set $Z$ cannot all be assigned value $0$. Taking the union of the (finite) gadgets from steps one, two and three gives the proof.

\end{proof}

    
    


\begin{thebibliography}{99}
\bibitem{KS67} S. Kochen, and E.P. Specker. \textit{The problem of hidden variables in quantum mechanics}. Journal of Mathematics and Mechanics, 17(1), 59-87 (1967).
\bibitem{JXSPC18} S.-H. Jiang, Z.-P. Xu, H.-Y. Su, A. K. Pati, and J.-L. Chen. \textit{Generalized Hardy’s Paradox}. Phys. Rev. Lett. 120, 050403 (2018). 
\bibitem{Mermin94} N. D. Mermin. \textit{Quantum mysteries refined}. Am. J. Phys. 62, 880–887 (1994). 
\bibitem{CBCB13} A. Cabello, P. Badziag, M. T. Cunha and M. Bourennane. \textit{Simple Hardy-Like Proof of Quantum Contextuality}. Phys. Rev. Lett. 111, 180404 (2013). 
\bibitem{MANCB14} B. Marques, J. Ahrens, M. Nawareg, A. Cabello, and M. Bourennane. \textit{Experimental observation of Hardy-like quantum contextuality}. Phys. Rev. Lett. 113, 250403 (2014).








\bibitem{HWVE14} M. Howard, J. J. Wallman, V. Veitch, J. Emerson \textit{Contextuality supplies the magic for quantum computation}. Nature 510, 351 (2014). 
\bibitem{BRV+19} K. Bharti, M. Ray, A. Varvitsiotis, N. A. Warsi1, A. Cabello, and L.-C. Kwek. \textit{Robust Self-Testing of Quantum Systems via Noncontextuality Inequalities}. Phys. Rev. Lett. 122, 250403 (2019). 
\bibitem{Gleason57} A. M. Gleason. \textit{Measures on the Closed Subspaces of a Hilbert Space}. Journal of Mathematics and Mechanics, 6, 885–893 (1957).
\bibitem{CKM85} R. Cooke, M. Keane, and W. Moran. \textit{An elementary proof of Gleason's theorem}. Math. Proc. Camb. Philos. Soc. 98, 117 (1985).
\bibitem{Pitowsky98} I. Pitowsky. \textit{Infinite and Finite Gleason’s Theorems and the Logic of Indeterminacy}. Journal of Mathematical Physics 39, 218 - 228 (1998).
\bibitem{Pitowsky03} I. Pitowsky. \textit{Betting on the Outcomes of Measurements: A Bayesian Theory of Quantum Probability}. Studies in the History and Philosophy of Modern
Physics 34, 395-414 (2003). 
\bibitem{RB99} F. Richman and D. Bridges. \textit{A Constructive Proof of Gleason's Theorem}. Journal of Functional Analysis 162, 287-312 (1999).
\bibitem{Fine82} A. Fine. \textit{Joint distributions, quantum correlations, and commuting observables}. J. Math. Phys. 23, 1306–1310 (1982). 
\bibitem{Vorobev62} N. N. Vorobev. \textit{Theory of Probability and Its Applications}. 7, 147-163 (1962).
\bibitem{CSW14} A. Cabello, S. Severini, and A. Winter. \textit{Graph Theoretic Approach to Quantum Correlations}.
Phys. Rev. Lett. 112, 040401 (2014).
\bibitem{LR25} Y. Liu and R. Ramanathan. \textit{Optimal and feasible contextuality-based randomness generation}. Phys. Rev. Lett. 135 (17), 170805 (2025).
\bibitem{LRH+23} Y. Liu et al. \textit{Optimal measurement structures for contextuality applications}. npj Quantum Information 9(1), 63 (2023).
\bibitem{Ravi24} R. Ramanathan. \textit{Generalised Kochen-Specker theorem for finite non-deterministic outcome assignments}. npj Quantum Information 10(1), 99 (2024). 
\bibitem{WC25} T. Williams and A. Constantin. \textit{Maximal non-Kochen-Specker sets and a lower bound on the size of Kochen-Specker sets}. Phys. Rev. A 111, 012223 (2025). 
\bibitem{KS15} R. Kunjwal and R. W. Spekkens. \textit{From the Kochen-Specker theorem to noncontextuality inequalities without assuming determinism}. Phys. Rev. Lett. 115, 110403 (2015). 
\bibitem{BvN36} G. Birkhoff, and J. von Neumann. \textit{The logic of quantum mechanics}. Annals of Mathematics, 37(4), 823-843 (1936).
\bibitem{Reichenbach44} H. Reichenbach. \textit{Philosophic Foundations of Quantum Mechanics}. Mineola, N.Y.: Dover Publications (1944).
\bibitem{Lukasiewicz70} J. {\L}ukasiewicz. Selected Works. In: Borkowski. L (ed). North-Holland, Amsterdam (1970). 
\bibitem{Pykacz94} J. Pykacz. \textit{Fuzzy quantum logic and infinite-valued {\L}ukasiewicz logic}. International Journal of Theoretical Physics, 33, 1403-1416 (1994). 
\bibitem{Maczynski73} M. J. Maczy\'{n}ski. \textit{The orthogonality postulate in axiomatic quantum mechanics}. Int. J. Theor. Phys. 8, 353–360 (1973).
\bibitem{KGP+11} M. Kleinmann, O. G\"{u}hne, J. R. Portillo, J. A. Larsson, A. Cabello. \textit{Memory cost of quantum contextuality} New J. Phys. 13, 113011 (2011).
\bibitem{Hardy04} L. Hardy. \textit{Quantum ontological excess baggage}. Studies in History and Philosophy of Science Part B: Studies in History and Philosophy of Modern Physics, 35(2), 267-276 (2004).
\bibitem{Montina08} A. Montina. \textit{Exponential complexity and ontological theories of quantum mechanics}. Phys. Rev. A 77, 022104 (2008).
\bibitem{Pavicic92} M. Pavicic. \textit{Bibliography on quantum logics and related structures}. International Journal of Theoretical Physics 31, 373–461 (1992). 
\bibitem{KDL15} J. V. Kujala, E. N. Dzhafarov, J.-Å. Larsson. \textit{Necessary and Sufficient Conditions for Extended Noncontextuality in a Broad Class of Quantum Mechanical Systems}. Phys. Rev. Lett. 115, 150401 (2015). 
\bibitem{Fyrillas23} A. Fyrillas et al. \text{Certified randomness in tight space}. PRX Quantum 5, 020348 (2024).
\bibitem{AB11} S. Abramsky, A. Brandenburger. \textit{The Sheaf-Theoretic Structure Of Non-Locality and Contextuality}. New Journal of Physics 13, 113036 (2011). 
\bibitem{ABKLM15} S. Abramsky, R. S. Barbosa, K. Kishida, R. Lal, S. Mansfield. \textit{Contextuality, Cohomology and Paradox}.	24th EACSL Annual Conference on Computer Science Logic (CSL 2015), Leibniz International Proceedings in Informatics (LIPIcs), 41: 211-228 (2015). 
\bibitem{HZS+23} K. Horodecki et al. \textit{The rank of contextuality}. New Journal of Physics 25(7), 073003 (2023).
\bibitem{Liu2024} Y. Liu et al. \textit{Equivalence between face nonsignaling correlations, full nonlocality, all-versus-nothing proofs, and pseudotelepathy}. Phys. Rev. Research 6, L042035 (2024). 
\bibitem{GGM74} E. R. Gerelle, R. J. Greechie, and F. R. Miller. \textit{Weights on spaces}. In Physical Reality and Mathematical Description, edited by C. P. Enz and J. Mehra (D. Reidel Publishing Company, Springer Netherlands, Dordrecht, The
Netherlands, pp. 167–192 (1974).
\bibitem{Wright78} R. Wright. \textit{The state of the pentagon. A nonclassical example}. In Mathematical Foundations of Quantum Theory, edited by A. R. Marlow (Academic Press, New York, pp. 255–274 (1978).
\bibitem{AQBCC13} M. Ara\'{u}jo, M. T. Quintino, C. Budroni, M. T. Cunha, and A. Cabello. \textit{All noncontextuality inequalities for the n-cycle scenario}. Phys. Rev. A 88, 022118 (2013). 
\bibitem{HP04} E. Hrushovski and I. Pitowsky. \textit{Generalizations of Kochen and Specker's theorem and the effectiveness of Gleason's theorem}. Studies in History and Philosophy of Science Part B: Studies in History and Philosophy of Modern Physics Vol. 35, Issue 2, 177-194 (2004).
\bibitem{Svozil98} K. Svozil. \textit{Quantum logic}. 998/9/1
Springer Science \& Business Media (1998).
\bibitem{JM05} N. S. Jones and Ll. Masanes. \textit{Interconversion of nonlocal correlations}. Phys. Rev. A 72, 052312 (2005).
\bibitem{BP05} J. Barrett and S. Pironio. \textit{Popescu-Rohrlich correlations as a unit of nonlocality}. Phys. Rev. Lett. 95, 140401 (2005).
\bibitem{BLMPPR05} J. Barrett, N. Linden, S. Massar, S. Pironio, S. Popescu and D. Roberts. \textit{Non-local correlations as an information theoretic resource}. Phys. Rev A 71, 022101 (2005).
\bibitem{PR94} S. Popescu and D. Rohrlich. \textit{Quantum nonlocality as an axiom}. Found Phys 24, 379-385 (1994).
\bibitem{Clifton93} R. K. Clifton. \textit{Getting Contextual and Nonlocal Elements-of-Reality the Easy Way}. American Journal of Physics, 61: 443 (1993).
\bibitem{Stairs83} A. Stairs. \textit{Quantum logic, realism, and value definiteness}. Philosophy of Science 50(4), 578-602 (1983).
\bibitem{Belinfante73} F. J. Belinfante. A survey of hidden-variables theories. Elsevier, 55 (1973).
\bibitem{LSS87}  L. Lovasz, M. Saks, A. Schrijver. \textit{Orthogonal representation
and connectivity of graphs}. Linear Algebra and its applications, 4, 114-115, 439 (1987).
\bibitem{Piron76} C. Piron. \textit{Foundations of Quantum Physics}. Addison-Wesley, (1976).
\bibitem{Mackey63} G. W. Mackey. \textit{Mathematical Foundations of Quantum Mechanics}. Addison-Wesley, (1963).
\bibitem{RW04} R. Renner and S. Wolf. \textit{Quantum pseudo-telepathy and the Kochen-Specker theorem}. Int. Symp. Information Theory (ISIT), Proc. (Piscataway, NJ:IEEE) p322 (2004).
\bibitem{RRHS+20} R. Ramanathan, M. Rosicka, K. Horodecki, S. Pironio, M. Horodecki, and P. Horodecki. \textit{Gadget structures in proofs of the Kochen-Specker theorem}. Quantum 4, 308 (2020). 
\bibitem{Busch03} P. Busch. \textit{Quantum states and generalized observables: a simple proof of Gleason's theorem}. Phys. Rev. Lett. 91, 120403 (2003).
\bibitem{Soler95} M. P. Sol\'{e}r. \textit{Characterization of Hilbert spaces by orthomodular spaces}. Communications in Algebra 23, 219-243 (1995). 
\bibitem{Enderton01}  H. B. Enderton. \textit{A Mathematical Introduction to Logic} (2nd ed.), Academic Press (2001).
\bibitem{RSKK12}  R. Ramanathan, A. Soeda, P. Kurzynski, and D. Kaszlikowski. \textit{Generalized monogamy of contextual inequalities from the no-disturbance principle}. Phys. Rev. Lett. 109, 050404 (2012). 
\bibitem{AFLS15} A. Ac\'{i}n, T. Fritz, A. Leverrier, and A. B. Sainz. \textit{A Combinatorial Approach to Nonlocality and
Contextuality}. Comm. Math. Phys. 334(2), 533-
628 (2015).
\bibitem{ABM17} S. Abramsky, R. S. Barbosa, and S. Mansfield. \textit{Contextual Fraction as a Measure of Contextuality}. Phys. Rev. Lett. 119, 050504 (2017). 
\bibitem{GM17} T. D. Galley and Ll. Masanes. \textit{Classification of all alternatives to the Born rule in terms of informational properties}. Quantum 1, 15 (2017).
\bibitem{Aaronson04} S. Aaronson. \textit{Is Quantum Mechanics An Island In Theoryspace?}  arXiv:0401062 (2004).
\bibitem{our-4} R. Ramanathan, M. Horodecki, H. Anwer, S. Pironio, K. Horodecki, M. Gr\"{u}nfeld, S. Muhammad, M. Bourennane and P. Horodecki. \textit{Practical No-Signalling proof Randomness Amplification using Hardy paradoxes and its experimental implementation}. arXiv:1810.11648 (2018). 
\bibitem{BarrettPRL} J. Barrett, L. Hardy and A. Kent. \textit{No signaling and quantum key distribution}. Phys. Rev. Lett. \textbf{95}, 010503 (2005).
\bibitem{ZRLH22} S. Zhao, R. Ramanathan, Y. Liu, P. Horodecki. \textit{Tilted Hardy paradoxes for device-independent randomness extraction}. Quantum 7, 1114 (2023).
\bibitem{SK25} V. K. Sujan, R. Kunjwal. \textit{No-go theorem for quantum realization of extremal correlations}. arXiv:2509.14879 (2025). 
\bibitem{RTHH16} R. Ramanathan, J. Tuziemski, M. Horodecki, and P. Horodecki. \textit{No Quantum Realization of Extremal No-Signaling Boxes}. Phys. Rev. Lett. 117, 050401 (2016). 
\bibitem{Lovasz06} L. Lov\'{a}sz. \textit{Graph minor theory}. Bulletin of the American Mathematical Society, 43 (1): 75–86 (2006).
\bibitem{Diestel05} R. Diestel. \textit{Graph Theory} (3rd ed.), Springer,(2005). 


















\end{thebibliography}
\end{document}